\theoremstyle{plain}
\newtheorem{theorem}{Theorem}
\newtheorem{corollary}{Corollary}
\theoremstyle{remark}
\newtheorem{assumption}{Assumption}
\newtheorem{definition}{Definition}
\newtheorem{lemma}{Lemma}
\newcommand\cP{\mathcal{P}}
\newcommand\bX{\mathbf{X}}
\newcommand\iid{\stackrel{\rm i.i.d.}{\sim}}
\newcommand\cS{\mathcal{S}}
\newcommand\E{\mathbb{E}}
\newcommand\bx{\mathbf{x}}
\newcommand\R{\mathbb{R}}
\newcommand{\cip}{\stackrel{p}{\rightarrow}}
\newcommand{\cid}{\stackrel{d}{\rightarrow}}
\newcommand\var{{\rm Var}}
\newcommand\asyvar{{\rm AVar}}
\newcommand\cov{{\rm Cov}}
\newcommand{\argmin}{\operatornamewithlimits{argmin}}
\newcommand\spacingset[1]{\renewcommand{\baselinestretch}%
  {#1}\small\normalsize}
\begin{document}

\newrefsection
\title{Leveraging Population Outcomes to Improve \\ the Generalization of Experimental Results\footnote{The authors would like to thank Nicole Pashley, Dustin Tingley, Tara Slough, the Miratrix CARES Lab, and the UCLA Causal Inference reading group. Melody Huang is supported by the National Science Foundation Graduate Research Fellowship under Grant No. 2146752. Any opinion, findings, and conclusions or recommendations expressed in this material are those of the authors(s) and do not necessarily reflect the views of the National Science Foundation.
}}
\author{Melody Huang\thanks{Ph.D. Candidate, Department of Statistics,
    University of California, Berkeley, CA 94720. Email:
    \href{mailto:melodyyhuang@berkeley.edu}{melodyyhuang@berkeley.edu}, URL:
    \url{https://melodyyhuang.github.io}}
  \hspace{0.2in}
  Naoki Egami\thanks{Assistant Professor, Department of
    Political Science, Columbia University, New York NY 10027. Email:
    \href{mailto:naoki.egami@columbia.edu}{naoki.egami@columbia.edu}, URL:
    \url{https://naokiegami.com}}
  \hspace{0.2in}
  Erin Hartman\thanks{Assistant Professor, Department of Statistics
    and of Political Science, University of California, Berkeley, CA 94720. Email:
    \href{mailto:ekhartman@berkeley.edu}{ekhartman@berkeley.edu}, URL:
    \url{www.erinhartman.com}}
  \hspace{0.2in}
  Luke Miratrix\thanks{Associate Professor, Graduate School of
    Education, Harvard University, Cambridge MA 02138. Email:
    \href{mailto:lmiratrix@gse.harvard.edu}{ lmiratrix@gse.harvard.edu}, URL:
    \url{https://scholar.harvard.edu/lmiratrix/home}}}
\date{\today}
\maketitle

\pdfbookmark[1]{Title Page}{Title Page}
\vspace{-0.35in}
\thispagestyle{empty} 
\setcounter{page}{0} 
\begin{abstract}
 Generalizing causal estimates in randomized experiments to a broader target population is essential for guiding decisions by policymakers and practitioners in the social and biomedical sciences. While recent papers developed various weighting estimators for the population average treatment effect (PATE), many of these methods result in large variance because the experimental sample often differs substantially from the target population, and estimated sampling weights are extreme. To improve efficiency in practice, we propose post-residualized weighting in which we use the outcome measured in the observational population data to build a flexible predictive model (e.g., machine learning methods) and residualize the outcome in the experimental data before using conventional weighting methods. We show that the proposed PATE estimator is consistent  under  the  same  assumptions  required  for  existing  weighting methods, importantly without assuming the correct specification of the predictive model. We demonstrate the efficiency gains from this approach through simulations and our application based on a set of job training experiments. 
\end{abstract}

\noindent
\vspace{0.2in}

\vfill

\newpage

\spacingset{1.6}

\section{Introduction} 

The ``credibility revolution'' has elevated the role of randomized, controlled trials (RCTs), which are praised for their strong internal validity \citep{banerjee2009experimental, falk2009lab, baldassarri2017field}.  Control over the design of the RCT allows researchers to draw causal inferences about treatment effects, within the experimental sample, while imposing minimal assumptions.  This focus on credibility is not without controversy though, with some arguing that the emphasis on causality has led researchers to narrow the scope of their inquiry \citep{huber_2013, deaton2018understanding}.  This debate has revealed a pressing need for methods that allow researchers to generalize the causal impact of treatments, and resulting policy implications, beyond the experimental setting.

In light of this need, a robust literature on methods for generalizing
experimental results to broader populations of interest has emerged.
Often, in practice, the cost of a controlled environment is that the
experiment cannot be conducted on a representative sample of the
target population of interest.  Recent work has outlined the
necessary assumptions for generalizing an experiment to identify the
population average treatment effect (PATE), i.e., the effect of the
experimental treatment in a clearly defined target population that
differs from the experimental sample \citep{cole2010generalizing,
  hartman2015sample, bareinboim2016fusion}. In practice, the most
common approach first models the experimental sample inclusion
probability, with the PATE then estimated using inverse probability
weighted estimators \citep{stuart2011use, tipton2013improving,
  buchanan2018generalizing}. Alternative estimators focus on modeling
treatment effect heterogeneity \citep{kern2016assessing,
  nguyen2017sensitivity} or doubly robust estimation
\citep{dahabreh2019generalizing}. 

Despite these theoretical advances in methods for estimating the PATE,
in practice, weighted estimators are often
imprecise, especially when the experimental sample differs substantially from the target population.  This makes it difficult for policymakers and practitioners to draw conclusions about the impact of treatment in the target population to guide their policy recommendations.  Indeed, researchers
empirically find that weighted  estimators
often increase the mean squared error for the PATE compared to an
estimator ignoring sampling weights because inverse probability
weighting estimators have much larger standard errors, even though they have
smaller bias \citep{miratrix_2018}. More
generally, considering the bias-variance tradeoff,
the cost of large precision loss associated with the conventional
weighting methods makes it unclear if it is
``worth weighting'' and questions the applicability of these weighting
methods commonly used by empirical researchers. 

In practice, these weighting methods often leave a valuable resource on the table --- outcome
data measured in the population.  While inverse probability weighting
methods leverage population data about pre-treatment covariates when
modeling the sampling weights, use of outcome data has primarily
been limited to use in placebo tests \citep{cole2010generalizing,
hartman2015sample}. Recently, the data fusion literature proposed using experimental data to help aid the estimation of causal effects in observational studies (e.g., see \cite{athey2020combining, athey2019surrogate, kallus2020role}). Our proposed approach aims to incorporate observational population data to reduce noise in generalizing experimental results. Population data often have larger sample sizes and therefore provide an opportunity to model covariate-outcome relationships with more flexible modeling approaches.  It is this opportunity --- to incorporate large population data sets that contain outcome data to improve precision --- that serves as the foundation of our method. 

We propose post-residualized weighting to leverage outcome data
measured in the population to improve precision in estimation of the
PATE.  We begin by constructing a predictive model of the outcome
using the population data.  We then use this to residualize the
experimental outcome data, and these residuals replace the
experimental outcome in the standard inverse probability weighting
estimators used for generalization. Identification of the PATE
proceeds under the same assumptions required for existing inverse
probability weighting methods, namely that the sampling weights are
correctly specified.  We show that this estimator is consistent,
regardless of the residualizing model constructed in the population
data. Therefore, we can safely use machine learning methods to build a predictive model. We then establish under what conditions the proposed post-residualized weighting estimator is more efficient than existing methods. 

We also extend our estimator to the weighted least squares framework, which has three advantages: (1) it incorporates the
well-known benefits of stabilized weighting estimators
(i.e. H\`{a}jek estimators), (2) it allows for additional precision
gains from prognostic variables measured only within the experiment,
and (3) it addresses concerns about
scaling differences between the outcomes measured in the experiment
and the population data.  Importantly, we provide a diagnostic that
allows researchers to assess when the post-residualized weighting method is
likely to result in efficiency gains.

The paper proceeds by introducing our empirical application evaluating generalizability of site-specific results for trials conducted under the Job Training Partnership Act, described below.  We then introduce notation and existing methods for
estimating the population average treatment effect from experimental
data in Section \ref{sec:background}. In Section
\ref{sec:recycle_ipw} we introduce post-residualized weighting, prove its statistical properties, and introduce a diagnostic to assess whether researchers should expect efficiency gains in their applications.  We extend these results to
weighted least squares estimation in Section \ref{sec:recycle_ipw_cov}, and discuss a special case in which we include the
predicted outcome as a covariate in Section \ref{subsec:proxy}.  Finally, we
provide simulation evidence supporting the performance of
post-residualized weighting estimators and diagnostic tools in Section
\ref{sec:sims} and apply them to an empirical application evaluating
the Job Training Partnership Act in Section \ref{sec:empirical}. 

\subsection{Background and Data}
To motivate our method, we re-evaluate a foundational experiment that assessed the impact of a job training program. The Job Training Partnership Act (JTPA) was introduced by the U.S. Congress in 1982 to help provide employment and training programs to economically disadvantaged adults and youths.  To assess its effectiveness, the national JTPA study evaluated the impact of the program across a diverse set of 16 experimental sites between 1987 and 1989. The experimental units were individuals who were interviewed and deemed eligible to receive JTPA services. Individuals assigned to treatment were given access to the JTPA services, while those assigned to control were told that the services were not available. The treatment to control ratio was set at 2:1. A follow-up survey 18 months later was then conducted to measure outcomes, such as earnings and employment \citep{bloom1993national}.

We use the same 16 experimental sites from the national JTPA study as
the basis for our analysis. While the original study focused on
four target groups: adult women and men (categorized formally as ages
22 and older), and female and male out-of-school youths (ages 16-21),
we focus our analysis on adult women, the largest target group within the JTPA study.\footnote{The estimated impact of JTPA for the other target groups were not found to be statistically significant in the original study.}  We consider two different outcomes: employment status (binary outcome) and total earnings (zero-inflated, continuous
outcome).  Across the 16 sites, the average effect on earnings was \$1240 and employment was 1.63\%, but point estimates across sites ranged from -\$5210 in Butte, MT to \$3030 in Providence, RI for earnings and -7\% in Butte, MT and Marion, OH to 7\% in Heartland, FL and Providence, RI.  Had a policymaker only run their experiment in Providence, RI, they may have concluded that the treatment was effective, but not so in Butte, MT.  Weighted estimators can adjust for demographic differences across sites, but many of the sites, such as Butte, MT, contain few units, emphasizing the need for precise estimators when generalizing results to other populations.

Unlike the original study, which evaluated the overall effectiveness, our focus is on generalizing the effect.  The multisite design of this experiment serves as an ideal test bed for our method.  We generalize the results of each site individually to a target population defined by the units in the other 15 sites, allowing us to benchmark our estimator against the experimentally identified causal estimate of the excluded sites and evaluate precision gains from post-residualized weighting.  Ultimately, we find between a 5\% and 21\% reduction in variance where our methods are applicable.  A summary of the JTPA experimental set up is provided in Supplementary Materials Table
\ref{tbl:jtpa_summary}.

\section{Existing Estimators for Generalization}
\label{sec:background}
\subsection{Setup}
\label{subsec:notation}
We begin by defining the target population as an infinite
super-population $\cP$ with probability distribution $F$ and probability
density $dF$, for which we wish to infer the effectiveness of
treatment. Following \citet{buchanan2018generalizing}, suppose we
observe $n$ units as the ``experimental sample,'' but, as with most experiments in practice, the selection into the experiment from the target population is biased . Let $\mathcal{S}$ represent the random set of indices for the units in the experimental sample.

Let $T_i$ be the binary treatment variable, 
where $T_i = 1$ for units assigned to treatment, and $T_i = 0$ for
control. Using the potential outcomes framework \citep{neyman1923,
  rubin1974causal}, we define $Y_i(t)$ to be the potential outcome of
unit $i$ that would realize if unit $i$ receives the treatment $T_i =
t$, where $t \in \{0, 1\}$. For each unit in the experiment, only one
of the potential outcome variables can be observed, and the realized
outcome variable for unit $i$ is denoted by $Y_i = T_i Y_i(1) + (1 -
T_i) Y_i(0).$ We also observe pre-treatment covariates $\bX_i$ for
units in the experiment. We use $\tilde{F}$ to represent the sampling distribution for the
experimental sample, i.e., $\{Y_i(1), Y_i(0), T_i, \bX_i\}_{i=1}^n
\iid \tilde{F}$ with density $d\tilde{F}$. Because we consider
settings where the selection into the experiment from the target
population $\cP$ is biased, $F \neq \tilde{F}.$ 

We assume that the treatment assignment is randomized within the experiment.
\begin{assumption}[Randomization within Experiment]
  \label{assum-exp}
  \begin{equation}
d\tilde{F}(Y_i(1), Y_i(0), T_i, \bX_i) = d\tilde{F}(Y_i(1),
    Y_i(0), \bX_i) \cdot d\tilde{F}(T_i)
  \end{equation}
\end{assumption} 
\noindent Under this assumption, it is well known that the sample average treatment effect (SATE) can be estimated without bias using a difference-in-means estimator: 
\begin{equation}
  \widehat{\tau}_\cS = \frac{1}{\sum_{i \in \cS} T_i } \sum_{i \in \cS} T_i Y_i - \frac{1}{\sum_{i \in \cS} (1-T_i)} \sum_{i \in \cS} (1-T_i) Y_i.
\end{equation}

This within-experiment estimand, the SATE, is important for evaluating the effectiveness of treatment. However, researchers often want to know to what extent the findings are externally valid to the target population \citep{cole2010generalizing, miratrix_2018, egami2020elements}. This population level estimand, the population average treatment effect (PATE), is our primary causal quantity of interest and is formally defined as:
\begin{equation}
\tau \coloneqq \E_{F}\{Y_i(1) - Y_i(0)\},
\end{equation}
where the expectation is taken over the target population distribution
$F$. When the experimental sample is randomly drawn from the target
population $F = \tilde{F}$, $\widehat{\tau}_{\cS}$ can be used as an
unbiased estimator for $\tau$. However, in most settings, experimental
units are not randomly drawn from the target population with equal
probability.  

To estimate the PATE, we also assume we observe an $i.i.d.$ sample of 
$N$ units from the target super-population $\cP$ as the ``population
data,'' which is separate from the experimental sample. This design is
most common in the social sciences, and is called the non-nested
design in that the experimental sample is not a subset of the population data \citep{colnet2020causal}.\footnote{While we
  focus on the non-nested design in this paper, the same proposed
  approach is useful for the nested design where the experimental sample
  is a subset of the population data. The main difference arises in the
  analytical expressions of the efficiency gain from our proposed approach.} Typically, the size of the population data is much larger than
the experimental data, i.e., $N \gg n$. In the conventional setup, researchers only observe
pre-treatment covariates $\bX_i$ for each unit $i$ in the population
data. In the next subsection, we review assumptions and estimators for the PATE under this
conventional setup. In Section~\ref{sec:recycle_ipw}, we then consider
our setting in which researchers also observe an outcome measure in
addition to pre-treatment covariates in the population
data. Importantly, because the treatment is not randomized in the
population data, we cannot identify the PATE just using the population
data. 

\subsection{Assumptions}
\label{subsec:assumptions}
We make the standard assumptions of no interference and that treatments are identically administered across all units (i.e., SUTVA, defined in \citet{rubin1980SUTVA}). In order to identify the PATE using experimental data,
we require additional assumptions about sampling of the experimental units. First, we assume that, conditional on a set of pre-treatment covariates $\bX_i$, the sample selection mechanism is ignorable. More formally,  
\begin{assumption}[Ignorability of Sampling and Potential Outcomes]\mbox{}\\
\label{assum-ind}
\begin{equation}
 dF(Y_i(1), Y_i(0) \mid \bX_i = \bx) =   d\tilde{F}(Y_i(1), Y_i(0) \mid \bX_i = \bx)
\end{equation}

\end{assumption} 

Assumption \ref{assum-ind} states that, conditional on $\bX_i$, the
distribution of the potential outcomes $\{Y_i(1), Y_i(0)\}$ is the same across the experimental sample and the target
population \citep{stuart2011use, Pearl:2014hb, kern2016assessing}.\footnote{For identification of the PATE, a weaker assumption of conditional ignorability of sampling and treatment effect heterogeneity may be invoked instead. However, the variance derivations rely on the conditional ignorability of sampling and potential outcomes.} 
We also assume that given the pre-treatment covariates $\bX_i$, there is a positive probability of being included in the experimental sample \citep{westreich2010invited}. 

\clearpage
\begin{assumption}[Positivity]\mbox{}\\ For all $\bx$ with $dF(\bX_i = \bx)  > 0,$ we have
  \label{assum-pos}
  \begin{equation} 
  dF(\bX_i = \bx)  > 0 \Rightarrow d \tilde F(\bX_i = \bx) > 0.
  \end{equation} 
\end{assumption}

\subsection{Estimation of PATE}
\label{subsec:ipw}

There is a robust, and growing, literature on methods for estimating
the PATE.  The most common approach is the inverse probability
weighting estimator (IPW) \citep{cole2010generalizing}. The IPW
estimator relies on sampling weights usually defined as an inverse of
the probability of being sampled into the experiment. In our case,
given the infinite superpopulation defined by $F$, we first define a
relative density as follows.
\begin{equation} 
  \pi(\bX_i)  =  \frac{d\tilde{F}(\bX_i)}{dF(\bX_i)}. \label{eq:r-d}
\end{equation} 
Sampling weights are proportional to the inverse of this relative density. For each unit $i$, 
\begin{equation*} 
  w_i  \propto  \frac{1}{\pi(\bX_i)}.
\end{equation*}

Weights are typically estimated using a binary outcome model,
such as logistic regression \citep{stuart2011use, o2014generalizing, buchanan2018generalizing} by
exploiting the fact that weights are
proportional to the relative probability of being in the observed population data to the probability of being in the experimental
sample, conditional on being in either set:
\begin{equation*} 
  w_i  \propto \frac{\Pr(S_i = 0 \mid \bX_i)}{\Pr(S_i = 1 \mid \bX_i)},
\end{equation*} 
where $S_i$ takes on a value of $1$ if the unit belongs to the experimental sample, and $0$ if the unit belongs to the observed population data. Researchers can estimate $\Pr(S_i = 1 \mid
\bX_i)$ and $\Pr(S_i = 0 \mid \bX_i)$ using a binary
outcome model where we model $S_i = 1$ with $\bX_i$ by
stacking the experimental data and population data \citep{stuart2011use,
  buchanan2018generalizing, egami2019covariate}. Alternatively, researchers can use balancing methods,
such as entropy balancing, which estimates weights such that weighted
moments (e.g., means of each pre-treatment covariate $\bX_i$) of the experimental data is equal to moments of the observed
population data \citep{deville1992, hainmueller2012entropy, hartman2015sample}.

Once researchers have estimated the sampling weights, the PATE can be
estimated using a weighted estimator, also known as the H\`{a}jek
estimator: 
\begin{equation}
  \hat \tau_{W} \coloneqq \frac{\sum_{i \in \cS} \hat w_i T_i
    Y_i}{\sum_{i \in \cS} \hat w_i T_i}- \frac{\sum_{i \in \cS} \hat w_i
    (1-T_i) Y_i}{\sum_{i \in \cS} \hat w_i (1-T_i)}. \label{eq:std}
\end{equation}

This weighted estimator can be computed using a weighted least squares regression of the outcome on an intercept and the treatment indicator, with the estimated weights $\hat w_i$. The weighted estimator is equivalent to the estimated coefficient of the treatment indicator.

Under Assumption~\ref{assum-exp}--\ref{assum-pos} and the consistent estimation of the
sampling weights, the weighted estimator $\hat \tau_W$ is a consistent
estimator of the PATE \citep{buchanan2018generalizing}. However, in
practice, weighted estimators can suffer from large variance due to
extreme weights. This problem has been highlighted in the
observational causal inference literature with respect to inverse
propensity score weighted estimators, in which large imbalances between
treatment and control groups can result in extreme weights
\citep{kang2007demystifying, stuart2010matching}. This issue is often
exacerbated in the generalization setting, where imbalances between a
convenience experimental sample and target population can be
relatively large.  As a result, losses in precision from weighting can be challenging to overcome when
generalizing from the SATE to the PATE \citep{miratrix_2018}. 

\section{Post-Residualized Weighting}\label{sec:recycle_ipw}
Existing methods, such as the weighted estimator $\hat \tau_W$
described above, require pre-treatment covariate data, measured in both
the experimental sample and target population, for estimating the
sampling weights. However, researchers often have access to an
outcome variable in the observational population data as well. Our proposed method, \textit{post-residualized
weighting}, aims to improve precision in the estimation of the PATE by
leveraging this outcome variable measured in the observational population data.  

\subsection{Setup}
In contrast to the conventional setup, we consider settings where
researchers observe an outcome variable in addition to pre-treatment
covariates in the population data. See Figure~\ref{fig:data} for a
visualization of the difference in settings from conventional methods.  Below we describe two canonical social science examples that motivate the data settings that underpin our method and to which we return.  We also come back to our benchmark analysis of the JTPA data in Section~\ref{sec:empirical}.

\paragraph*{Example: Get-Out-the-Vote (GOTV) Experiments}
Political scientists have conducted a number of field experiments to
evaluate the impact of canvassing efforts, including door-to-door,
phone, and mail, on voter turnout. Such GOTV experiments typically rely on administrative data to
measure the outcome, namely voter turnout data from the Secretary of State. These experiments are often
conducted in a small geographic region (e.g., New Haven, Connecticut
in \cite{gerber_green_2000}), but scholars are often interested in
generalizing the effect to broader populations, such as for a statewide
election. Importantly, when considering generalization, the outcome variable of voter turnout is available 
not only for the experimental data but also for the broader target
population of interest. In our framework, we use this information
about voter turnout measured in the observational population data to
improve precision in the estimation of the PATE.

\paragraph*{Example: Education Experiments}
Education research also relies on experiments to evaluate the
performance of classroom interventions, such as the impact of smaller
class size on curriculum-based and standardized tests \citep[e.g.,][]{word1990state}. These experiments are often done in
partnership with school systems. For example, the Tennessee STAR
experiment was conducted in classrooms across Tennessee. However,
researchers are interested in the broader impact of such
interventions.  For example, a researcher may ask what the long term
impact of small class sizes in primary school is on standardized test
scores, such as the SAT, for all public schools in the United
States. To estimate the PATE, existing methods use demographic
variables from a random sample of public school students to construct sampling weights. In our framework, we can additionally use SAT scores measured for a random sample of
public school students, which improves estimation accuracy. 

\paragraph*{Remark} For simplicity of exposition, this section focuses on settings where we observe the same outcome measure in the experimental and population data. The outcomes measured in the population may be a mix of treatment and control outcomes. However, our proposed method can also accommodate scenarios in which we only observe a proxy outcome variable (rather than the same outcome
measure) in the population data. We consider this case in
Section~\ref{subsec:proxy}. \qed

\begin{figure}[!t]
\centering 
\includegraphics[trim={4cm 9cm 4cm 10cm},clip,scale=0.7]{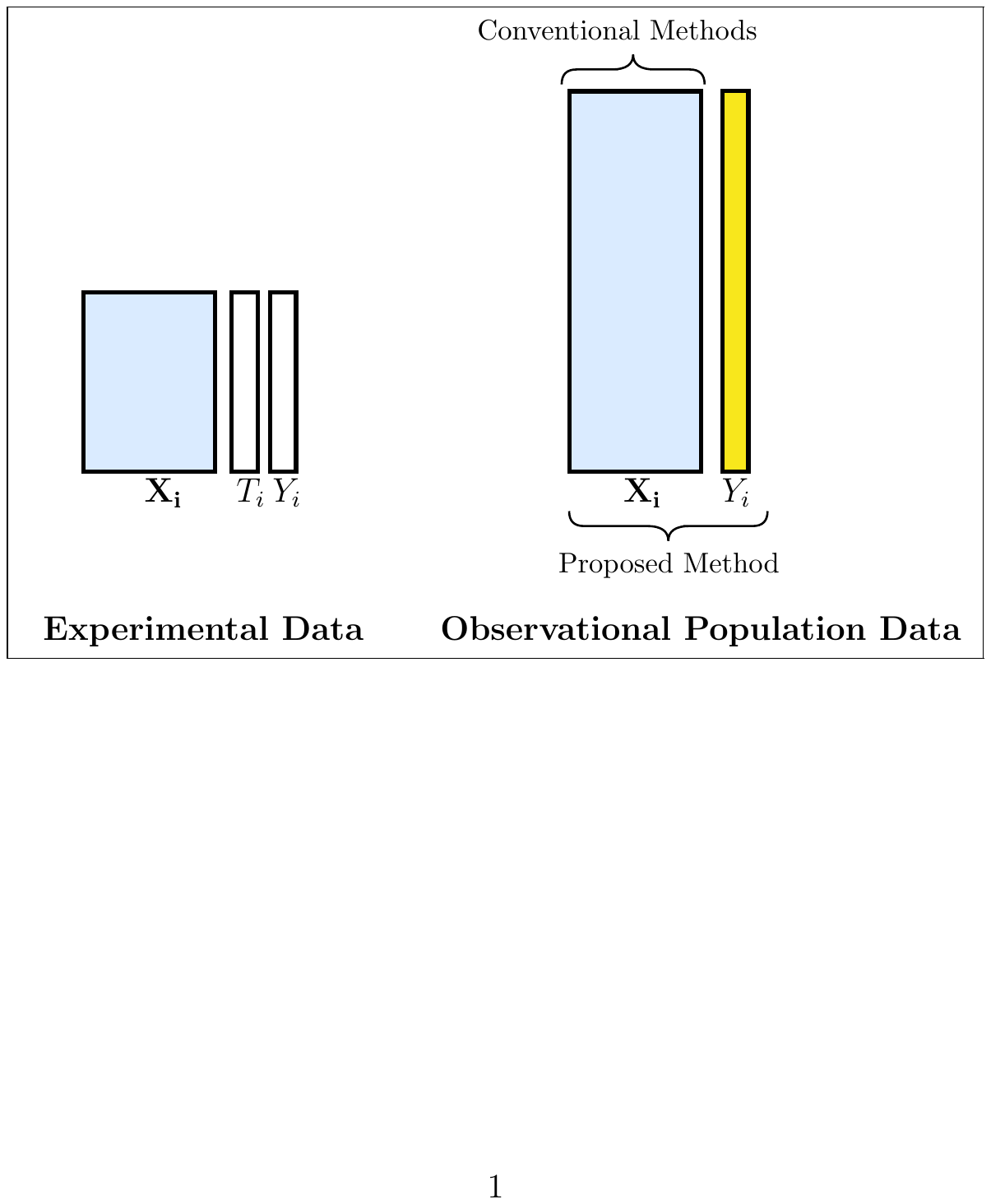}
\caption{Data Requirements. Conventional estimation methods only use
  the covariate data $\bX_i$ (highlighted in blue, above). Our
  proposed approach leverages the outcome data, in addition to the
  covariate data at the population level (highlighted in
  yellow).}\label{fig:data}
\end{figure} 

\subsection{Post-residualized Weighted Estimators}\label{sec:post_res_weighted}
Our proposed post-residualized weighting approach exploits the outcome
measured in the population data to improve precision in the estimation
of the PATE. The key idea is that we estimate a predictive model
with the outcome measured in the population data and then use this
estimated predictive model to \textit{residualize} outcomes in the
experimental data, before using conventional weighting estimators for
the PATE. 

In total, post-residualized weighting has four steps. The first step is to estimate sampling
weights $w_i$, which is the same as the conventional weighting
approach. In the second step, we fit a flexible model in the population data to predict the outcome
variable $Y_i$ using pre-treatment $\bX_i$. We refer to this
predictive model fitted in the population data as a
\textit{residualizing model}, and formally denote it as $g(\bX_i)$: $\mathcal{X}
\to \R$ where $\mathcal{X}$ is the support of $\bX_i$. In the third
step, we use the estimated residualizing model to predict outcomes $\widehat{Y}_i$ in the experimental data, which is separate
from the population data used to estimate the residualizing
model. In the fourth and final step, we apply the weighted estimator
(equation~\eqref{eq:std}) using the residuals from this prediction, (denoted by $\hat e_i = Y_i -\hat Y_i$) as outcomes (instead of $Y_i$ used in the conventional weighting approach).

We summarize our proposed approach in Table~\ref{tab:summary}. In the following section, we directly extend the weighted estimator discussed in Section~\ref{sec:background}. We then consider how post-residualizing can improve a more general weighted least squares estimator that includes further covariate adjustment in Section~\ref{sec:recycle_ipw_cov}. 

\begin{table}[!t]
\small
  \begin{center}
    \begin{tabular}{r|p{5in}}
      \toprule
      \multicolumn{2}{l}{\textbf{Post-residualized Weighting for the PATE estimation:} } \\
      \midrule
      Step 1: & Estimate sampling weights, $w_i$, for units in the experimental sample. \\[8pt]
      Step 2: & Choose a residualizing model $g(\bX_i)$: $\mathcal{X}
                \to \R$, where $\mathcal{X}$ is the support of
                $\bX_i$. Using the population data, estimate $\hat
                g(\bX_i)$ that predict the population outcomes using pre-treatment covariates $\bX_i$.
      \\[8pt] 
      Step 3: & Predict $\hat Y_i = \hat{g}(\bX_i)$ for each unit in the
                experimental data, and compute residual $\hat e_i =
                Y_i - \hat Y_i$ for units in the experimental sample. \\[8pt]
      Step 4: & Estimate the PATE using residuals $\hat
                e_i$ and estimated sampling weights $\hat w_i$.
      \\[3pt] 
              & \emph{No covariate adjustment within the experimental
                data} (Section~\ref{sec:recycle_ipw})
      \\
              & $\qquad$\rotatebox[origin=c]{180}{$\mathbf{\Lsh}$}
                See post-residualized weighted estimator
                $\hat \tau_W^{res}$ (Definition~\ref{prw}).
      \\[5pt] 
              & \emph{With covariate adjustment within the experimental
                data} (Section~\ref{sec:recycle_ipw_cov})
      \\
              & $\qquad$\rotatebox[origin=c]{180}{$\mathbf{\Lsh}$} See
                post-residualized weighted least squares estimator
                $\hat \tau_{wLS}^{res}$ (Definition~\ref{pr-wls}).\\
      \bottomrule
    \end{tabular}
    \end{center}
    \caption{Summary of Post-residualized
      Weighting.}\label{tab:summary}
\end{table}

\begin{definition}[Post-residualized Weighted Estimator] 
  \label{prw}
  Let $\hat w_i$ be estimated sampling weights. 
  Define $\hat e_i$ to be residuals from the residualizing model prediction (i.e., $\hat e_i = Y_i - \hat Y_i$). The post-residualized weighted estimator is defined as: 
  \begin{equation}
    \hat \tau_W^{res} \coloneqq \frac{\sum_{i \in \cS} \hat w_i T_i \hat
      e_i}{\sum_{i \in \cS} \hat w_i T_i} - \frac{\sum_{i \in \cS} \hat w_i
      (1-T_i) \hat e_i}{\sum_{i \in \cS} \hat w_i (1-T_i)}. \label{eq:prw}
  \end{equation}
\end{definition}
We summarize several key aspects of the post-residualized weighted estimator here and formally discuss each point in the subsequent sections.
First, the identification of the PATE is obtained under the same assumptions required for existing weighted estimators, and we do not make any additional assumptions (Section~\ref{subsec:IPW_consistency}). Most importantly, our proposed estimator is consistent for the PATE, regardless of the choice of the residualizing model. That is, we do not require the correct specification of the residualizing model $g (\bX_i)$ to guarantee  consistency of the proposed estimator. Therefore, akin with \citet{rosenbaum2002covariance} and \citet{sales2018rebar}, the residualizing model $g (\bX_i)$ can be seen as an ``algorithmic model'' in that the goal is to predict outcomes, rather than substantively explain an underlying probabilistic process. 

Second, the proposed post-residualized weighted estimator, $\hat \tau_{W}^{res}$, can achieve significant improvements in precision over the traditional weighted
estimator (equation~\eqref{eq:std}) when the residualizing model can predict outcomes in the experiment well (Section~\ref{subsec:IPW_eff_gains}). We will show in
Section~\ref{subsec:IPW_eff_gains} that while we maintain consistency regardless, how much efficiency gain we achieve depends on the predictive performance of the fitted residualizing model $\widehat{g}(\bX_i)$. As such, researchers should, when possible, use not only simple models, such as ordinary least squares, but also more flexible machine learning models, such as random forests or other ensemble learning methods \citep{breiman2001random, polley2010super} as the residualizing models to improve precision of the PATE estimation.

Finally, we derive a diagnostic measure that researchers can use to determine whether residualizing will likely lead to precision gains when estimating the PATE (Section~\ref{subsec:IPW_diagnostics}). As emphasized in the second point above, when the residualizing model can predict outcomes in the experiment well, we can expect efficiency gains. However, when the residualizing model fails to predict outcome measures in the experimental data, it is possible for post-residualizing
to increase uncertainty of the PATE estimation. Our diagnostic measure helps researchers to estimate the expected efficiency gain, thereby deciding whether residualizing is beneficial in their applications. 

\paragraph*{Remark}
Our proposed post-residualized weighting estimator is closely
connected to the augmented inverse probability weighted estimators (AIPW) \citep{robins1994} developed for the
PATE \citep{dahabreh2019generalizing} in that both estimators combine
weighting and outcome-modeling. The process of estimating weights for both the post-residualized weighting estimators and AIPW is the same. However, the key difference between two approaches is that the AIPW estimates the outcome model using only the experimental data, thereby not exploiting the outcome variable available in the population data. In
contrast, our post-residualized weighting estimator explicitly
uses the outcome information available in the population data to
estimate the residualizing model and improve precision. Furthermore, post-residualized weighting does not attempt to model both the treatment and control outcomes separately, and therefore, does not have the double robustness that the AIPW has. \qed

\subsection{Consistency}\label{subsec:IPW_consistency}
In this section, we show that the post-residualized weighted estimator is a consistent estimator of the PATE regardless of the choice of the residualizing model $g(\bX_i)$. This emphasizes the point that $g(\bX_i)$ need not be a correct specification of the underlying data generating process, but merely a function that predicts outcomes measured in the population. 

\begin{theorem}[Consistency of Post-residualized Weighted Estimators]\label{thm:consistency_ipw} Assume that sampling weights $\hat w_i$ are consistently estimated and Assumptions~\ref{assum-exp}--\ref{assum-pos} hold with pre-treatment covariates $\bX_i$. Then, the post-residualized weighted estimator, using any residualizing model $g(\bX_i)$ built on the population data, is a consistent estimator for the PATE:  
  \begin{equation*}
    \hat{\tau}_{W}^{res} \cip \tau, 
  \end{equation*}
  where $\cip$ denotes the convergence in probability.
\end{theorem}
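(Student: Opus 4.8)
The plan is to reduce consistency of $\hat\tau_W^{res}$ to the already-established consistency of the ordinary weighted estimator $\hat\tau_W$ by exposing an exact algebraic decomposition. Writing $\hat e_i = Y_i - \hat g(\bX_i)$ and splitting each weighted arm of \eqref{eq:prw} into a $Y_i$-piece and a $\hat g(\bX_i)$-piece, one obtains
\begin{equation*}
\hat\tau_W^{res} = \hat\tau_W - \hat D, \qquad \hat D \coloneqq \frac{\sum_{i\in\cS}\hat w_i T_i \hat g(\bX_i)}{\sum_{i\in\cS}\hat w_i T_i} - \frac{\sum_{i\in\cS}\hat w_i (1-T_i)\hat g(\bX_i)}{\sum_{i\in\cS}\hat w_i(1-T_i)}.
\end{equation*}
Since $\hat\tau_W \cip \tau$ holds under Assumptions~\ref{assum-exp}--\ref{assum-pos} and consistent weight estimation (as reviewed in Section~\ref{subsec:ipw}), by Slutsky it suffices to show that the correction term satisfies $\hat D \cip 0$.

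The heart of the argument is that $\hat D$ is a difference of treated and control weighted averages of a function of the covariates alone, and randomization (Assumption~\ref{assum-exp}) forces these two averages to share a common probability limit. Concretely, I would first condition on the population data, so that the fitted residualizing model $\hat g$ becomes a fixed function; this is legitimate because the population sample is drawn independently of the experimental sample, and --- crucially --- no correctness of $\hat g$ is required, only that it is some deterministic map of $\bX_i$. Replacing the estimated weights by their population targets $w_i \propto 1/\pi(\bX_i)$ (justified by the consistent-weights assumption together with the continuous mapping theorem), a law of large numbers gives
\begin{equation*}
\frac{\sum_{i\in\cS}\hat w_i T_i \hat g(\bX_i)}{\sum_{i\in\cS}\hat w_i T_i} \cip \frac{\E_{\tilde F}[\,w\,T\,g(\bX)\,]}{\E_{\tilde F}[\,w\,T\,]},
\end{equation*}
and similarly for the control arm. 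Because $w$ and $g(\bX)$ are both functions of $\bX_i$, Assumption~\ref{assum-exp} makes $T_i$ independent of $(w_i, g(\bX_i))$ under $\tilde F$, so the factor $\E_{\tilde F}[T]$ cancels and the treated limit collapses to $\E_{\tilde F}[w\,g(\bX)]/\E_{\tilde F}[w]$; the identical cancellation in the control arm produces the same limit, whence $\hat D \cip 0$.

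The main obstacle is handling the two estimated objects --- the weights $\hat w_i$ and the residualizing model $\hat g$ --- simultaneously and rigorously. The conditioning device neutralizes $\hat g$ cleanly only because of the independence of the two samples, but one must still verify that the law of large numbers applies to the weighted average of $\hat g(\bX_i)$ uniformly enough as both $n$ and $N$ grow; a mild moment condition (e.g., $\E_{\tilde F}[w^2 g(\bX)^2] < \infty$ along the sequence of fitted models) suffices and avoids any assumption that $\hat g$ converges to a particular limit. Once the weighted treated and control covariate-averages are shown to converge to a common value for each realized $\hat g$, the unconditional statement $\hat D \cip 0$ follows by integrating over the population data, and combining with $\hat\tau_W \cip \tau$ completes the proof.
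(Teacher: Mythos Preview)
Your proposal is correct and follows essentially the same route as the paper: the identical decomposition $\hat\tau_W^{res} = \hat\tau_W - \hat D$ (the paper writes $\hat\tau_{\hat Y}$ for your $\hat D$), followed by the same randomization-based cancellation showing both weighted arms of $\hat D$ share the common limit $\E_{\tilde F}[w\,\hat g(\bX)]/\E_{\tilde F}[w]$. The paper is somewhat terser---it treats the weights as known and does not explicitly invoke the conditioning-on-population-data device you describe---so your discussion of the simultaneous handling of $\hat w$ and $\hat g$ is a useful elaboration, but not a different argument.
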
 
The proof of Theorem~\ref{thm:consistency_ipw} can be found in
Supplementary Materials~\ref{app:proofs}. 
This property allows for a large degree of
flexibility in building the residualizing model, since consistency is
guaranteed \textit{regardless} of model specification or performance
of $g(\bX_i)$. We can obtain the consistency even for a misspecified
residualizing model $g(\bX_i)$ because the predicted experimental
outcome $\hat{Y_i} = \hat{g}(\bX_i)$ is only a function of the
pre-treatment covariates $\bX_i$, and thus, with randomized treatments
(Assumption~\ref{assum-exp}), its distribution is the same across
treatment and control units on average for any sample size. As such,
residualizing preserves the consistency of the original weighted
estimator without requiring any additional assumptions.   

While consistency is guaranteed, efficiency gains from residualizing \textit{do} depend on the ability of the residualizing model to predict outcome measures in the experimental data. Theorem~\ref{thm:consistency_ipw} allows for researchers to leverage complex, ``black box'' approaches (such as ensemble methods) to maximize the predictive accuracy, as interpretability of the residualizing model is secondary to being able to fit the data well. In the next section, we will formalize the criteria for variance reduction from residualizing. 

\subsection{Efficiency Gains} \label{subsec:IPW_eff_gains}
The post-residualized weighted estimator allows researchers to include information from the observational population data about the relationship between the pre-treatment covariates and the population outcomes into the
estimation process.  Whether or not we obtain precision gains, and the magnitude of these precision gains, will depend on the nature of the residualizing model. In general, the better researchers are able to explain the outcomes measured in the experiment using the residualizing model, the greater the efficiency gains. 

To make these gains more explicit, we first define the
\textit{weighted variance} and \textit{weighted covariance} as follows. 
\begin{eqnarray}
  \var_w(A_i) & = & \int \frac{1}{\pi(\bX_i)^2} \cdot (A_i - \bar A)^2  d\tilde F(\bX_i, A_i), \label{eq:w-var}\\
  \cov_w(A_i, B_i) & = & \int \frac{1}{\pi(\bX_i)^2} \cdot (A_i - \bar A)(B_i - \bar B)
                         d\tilde F(\bX_i,
                         A_i, B_i), \label{eq:w-cov}
\end{eqnarray}
where $\bar A = \E_F(A_i)$ and $\bar B = \E_F(B_i)$. The
efficiency gain for the post-residualized weighted estimator is formalized as follows. 
\begin{theorem}[Efficiency Gain for Post-residualized Weighted Estimators]\mbox{}\\
  The difference between the asymptotic variance of $\hat
  \tau_W^{res}$ and that of $\hat \tau_W$ is: 
  \begin{align} 
\asyvar_{\tilde F}(\hat \tau_W)&-\asyvar_{\tilde F}(\hat \tau_W^{res})  \nonumber \\
              &= -\frac{1}{p(1-p)}\var_w(\hat Y_i) +\frac{2}{p} \cov_w(Y_i(1), \hat Y_i) + \frac{2}{1-p} \cov_w(Y_i(0), \hat Y_i),                
  \end{align} 
  where $\asyvar_{\tilde F}(Z)$ denotes the scaled asymptotic variance
  of random variable $Z$ over the sampling distribution $\tilde F$,
  i.e., $\asyvar_{\tilde F}(Z) = \lim_{n \rightarrow \infty}
  \var_{\tilde F}(\sqrt{n}Z).$ $p$ is the probability of being treated
  within the experiment, i.e., $p = \Pr_{\tilde{F}}(T_i = 1).$
\label{thm:var_compare}
\end{theorem}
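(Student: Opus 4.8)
The plan is to exploit the fact that the post-residualized estimator $\hat\tau_W^{res}$ in \eqref{eq:prw} is \emph{syntactically} the ordinary weighted estimator \eqref{eq:std} with the outcome $Y_i$ replaced by the residual $\hat e_i = Y_i - \hat Y_i$. So I would derive one generic asymptotic-variance formula for the Hájek estimator as a functional of whatever outcome is plugged in, evaluate it at $Y_i$ and at $\hat e_i$, and subtract. Throughout I would condition on the population data: since it is drawn independently of the experimental sample, this freezes $\hat g$ and lets me treat $\hat Y_i = \hat g(\bX_i)$, and hence $e_i(t) = Y_i(t) - \hat Y_i$, as a fixed measurable function of $(\bX_i, Y_i(t))$. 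Consistent with the definitions in \eqref{eq:w-var}--\eqref{eq:w-cov}, I would also work with the true weights $w_i = 1/\pi(\bX_i)$.

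First I would establish $\asyvar_{\tilde F}(\hat\tau_W)$. Writing $\hat\tau_W = \hat\mu_1 - \hat\mu_0$ and linearizing each Hájek ratio, the treated-arm influence function is $\tfrac{w_i T_i}{p}(Y_i(1)-\mu_1)$ with $\mu_1 = \E_F\{Y_i(1)\}$, and the control-arm one is $\tfrac{w_i(1-T_i)}{1-p}(Y_i(0)-\mu_0)$; the Hájek normalization supplies the centering automatically, using $\E_{\tilde F}\{w_i\}=1$ and $\E_{\tilde F}\{w_i T_i\}=p$. For the arm variances I would invoke Assumption~\ref{assum-exp} to factor out $T_i$ (so a second moment of the form $\E_{\tilde F}\{w_i^2 T_i\,(Y_i(1)-\mu_1)^2\}$ becomes $p\,\E_{\tilde F}\{w_i^2(Y_i(1)-\mu_1)^2\}$) and Assumption~\ref{assum-ind} to identify the conditional means under $F$; each arm's influence-function variance then collapses to exactly $\tfrac1p\var_w(Y_i(1))$ and $\tfrac{1}{1-p}\var_w(Y_i(0))$ as defined in \eqref{eq:w-var}. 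The cross-arm covariance vanishes because the two influence functions carry the factors $T_i$ and $1-T_i$ and $T_i(1-T_i)\equiv 0$. This yields $\asyvar_{\tilde F}(\hat\tau_W) = \tfrac1p\var_w(Y_i(1)) + \tfrac{1}{1-p}\var_w(Y_i(0))$, and the identical argument with $Y_i(t)$ replaced by $e_i(t)$ gives $\asyvar_{\tilde F}(\hat\tau_W^{res}) = \tfrac1p\var_w(e_i(1)) + \tfrac{1}{1-p}\var_w(e_i(0))$.

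Finally I would expand each residual variance by bilinearity of the weighted covariance. Since $\E_F\{e_i(t)\} = \mu_t - \E_F\{\hat Y_i\}$, the centering is internally consistent and $\var_w(e_i(t)) = \var_w(Y_i(t)) - 2\cov_w(Y_i(t),\hat Y_i) + \var_w(\hat Y_i)$ (here $\hat Y_i$ plays the role of $g(\bX_i)$ in \eqref{eq:w-cov}). Subtracting the two asymptotic variances and collecting the two $\var_w(\hat Y_i)$ contributions through $\tfrac1p + \tfrac{1}{1-p} = \tfrac{1}{p(1-p)}$ produces the stated expression.

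I expect the first step to be the main obstacle: pinning down $\asyvar_{\tilde F}(\hat\tau_W)$ rigorously requires justifying the ratio linearization and the attendant interchange of limits, confirming that conditioning on the independent population data legitimately treats $\hat g$ as fixed, and being explicit that the formula is written in terms of the true relative density $\pi$ rather than the estimated weights, so that weight-estimation noise is taken to be first-order negligible (the consistency of $\hat w_i$ assumed in the theorem entering only at this point). Once the generic variance functional is in hand, the substitution and the remaining algebra are routine.
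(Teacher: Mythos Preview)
Your proposal is correct and follows essentially the same route as the paper: derive a generic Hájek asymptotic variance of the form $\tfrac{1}{p}\var_w(Y_i(1)) + \tfrac{1}{1-p}\var_w(Y_i(0))$ via ratio linearization (the paper phrases this as the delta method rather than influence functions, but the calculation is identical), apply it to both $Y_i(t)$ and $\hat e_i(t)$, expand $\var_w(\hat e_i(t))$ by bilinearity, and collect the $\var_w(\hat Y_i)$ terms via $\tfrac{1}{p}+\tfrac{1}{1-p}=\tfrac{1}{p(1-p)}$. Your explicit remarks about conditioning on the population data to freeze $\hat g$ and about working with the true $\pi$ are sound clarifications that the paper leaves implicit.
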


The proof of Theorem~\ref{thm:var_compare} can be found in
Supplementary Materials~\ref{app:proofs}. 
Theorem \ref{thm:var_compare} decomposes the efficiency gain from
post-residualized weighting into two components: (1) the variance of
the predicted experimental outcomes $\var_w(\hat Y_i)$, and (2) how
related the predicted outcomes are to the actual outcomes in the
experimental samples (represented by $\cov_w(Y_i(1), \hat Y_i)$ and
$\cov_w(Y_i(0), \hat Y_i)$). If the covariance between the predicted
outcomes and actual outcomes in the experimental sample is greater
than the variance of the predicted outcomes, we expect precision
gains. In other words, the gains to precision from residualizing
depend on how well outcome measures in the experiment are explained by
the residualizing model fitted to the population data.\footnote{We note that the efficiency gain expression does not include uncertainty associated with estimating the residualizing model. This is because the chosen $\hat g(\bX_i)$ is a dimension reducing function of the fixed pre-treatment covariates.} As such, researchers should leverage the large amounts of data available at the population level to apply flexible modeling strategies in order to maximize the variation explained by the residualizing model.

In the following subsection, we will describe a diagnostic measure that can help researchers determine whether or not they should expect precision gains from residualizing.

\subsection{Diagnostics} \label{subsec:IPW_diagnostics}
As discussed above, while post-residualized weighting 
stands to greatly improve precision in estimation of the PATE, this is not guaranteed. To address this concern, we
derive a diagnostic that evaluates when researchers should expect precision gains from residualizing. 

We define a pseudo-$R^2$ measure as: 
\begin{equation} 
R_0^2 \coloneqq 1 - \frac{\var_{w}(\hat e_i(0))}{\var_{w}(Y_i(0))},
\label{eqn:r2} 
\end{equation}
where we define $\hat{e}_i(t) = Y_i(t) - \hat{Y}_i$ for $t \in \{0,1\}.$

$R^2_0$ can be interpreted as the weighted goodness-of-fit of the residualizing model for the potential outcomes under control for units in the experiment. Researchers can estimate $R^2_0$ using the estimated residuals across the control units in the experiment. When $R_0^2 > 0$, we expect an improvement in precision across the control units from residualizing. 

In line with Rubin's ``locked box'' approach \citep{rubin2008objective}, we do not suggest estimating the analogous $R_0^2$ among treated units. However, if the variation in the control outcomes is greater than the overall treatment effect heterogeneity, then checking if $R^2_0$ is greater or less than zero is an effective diagnostic for whether or not we expect precision gains from residualizing. We formalize this in the following corollary, where we write the relative reduction from residualizing as a function of this proposed $R^2_0$ measure. 

\clearpage
\begin{corollary}[Relative Reduction from Residualizing] \mbox{}\\
  \label{cor:rel_reduc} 
  With $R^2_0$ defined as in equation~\eqref{eqn:r2}, define $R^2_1$ as the weighted goodness-of-fit of the residualizing model for the potential outcomes under treatment. Let $\xi = R^2_0 - R^2_1$, such that: 
  $$R^2_1 \coloneqq 1 - \frac{\var_{w}(\hat e_i(1))}{\var_{w}(Y_i(1))} = R_0^2 - \xi.$$
 Furthermore, define the ratio $f = p \var_{w}(Y_i(0)) / (1-p) \var_{w}(Y_i(1))$.
  Then the relative reduction in variance from residualizing is given by: 
  $$\text{Relative Reduction} \coloneqq \frac{\asyvar_{\tilde F}(\hat \tau_W) - \asyvar_{\tilde F}(\hat
    \tau_W^{res})}{\asyvar_{\tilde F}(\hat \tau_W)} = R^2_0 - \frac{1}{1+f} \cdot
  \xi$$ 
\end{corollary}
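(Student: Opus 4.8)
The plan is to start from the efficiency-gain expression in Theorem~\ref{thm:var_compare}, rewrite each weighted covariance term using the pseudo-$R^2$ quantities, and show that the numerator of the relative reduction collapses to a clean combination of $R^2_0$, $R^2_1$, and the weighted outcome variances; the denominator is the asymptotic variance of the plain weighted estimator. After that the statement follows by substituting $R^2_1 = R^2_0 - \xi$ and simplifying a single ratio using the definition of $f$.

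First I would build the bridge between the covariance terms in Theorem~\ref{thm:var_compare} and the goodness-of-fit quantities in the corollary. Writing $\hat e_i(t) = Y_i(t) - \hat Y_i$ and using the bilinearity of the weighted (co)variance defined in \eqref{eq:w-var}--\eqref{eq:w-cov},
\[
\var_w(\hat e_i(t)) = \var_w(Y_i(t)) - 2\cov_w(Y_i(t), \hat Y_i) + \var_w(\hat Y_i),
\]
so the definition $R^2_t = 1 - \var_w(\hat e_i(t))/\var_w(Y_i(t))$ rearranges into the identity $2\cov_w(Y_i(t), \hat Y_i) = R^2_t\,\var_w(Y_i(t)) + \var_w(\hat Y_i)$ for each $t \in \{0,1\}$.

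Second, I would substitute these two identities into the gain from Theorem~\ref{thm:var_compare}. The three terms carrying $\var_w(\hat Y_i)$ then have coefficients $-1/[p(1-p)]$, $1/p$, and $1/(1-p)$, which sum to zero; hence every $\var_w(\hat Y_i)$ contribution cancels and the numerator reduces to
\[
\asyvar_{\tilde F}(\hat \tau_W) - \asyvar_{\tilde F}(\hat \tau_W^{res}) = \frac{1}{p}R^2_1\,\var_w(Y_i(1)) + \frac{1}{1-p}R^2_0\,\var_w(Y_i(0)).
\]
For the denominator I would use the asymptotic variance of the weighted estimator, $\asyvar_{\tilde F}(\hat \tau_W) = \frac{1}{p}\var_w(Y_i(1)) + \frac{1}{1-p}\var_w(Y_i(0))$, the standard weighted-estimator variance that underlies the Theorem~\ref{thm:var_compare} derivation (the same formula with $Y_i(t)$ replaced by $\hat e_i(t)$ yields $\asyvar_{\tilde F}(\hat \tau_W^{res})$).

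Finally I would form the ratio, substitute $R^2_1 = R^2_0 - \xi$, and factor $R^2_0$ out of the numerator. The leftover piece is $-\frac{(1/p)\,\xi\,\var_w(Y_i(1))}{(1/p)\var_w(Y_i(1)) + (1/(1-p))\var_w(Y_i(0))}$; clearing the $p(1-p)$ factors and dividing through by $(1-p)\var_w(Y_i(1))$ turns it into $-\xi/(1+f)$ by the definition $f = p\,\var_w(Y_i(0))/[(1-p)\var_w(Y_i(1))]$, giving $R^2_0 - \xi/(1+f)$ as claimed. I expect the only genuine subtlety to be confirming the denominator: the corollary tacitly relies on $\asyvar_{\tilde F}(\hat \tau_W)$ having exactly this two-term weighted-variance form with no surviving cross terms, which must be read off from the intermediate steps of the Theorem~\ref{thm:var_compare} proof rather than restated here. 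The coefficient cancellation in the second step and the ratio algebra in the last step are then routine.
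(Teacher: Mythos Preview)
Your argument is correct, and the final ratio algebra matches the paper's exactly. The only difference is in how you obtain the numerator. You route through the covariance expression of Theorem~\ref{thm:var_compare}, use the identity $2\cov_w(Y_i(t),\hat Y_i)=R^2_t\var_w(Y_i(t))+\var_w(\hat Y_i)$, and then rely on the cancellation $-\tfrac{1}{p(1-p)}+\tfrac{1}{p}+\tfrac{1}{1-p}=0$ to kill $\var_w(\hat Y_i)$. The paper instead bypasses Theorem~\ref{thm:var_compare} entirely: it writes $\asyvar_{\tilde F}(\hat\tau_W)=\tfrac{1}{p}\var_w(Y_i(1))+\tfrac{1}{1-p}\var_w(Y_i(0))$ and the analogous formula for $\hat\tau_W^{res}$ with $\hat e_i(t)$ in place of $Y_i(t)$, subtracts, divides numerator and denominator by $\tfrac{1}{p}\var_w(Y_i(1))$, and reads off $\tfrac{1}{1+f}(R^2_1+fR^2_0)$ directly from the definitions of $R^2_t$ and $f$. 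Your detour costs one extra identity and a coefficient cancellation but has the virtue of making explicit the link back to Theorem~\ref{thm:var_compare}; the paper's route is shorter because the $R^2_t$ quantities are defined precisely so that the subtraction collapses without any covariance bookkeeping. Your concern about the denominator is unnecessary: that two-term form is stated and proved separately as the asymptotic variance corollary in the paper's appendix, so it need not be extracted from the Theorem~\ref{thm:var_compare} derivation.
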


Corollary~\ref{cor:rel_reduc}, proof available in Supplementary Materials~\ref{app:proofs}, 
decomposes the overall relative
reduction in variance of the weighted estimator from residualizing
into two components: (1) our proposed diagnostic measure $R^2_0$ and
(2) a factor, represented by $\xi$, that measures the difference in
prediction error between the experimental control and experimental
treated potential outcomes. If the residualizing model explains similar amounts
of variation across both the treated and control potential outcomes, then $R^2_1
\approx R^2_0$ and $\xi \approx 0$. In that scenario, $R^2_0$ will be
roughly indicative of the expected relative reduction.  When $R^2_0$
takes on a negative value, this is a strong indication that
residualizing is unlikely to result in precision gains, since it is
unlikely the prediction error will be significantly lower for treated
units.

To summarize, $R_0^2$ can diagnose when one should expect improvements in precision from
residualizing. When $R^2_0$ takes on negative values, researchers should not proceed with residualizing, as
it is likely to result in precision loss. 

\section{Post-residualized Weighting with Covariate Adjustment}
\label{sec:recycle_ipw_cov}
In Section~\ref{sec:recycle_ipw}, we showed that the post-residualized
weighted estimator is a consistent estimator of the PATE, regardless
of the residualizing model that researchers use.  However,
researchers often rely on covariate adjustment to experimental data
to improve precision in estimation.  We now extend our
post-residualized weighted estimator to include a standard covariate
adjustment for the experimental data. 

As with estimation of the SATE, including covariate adjustment when estimating
the PATE can combat the precision loss associated with the weighted estimators.
Additionally, while estimation of weights requires covariates to be measured across both the population
and the experimental data, covariate adjustment can leverage covariates that are only measured in the
experimental data, where researchers can measure prognostic variables
\citep{stuart2017generalizing}. The weighted least squares estimator $\hat \tau_{wLS}$ for the PATE is estimated using a weighted regression, regressing the outcomes on the treatment indicator and pre-treatment covariates. More formally,
\begin{equation}
  (\hat{\tau}_{wLS}, \hat{\alpha}, \hat{\gamma}) =
  \argmin_{\tau, \alpha, \gamma}
  \frac{1}{n} \sum_{i \in \cS}\hat{w}_i \left(Y_i - (\tau T_i +
  \alpha + \widetilde{\bX}^\top_i \gamma)\right)^2 \label{eq:wls}
\end{equation}
where $\widetilde{\bX}_i$ are experimental pre-treatment covariates
included in the covariate adjustment.  Covariates
$\widetilde{\bX}_i$ can differ from the $\bX_i$ required for
Assumptions~\ref{assum-ind}--\ref{assum-pos}. This weighted least
squares estimator $\hat{\tau}_{wLS}$ is consistent for the PATE
under Assumptions~\ref{assum-exp}--\ref{assum-pos} as long as the
sampling weights $\hat{w}_i$ is consistently estimated \citep{dahabreh2019generalizing}.

By extending the weighted least squares estimator
(equation~\eqref{eq:wls}), we formally define the post-residualized weighted least squares estimator as follows: 
\begin{definition}[Post-Residualized Weighted Least Squares
  Estimator]\label{pr-wls} $ $\\
  Given a residualizing model estimated as $\hat{g}(\cdot)$, the
  post-residualized weighted least squares estimator $\hat{\tau}^{res}_{wLS}$ for the PATE is defined as, 
  \begin{equation}
    (\hat{\tau}^{res}_{wLS}, \hat{\alpha}^{res}, \hat{\gamma}^{res}) =
    \argmin_{\tau, \alpha^{res}, \gamma^{res}}
    \frac{1}{n} \sum_{i \in \cS}\hat{w}_i (\hat e_i - \tau T_i -
    \alpha^{res} - \widetilde{\bX}^\top_i \gamma^{res})^2 \label{eq:prw-wls}
  \end{equation}
  where $\hat{e}_i = Y_i - \hat{g}(\bX_i)$ and $\widetilde{\bX}_i$ are experimental pre-treatment covariates included in the covariate adjustment.  We allow $\widetilde{\bX}_i$ to differ from $\bX_i$ used to calculate $\hat{g}(\bX_i)$.
\end{definition}

In practice, the post-residualized weighted least squares estimator
can be estimated by running a weighted regression, where the estimated
residualized values $\hat e_i$ is regressed on the treatment indicator
$T_i$ and covariates $\widetilde{\bX}_i$, and using the sampling
weights $\hat{w}_i$ as the weights. The coefficient of the treatment
indicator is the post-residualized weighted least squares
estimate for the PATE. If no pre-treatment
covariates are included, the post-residualized weighted least squares
estimator is equivalent to the post-residualized weighted estimator
(equation~\eqref{eq:prw}) discussed in Section~\ref{sec:recycle_ipw}.

There are two advantages to the post-residualized weighted least
squares estimator. First, it can leverage precision gains from
pre-treatment covariates that are measured in the experimental data
but not in the population data. That is, $\widetilde{\bX}_i$
can include more covariates than $\bX_i$. Second, $\hat
\tau^{res}_{wLS}$ provides additional robustness over the 
post-residualized weighted estimator $\hat \tau^{res}_{W}$. More
specifically, without further covariate adjustment, residualizing can
be sensitive to differences between the population and experimental
units in the covariate-outcome relationships. As illustrated
in Section~\ref{subsec:IPW_eff_gains}, when this difference is large,
residualizing can result in efficiency loss. However, by
performing covariate adjustment on the residualized outcomes in the
experimental data, we have an opportunity to correct for the difference in the
covariate-outcome relationships between the experimental data and the
population data. In other words, the
post-residualized weighted least squares estimator, $\hat
\tau^{res}_{wLS}$, gives researchers two opportunities to combat the
precision loss of weighting: once from using the
population data in the residualizing process, and a second from
adjusting for covariates in the experimental data.   

\subsection{Consistency} 
We show that, much like the post-residualized weighted estimator $\hat \tau^{res}_{W}$, the post-residualized weighted least squares
estimator $\hat \tau^{res}_{wLS}$ is also a consistent estimator of
the PATE, regardless of the chosen residualizing model $g(\bX_i)$ and
pre-treatment covariates $\widetilde{\bX}_i$ that researchers adjust for in the weighted least squares estimator. 
\begin{theorem}[Consistency of Post-residualized Weighted Least Squares Estimators] 
  \label{thm:wLS_consistency}
  Assume that sampling weights $\hat w_i$ are consistently estimated
  and Assumptions~\ref{assum-exp}--\ref{assum-pos} hold with pre-treatment covariates
  $\bX_i$. Then, the post-residualized weighted least squares
  estimator that adjusts for pre-treatment covariates
  $\widetilde{\bX}_i$ $($equation~\eqref{eq:prw-wls}$)$ is a consistent estimator 
  $$\hat \tau_{wLS}^{res} \cip \tau,$$
  with any residualizing model $g(\bX_i)$ and any pre-treatment covariates
  $\widetilde{\bX}_i$.
\end{theorem}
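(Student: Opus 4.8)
The plan is to exploit the linearity of weighted least squares in the response variable, which reduces the claim to the already-established consistency of the weighted least squares estimator $\hat\tau_{wLS}$ applied to the raw outcome $Y_i$ (cited above) together with a short independence argument. Note that $\hat\tau^{res}_{wLS}$ is the treatment coefficient from a weighted regression of $\hat e_i = Y_i - \hat Y_i$ on the design $(1, T_i, \widetilde{\bX}_i)$ with weights $\hat w_i$. Since the weighted least squares coefficient vector $(X^\top W X)^{-1} X^\top W y$ is linear in the response $y$, and the design and weights are identical whether the response is $Y_i$, $\hat Y_i$, or their difference $\hat e_i$, the treatment coefficients decompose as
\[
\hat\tau^{res}_{wLS} = \hat\tau_{wLS} - \hat\tau^{\hat Y}_{wLS},
\]
where $\hat\tau^{\hat Y}_{wLS}$ denotes the treatment coefficient obtained by regressing the predicted outcome $\hat Y_i$ on $(1, T_i, \widetilde{\bX}_i)$ under the same weights. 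The first term is consistent for $\tau$ by the cited result, so it suffices to show $\hat\tau^{\hat Y}_{wLS} \cip 0$.

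To establish this, I would identify the probability limit of $\hat\tau^{\hat Y}_{wLS}$ with a population weighted-least-squares projection coefficient and show that this coefficient is exactly zero. Conditioning on the separate population data (equivalently, on the realized residualizing model $\hat g$, which is built independently of the experimental treatment assignments), $\hat Y_i = \hat g(\bX_i)$ is a fixed function of the pre-treatment covariate $\bX_i$. Under consistent estimation of the weights and standard regularity conditions (finite second moments of $\widetilde{\bX}_i$ and $\hat Y_i$, nonsingularity of the limiting weighted Gram matrix), the weighted least squares coefficients converge to the minimizer of the population criterion
\[
\E_{\tilde F}\!\left[\frac{1}{\pi(\bX_i)}\big(\hat Y_i - \tau T_i - \alpha - \widetilde{\bX}_i^\top\gamma\big)^2\right].
\]
Because the reweighting by $1/\pi(\bX_i)$ acts only on the $\bX_i$-marginal, the conditional law of $(T_i, \hat Y_i, \widetilde{\bX}_i)$ is inherited from $\tilde F$; by randomization (Assumption~\ref{assum-exp}), $T_i$ is independent of all pre-treatment quantities, in particular of $(\hat Y_i, \widetilde{\bX}_i)$, with $\Pr(T_i = 1) = p$.

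By a Frisch--Waugh--Lovell argument, the population treatment coefficient equals the reweighted covariance between $\hat Y_i$ and the residual of $T_i$ from its projection onto $(1, \widetilde{\bX}_i)$, divided by the reweighted variance of that residual. Independence of $T_i$ from $\widetilde{\bX}_i$ makes that projection the constant $p$, so the residual equals $T_i - p$; independence of $T_i$ from $\hat Y_i$ then forces the numerator to vanish. Hence the population treatment coefficient is exactly zero, giving $\hat\tau^{\hat Y}_{wLS} \cip 0$ for every realized $\hat g$, and therefore unconditionally. Combined with the decomposition above, this yields $\hat\tau^{res}_{wLS} \cip \tau$ for any residualizing model $g$ and any adjustment covariates $\widetilde{\bX}_i$, mirroring the mechanism behind Theorem~\ref{thm:consistency_ipw}.

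I expect the main obstacle to be the routine but careful limiting argument for the weighted least squares coefficients under estimated weights: verifying that consistency of $\hat w_i$ lets the sample weighted cross-moments converge to their population counterparts (so that Slutsky and the continuous-mapping theorem deliver convergence of the coefficient vector to the population projection), and confirming nonsingularity of the limiting weighted Gram matrix. The conceptual core --- the linear decomposition and the exact vanishing of the predicted-outcome treatment coefficient under randomization --- is clean; the bookkeeping lies entirely in the weighted-moment convergence and in accommodating $\widetilde{\bX}_i$ that may differ from $\bX_i$.
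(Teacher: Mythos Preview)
Your proposal is correct but uses a different decomposition from the paper. The paper writes $\hat\tau_{wLS}^{res}$ via the difference-of-weighted-means representation of weighted least squares (in the style of \citet{lin2013}) and splits off the covariate-adjustment piece, obtaining $\hat\tau_{wLS}^{res} = \hat\tau_W^{res} - \delta_{\widetilde\bX}\,\hat\gamma^{res}$, where $\delta_{\widetilde\bX}$ is the difference of weighted means of $\widetilde\bX_i$ over treated and control; it then invokes Theorem~\ref{thm:consistency_ipw} for the first term and shows $\delta_{\widetilde\bX}\cip 0$ by randomization. You instead exploit linearity of the WLS map in the response to split off the residualizing piece, $\hat\tau_{wLS}^{res} = \hat\tau_{wLS} - \hat\tau_{wLS}^{\hat Y}$, invoke the cited consistency of $\hat\tau_{wLS}$, and use a Frisch--Waugh--Lovell argument to show the population treatment coefficient in the $\hat Y$-regression is exactly zero. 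Both routes rest on the same mechanism---randomization makes $T_i$ orthogonal (in the weighted population projection) to any function of pre-treatment covariates---but they organize the algebra differently: the paper isolates the effect of covariate adjustment and recycles Theorem~\ref{thm:consistency_ipw}, while your approach isolates the effect of residualizing and recycles the external result on $\hat\tau_{wLS}$. Your FWL step is clean and arguably more transparent about why $\widetilde\bX_i$ can be arbitrary; the paper's route keeps the argument self-contained by building on its own Theorem~\ref{thm:consistency_ipw}.
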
 
This theorem follows closely from Theorem~\ref{thm:consistency_ipw}, with a proof available in Supplementary Materials~\ref{app:proofs}.
As before, we highlight that no additional assumptions are needed to establish consistent estimation
of the PATE.    

A potential concern with covariate adjustment is that performing
covariate adjustment within the experimental data can result in worsened asymptotic precision and
invalid measures of uncertainty \citep{freedman2008}. An
alternative approach is to include interaction terms between the treatment
indicator and covariates \citep{lin2013}. Regardless, we can compute valid
standard errors with the Huber–White sandwich standard error estimator. 

\subsection{Efficiency Gain} \label{subsec:eff_gain_wls} 
In this section, we discuss the efficiency gains that can be obtained
from residualizing. Like the
weighted estimator case, we expect that when outcomes measured in the
experiment are better predicted by the residualizing model $\hat
g(\bX_i)$, the efficiency gains from residualizing is larger. However,
because there is an additional opportunity for covariate adjustment
using covariates measured in the experiment $\widetilde{\bX}_i$, the residualizing model must explain
variation in outcomes that cannot be explained using covariates $\widetilde{\bX}_i$ in the weighted least squares regression. We formalize this below.  

\begin{theorem}[Efficiency Gain for Post-Residualized Weighted Least
  Squares Estimators]
  \label{thm:eff_gain_wls} 
 The difference between the asymptotic variance of $\hat \tau_{wLS}$ and that of $\hat \tau_{wLS}^{res}$ is: 
\begin{align}
  & \asyvar_{\tilde F}(\hat \tau_{wLS}) - \asyvar_{\tilde F}(\hat \tau_{wLS}^{res}) \nonumber \\
\ = \ & \frac{1}{p} \left\{\var_w(Y_i(1) - \tilde \bX_i^\top \gamma_\ast) - \var_w(Y_i(1) - \hat{g}(\bX_i)) \right\} \nonumber \\
 & \underbrace{\qquad + \ \frac{1}{1-p} \left\{ \var_w(Y_i(0) - \tilde \bX_i^\top \gamma_\ast)-
    \var_w(Y_i(0) - \hat{g}(\bX_i)) \right\}}_{\text{\normalfont{\small (a) Explanatory power of residualizing model over linear regression}}}\nonumber \\
 & \ + \ \underbrace{\frac{2}{p} \cov_w(\hat e_i(1), \tilde \bX_i^\top
   \gamma_\ast^{res}) + \frac{2}{1-p} \cov_w(\hat e_i(0), \tilde \bX_i^\top
   \gamma_\ast^{res}) - \frac{1}{p(1-p)} \var_w(\tilde \bX_i^\top
   \gamma_\ast^{res})}_{\mathclap{\text{\normalfont{\small (b) Remaining variation in
   residualized outcomes explained by linear regression on $\tilde \bX_i$}}}}, \label{eqn:wLS_diff}
\end{align} 
where ${\gamma}_\ast$ and $\gamma^{res}_\ast$ are the true coefficients\footnote{We define the true coefficients as the coefficients that would be estimated as the experimental sample size $n \to \infty$. See Supplementary Materials for more information.} associated with the pre-treatment covariates, $\widetilde{\bX}_i$
defined in the weighted least squares regression
  $($equation~\eqref{eq:wls}$)$ and the post-residualized
  weighted least squares regression $($equation~\eqref{eq:prw-wls}$)$,
  respectively. 
\end{theorem}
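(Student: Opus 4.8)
The plan is to reduce both weighted least squares (WLS) estimators to weighted difference-in-means estimators applied to \emph{covariate-adjusted} outcomes, and then reuse the variance building block already established in the proof of Theorem~\ref{thm:var_compare}. Throughout, as in Theorem~\ref{thm:var_compare}, I evaluate the asymptotic variances at the limiting weights $w_i = 1/\pi(\bX_i)$, normalized so that $\E_{\tilde F}[w_i]=1$ (hence $\E_{\tilde F}[w_i A_i] = \E_F[A_i]$), and I treat the fitted residualizing function $\hat g(\cdot)$ as a fixed function of the pre-treatment covariates, as noted in the footnote in Section~\ref{subsec:IPW_eff_gains}.

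First I would establish the key representation. Under randomization (Assumption~\ref{assum-exp}), $T_i$ is independent of $(\widetilde{\bX}_i, \bX_i, Y_i(0), Y_i(1))$ and of the weights, so the weighted population projection of $T_i$ onto $(1,\widetilde{\bX}_i)$ is the constant $p$ and the partialled-out treatment is simply $T_i - p$. Applying the Frisch--Waugh--Lovell theorem to the WLS estimator in equation~\eqref{eq:wls} then yields the asymptotically linear representation
\begin{equation*}
\sqrt{n}(\hat \tau_{wLS} - \tau) = \frac{1}{\sqrt n}\sum_{i\in\cS}\left[\frac{w_i T_i(\check Y_i(1) - \mu_1)}{p} - \frac{w_i(1-T_i)(\check Y_i(0) - \mu_0)}{1-p}\right] + o_p(1),
\end{equation*}
where $\check Y_i(t) = Y_i(t) - \widetilde{\bX}_i^\top\gamma_\ast$ and $\mu_t = \E_F[\check Y_i(t)]$. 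The crucial structural facts are that, because $T_i - p$ is weighted-orthogonal to $\widetilde{\bX}_i$, the covariate slope $\gamma_\ast$ is common to both arms, and this influence function is exactly that of the H\`ajek estimator $\hat\tau_W$ applied to the adjusted potential outcomes $\check Y_i(t)$. The treated and control contributions have disjoint support ($T_i(1-T_i)=0$), so their cross-covariance vanishes, and the computation in the proof of Theorem~\ref{thm:var_compare} gives
\begin{equation*}
\asyvar_{\tilde F}(\hat \tau_{wLS}) = \frac{1}{p}\var_w\!\big(Y_i(1) - \widetilde{\bX}_i^\top\gamma_\ast\big) + \frac{1}{1-p}\var_w\!\big(Y_i(0) - \widetilde{\bX}_i^\top\gamma_\ast\big),
\end{equation*}
using that $\var_w$ is invariant to additive constants so the intercept drops out. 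Repeating this identical argument for the post-residualized WLS estimator in equation~\eqref{eq:prw-wls}, now with outcome $\hat e_i = Y_i - \hat g(\bX_i)$ and limiting slope $\gamma_\ast^{res}$, gives the analogous expression with $Y_i(t)$ replaced by $\hat e_i(t) = Y_i(t) - \hat g(\bX_i)$ and $\gamma_\ast$ replaced by $\gamma_\ast^{res}$.

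The final step is purely algebraic. Differencing the two asymptotic variances and expanding each residualized variance $\var_w(\hat e_i(t) - \widetilde{\bX}_i^\top\gamma_\ast^{res})$ by the bilinearity of $\var_w$ and $\cov_w$ as
\begin{equation*}
\var_w(\hat e_i(t)) - 2\cov_w\!\big(\hat e_i(t), \widetilde{\bX}_i^\top\gamma_\ast^{res}\big) + \var_w\!\big(\widetilde{\bX}_i^\top\gamma_\ast^{res}\big),
\end{equation*}
and substituting $\var_w(\hat e_i(t)) = \var_w(Y_i(t) - \hat g(\bX_i))$, regroups the terms exactly into component (a) (the variance differences comparing the linear adjustment $\widetilde{\bX}_i^\top\gamma_\ast$ against the residualizing model $\hat g(\bX_i)$) and component (b) (the covariance and variance terms in $\widetilde{\bX}_i^\top\gamma_\ast^{res}$, with $\tfrac1p + \tfrac1{1-p} = \tfrac1{p(1-p)}$ producing the final coefficient).

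I expect the main obstacle to be rigorously justifying the asymptotically linear representation in the second step --- in particular, controlling the difference between the finite-sample partialled-out quantities (the estimated projection $\hat p_i$ of $T_i$ and the estimated covariate slope) and their limits, and confirming that the slope $\gamma_\ast$ arising through Frisch--Waugh--Lovell coincides with the slope of the full regression. The randomization assumption does the heavy lifting by forcing $T_i - p$ to be weighted-orthogonal to $(1,\widetilde{\bX}_i)$; the remaining care lies in handling the ratio (H\`ajek) structure of the weighted means and in arguing that any influence from estimating the sampling weights enters both estimators through the same weighted moment conditions, consistent with the convention adopted in Theorem~\ref{thm:var_compare}.
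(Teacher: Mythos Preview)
Your proposal is correct and follows essentially the same route as the paper: establish that $\asyvar_{\tilde F}(\hat\tau_{wLS}) = \tfrac{1}{p}\var_w(Y_i(1)-\widetilde\bX_i^\top\gamma_\ast)+\tfrac{1}{1-p}\var_w(Y_i(0)-\widetilde\bX_i^\top\gamma_\ast)$ (and the analogous formula for $\hat\tau_{wLS}^{res}$), then expand $\var_w(\hat e_i(t)-\widetilde\bX_i^\top\gamma_\ast^{res})$ by bilinearity and regroup. The only cosmetic difference is that where you invoke Frisch--Waugh--Lovell to obtain the asymptotically linear representation, the paper writes $\hat\tau_{wLS}=\hat\tau_{wLS}^*+\delta_X(\gamma_\ast-\hat\gamma)$ in the style of \cite{lin2013} and \cite{Ding2021Paradox} and shows the second term is $o_p(n^{-1/2})$---these are equivalent devices for replacing $\hat\gamma$ by $\gamma_\ast$.
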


When we include covariate adjustment to the experimental data, the gains to precision depend on two factors. The first factor, $(a)$, compares the explanatory power of the residualizing model with the linear regression. More specifically, if $\hat{g}(\bX_i)$ is able to explain more variation than the linear combination of $\tilde \bX_i$, then we expect the first term to be positive. The second term, $(b)$, represents the amount of variation in the residualized outcomes that can be explained by the pre-treatment covariates $\tilde \bX_i$. Thus, the magnitude of the precision gain will depend on (1) how much variation the residualizing model can explain in outcomes across the experimental sample, and (2) how much of the variation the covariates $\widetilde{\bX}_i$ are able to explain in the residualized outcomes $\hat e_i$. 

A natural question is why not directly adjust for covariates within the experimental sample instead of using a residualizing model?  One advantage to using the post-residualized weighting over directly adjusting for covariates within the experimental sample arises from the fact that there is typically a larger amount of data available in the population data (i.e. $N \gg n$). While researchers could choose to use a flexible model within the experimental data to perform covariate adjustment, there is a greater restriction with respect to degrees-of-freedom to what type of model can be fit. The availability of large amounts of population data can be leveraged in the residualizing process to better estimate covariate-outcome relationships.  Additionally, by using population data to build and tune the residualizing model, we protect the fidelity of inferences using the experimental data since it is only used for estimation of the PATE. 

When data generating processes are not identical in the experimental and population data, we see concerns similar to the weighted estimator case. When there are large differences between outcomes
measured in the experiment and outcomes measured in the
population data, there is a
risk that we may lose precision from residualizing. However, in the
context of weighted least squares, the additional step of
covariate adjustment can help mitigate potential efficiency losses that arise from a
residualizing model that poorly predicts outcomes measured in the experiment. 

\subsection{Diagnostic}
\label{subsec:wLS_diagnostics}
We now extend the proposed diagnostic from
Section~\ref{subsec:IPW_diagnostics} to the post-residualized weighted
least squares estimator. More formally, we define $R^2_{0, wLS}$ as:  
$$R^2_{0, wLS} = 1 - \frac{\var_{w}(\hat e_i(0) - \widetilde{\bX}_i^\top
  \gamma_{\ast}^{res})}{\var_{w}(Y_i(0)- \widetilde{\bX}_i^\top \gamma_{\ast})},$$
where we now include covariate adjustments from weighted least squares
regression in our diagnostic. $\hat e_i(0) - \widetilde{\bX}_i^\top \gamma_{\ast}^{res}$ are the
residuals that arise from regressing the residualized outcomes under
control on the pre-treatment covariates in the weighted
regression. Similarly, the quantity $Y_i(0) - \widetilde{\bX}_i^\top \gamma_{\ast}$ are the residuals from regressing the outcomes under control on the pre-treatment covariates. In this way, we are directly comparing the
variance of the outcomes, following covariate adjustment, across the control units.

The interpretation of this value is identical to that of the pseudo-$R^2$
value in the weighted estimator case. A negative estimated $R^2_{0, wLS}$ indicates that residualizing may result
in a loss in efficiency. When the estimated $R^2_{0, wLS}$ is
positive, we expect there to be improvements.  

\section{Extension: Using the Predicted Outcomes as a Covariate}
\label{subsec:proxy}

Thus far, we have discussed residualizing, or directly subtracting the predicted outcome values from the outcomes measured in the experimental sample. An alternative approach is to regress the outcomes measured in the experimental sample on the predicted outcomes $\hat Y_i$ from our residualizing model. In particular, consider including the $\hat Y_i$ as a covariate in a weighted linear regression: 
\begin{equation*}
\left( \hat \tau^{cov}_{W}, \hat \beta, \hat \alpha \right) =
\argmin_{\tau, \beta, \alpha} \frac{1}{n} \sum_{i \in \cS}
\hat w_i \left( Y_i - (\tau T_i + \beta \hat Y_i + \alpha )\right)^2.
\end{equation*}
We can extend this approach to also include pre-treatment covariates:
\begin{equation*}
  \left( \hat \tau^{cov}_{wLS}, \hat \beta, \hat \gamma, \hat \alpha
\right) = \argmin_{\tau, \beta, \gamma, \alpha} \frac{1}{n} \sum_{i
  \in \cS} \hat w_i \left( Y_i - (\tau T_i + \beta \hat Y_i +
  \widetilde{\bX}_i^\top \gamma + \alpha )\right)^2.
\end{equation*}
The residualizing methods we discussed in Sections \ref{sec:recycle_ipw} and \ref{sec:recycle_ipw_cov} can be seen as special cases of these methods when
we set $\beta = 1$. 

Residualizing by directly including $\hat Y_i$ as a covariate in the weighted least squares has many advantages. The primary advantage is that this approach allows researchers to flexibly use proxy outcomes measured in the target population. When the outcome of interest is not measured at the population level, or if the outcomes are measured in different ways across the experimental sample and the observed population data, researchers can estimate the residualizing model $g(\bX_i)$ using alternative proxy outcomes $\tilde{Y}_i$ related to the outcome of interest.  However, this can lead to scaling issues that limit the ability of the weighted and weighted least squares methods for post-residualizing to achieve efficiency gains.  We show how including $\hat Y_i$ as a covariate addresses these concerns.

Additionally, as with our post-residualized estimators $\hat \tau_W^{res}$ and $\hat
\tau_{wLS}^{res}$ discussed in Sections \ref{sec:recycle_ipw} and \ref{sec:recycle_ipw_cov}, both $\hat \tau_W^{cov}$ and $\hat
\tau_{wLS}^{cov}$ are consistent estimators of the PATE
(Section~\ref{subsec:cov_consistent}). Finally, including the predicted outcome $\hat Y_i$ as a covariate protects
against efficiency loss, unlike $\hat \tau_W^{res}$ and $\hat
\tau_{wLS}^{res}$ in the previous sections. This is true whether
researchers rely on a proxy outcome $\tilde{Y}_i$, or if they build
the residualizing model on $Y_i$. 

\subsection{Proxy Outcomes in the Population Data}
There are many settings in which researchers may rely on a proxy
outcome $\tilde Y_i$. First, an outcome measure used to estimate the
residualizing model in the population data may differ from the outcome
measure in the experiment. Second, even when the outcome measure used
to estimate the residualizing model in the population data is in
principle the same measure as the outcome of interest in the
experimental data, there can be differences between $\tilde Y_i$ and
$Y_i$ that may arise due to differences in how the outcomes are
measured or operationalized across the experimental sample and the
population, or when the potential outcomes depend on context.  For
example, this might occur if the population is a mix of  both treatment and control conditions with non-random treatment selection.

\paragraph*{Example: Get-Out-the-Vote (GOTV) Experiments}
Consider Get-Out-the-Vote experiments, again, where we are interested
in the causal effect of a randomized GOTV message on voter turnout,
which is measured by administrative voter files in the United States
\citep[e.g.,][]{gerber_green_2000}.  Imagine, however, that we do not have administrative data available on our population, such as for all voters in the United States, but rather, we have a nationally representative survey. For many nationally representative surveys, it is infeasible to link administrative individual-level voting history data due to privacy issues and data constraints; as such, we do not have access to voter turnout. Instead, surveys often ask voters an ``intent-to-vote'' question, which can proxy for actual voter turnout. Our proposed method can use this ``intent-to-vote'' variable to build a residualizing model.

\paragraph*{Example: Education Experiments}
Imagine that researchers are primarily interested in the causal effect of small class sizes not on long term standardized outcomes such as the SAT, but rather a curriculum-based test score specific to a state collected during a given academic year.  In this case, researchers may not have access to this curriculum-based measure in the state-level population data, but may have
access to related standardized testing scores.  These standardized test scores may be used as a proxy to the curriculum-based test score of interest that is measured in the experimental data when constructing the residualizing model. \\ 

When using proxy outcomes to estimate the residualizing model, efficiency gain will be impacted by how similar the proxy outcomes are to the actual outcomes of interest. More formally, consider the following decomposition of the residuals $\hat e_i$: 
\begin{equation} 
\hat e_i \ \ = \underbrace{Y_i - \tilde Y_i}_{\substack{{\text{(a)}}\\{\text{Difference between}}\\{\text{Outcomes in Experiment}}\\{\text{and Proxy Outcome}}}} + \underbrace{\tilde Y_i - \hat Y_i}_{\substack{{\text{(b)}}\\\text{Prediction Error}\\\text{for Proxy Outcome}}},
\label{eqn:error_decomp} 
\end{equation} 
where we define $\tilde Y_i$ as the proxy outcome. Conceptually,
$\tilde Y_i$ represents the proxy outcome, had it been measured for the experimental data. For example, in the GOTV experiment, $\tilde Y_i$ could represent the ``intent-to-vote'' variable, had it been measured for the experimental sample.

Equation~\eqref{eqn:error_decomp} decomposes the residual term into
two components.  The second component (b) is the model prediction
error. This is driven by how well the chosen residualizing model
$g(\bX_i)$ fits proxy outcomes measured in the population data. The
first component (a) is how similar the proxy outcomes measured in the
population data are to the outcome measures used in the experimental
data. If the proxy outcomes differ substantially from the outcomes
measured in the experimental data, while the post-residualized
weighted estimators will still be consistent (see
Theorem~\ref{thm:consistency_ipw}), there may be losses in efficiency
from residualizing, regardless of how much we are able to minimize the
prediction error in the second term (b). 

\subsection{Consistency}\label{subsec:cov_consistent}
Like the previously proposed post-residualized weighted estimators
$\hat \tau_W^{res}$ and $\hat \tau_{wLS}^{res}$, both $\hat
\tau_{W}^{cov}$ and $\hat \tau_{wLS}^{cov}$ will be consistent
estimators of the PATE.   This follows from the fact that $\hat{Y}_i =
\widehat{g}(\bX_i)$ is just a function of pre-treatment covariates
$\bX_i$. In this sense, we can think of $\hat \tau_W^{cov}$ and $\hat
\tau_{wLS}^{cov}$ as extensions of the weighted least squares
estimator, where $\hat Y_i$ is an additional pre-treatment covariate included in the
weighted linear regression. Thus, as shown in Section \ref{sec:recycle_ipw_cov}, both $\hat \tau_{W}^{cov}$ and $\hat \tau_{wLS}^{cov}$ are consistent estimators of PATE.

\subsection{Efficiency Gain and Diagnostics}
There are two advantages to using $\hat Y_i$ as
an additional covariate. First, because $\hat Y_i$ is treated as a
covariate in a weighted regression, the estimated coefficient (i.e.,
$\hat \beta$) can capture any potential scaling differences between
the proxy outcomes and the actual outcomes of interest. For example,
returning to the Get-Out-the-Vote experiments, intent-to vote is often
measured on a Likert scale, while voter turnout is simply a binary
variable of whether the individual voted or not. In such a scenario,
residualizing directly on $\hat Y_i$ can lead to efficiency loss, despite the fact that intent-to-vote is correlated to voter turnout. 

Second, treating $\hat Y_i$ as a covariate protects against precision loss when the proxy outcomes are significantly different from the outcomes of interest. At worst, $\hat Y_i$ is unrelated to $Y_i$, and we expect the coefficient in front of $\hat Y_i$ to be near zero. When this occurs, we expect the variance of the post-residualized weighted estimator when using $\hat Y_i$ as a covariate to be similar to the variance of a conventional estimator that does not include population-level outcome information. More formally: 
 
\begin{corollary} 
The post-residualized weighted estimator using $\hat Y_i$ as a covariate will be at least as asymptotically efficient as the standard weighted estimator: 
$$\asyvar(\hat \tau_W) - \asyvar(\hat \tau_W^{cov}) \geq 0$$
$$\asyvar(\hat \tau_{wLS}) -\asyvar(\hat \tau_{wLS}^{cov}) \geq 0,$$
This result follows from from \citet{Ding2021Paradox}, who shows that the variance of an estimator that accounts for pre-treatment covariates will be asymptotically less than or equal to the variance of an estimator that does not account for pre-treatment covariates. 
\end{corollary}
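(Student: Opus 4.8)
The plan is to reduce both displayed inequalities to a single fact about covariate adjustment. Relative to $\hat\tau_W$ (respectively $\hat\tau_{wLS}$), the estimator $\hat\tau_W^{cov}$ (respectively $\hat\tau_{wLS}^{cov}$) is obtained by appending exactly one extra regressor, the predicted outcome $\hat Y_i=\hat g(\bX_i)$, to an otherwise identical weighted least squares regression. So it suffices to show (i) that $\hat Y_i$ is a legitimate pre-treatment covariate and (ii) that appending a pre-treatment covariate to such a weighted treatment-effect regression weakly reduces the asymptotic variance of the treatment coefficient. Applying (ii) once within each pair then yields the two inequalities, and (ii) is precisely the content of \citet{Ding2021Paradox}.

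First I would verify (i). Because $\hat g$ is fitted entirely on the population data, which is drawn independently of and kept separate from the experimental sample, conditioning on the population data freezes $\hat g$ into a fixed measurable function; hence $\hat Y_i=\hat g(\bX_i)$ is a deterministic transformation of the pre-treatment covariates $\bX_i$ alone. By Assumption~\ref{assum-exp} the assignment $T_i$ is independent of $(\{Y_i(1),Y_i(0)\},\bX_i)$ within the experiment, so $T_i\perp\hat Y_i$ and $\hat Y_i$ enters on the same footing as any baseline covariate. This is the observation already used in Section~\ref{subsec:cov_consistent} to show both members of each pair are consistent for the same $\tau$, so the comparison is purely one of asymptotic variance.

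Next I would cast the variance comparison so that \citet{Ding2021Paradox} applies. Through the influence-function (sandwich) representation over the sampling distribution $\tilde F$ that already underlies \asyvar in Theorems~\ref{thm:var_compare} and~\ref{thm:eff_gain_wls}, the asymptotic variance of the treatment coefficient can be written as a weighted residual variance of the outcome after projecting out the included regressors. Enlarging the regressor set by $\hat Y_i$ can only shrink that projection residual, and the cited monotonicity result turns this into $\asyvar(\hat\tau_W)\ge\asyvar(\hat\tau_W^{cov})$ and $\asyvar(\hat\tau_{wLS})\ge\asyvar(\hat\tau_{wLS}^{cov})$.

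The step requiring genuine care, and the one I expect to be the main obstacle, is (ii) itself, since the paper has already warned in Section~\ref{sec:recycle_ipw_cov}, following \citet{freedman2008}, that purely additive covariate adjustment can \emph{worsen} asymptotic precision. In a weighted additive regression the fitted coefficient on $\hat Y_i$ minimizes a residual sum of squares weighted by the treatment shares $p$ and $1-p$, whereas the asymptotic variance is governed by the reciprocal weights $1/p$ and $1/(1-p)$; when $p$ is far from $1/2$ and $\hat Y_i$ predicts the treated and control potential outcomes asymmetrically, these objectives disagree and the naive additive estimator need not dominate the unadjusted one. Making the corollary airtight therefore requires invoking \citet{Ding2021Paradox} at exactly the level of generality that rules this out, for instance through the fully treatment-interacted (Lin-type) form of the adjustment, under which the projection-residual argument delivers the weak inequality unconditionally, rather than relying on the informal heuristic that $\hat\beta\to 0$ when $\hat Y_i$ is uninformative, which does not by itself cover an informative but asymmetrically predictive $\hat Y_i$.
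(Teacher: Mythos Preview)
Your approach matches the paper's exactly. The paper offers no separate proof of this corollary in the supplementary materials; the entire argument is the sentence embedded in the statement itself, namely that $\hat Y_i=\hat g(\bX_i)$ is a function of pre-treatment covariates (Section~\ref{subsec:cov_consistent} makes the same point you do in step~(i)) and that the inequalities then follow from \citet{Ding2021Paradox}. You reproduce this logic and supply the conditioning-on-population-data justification the paper leaves implicit.

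Your final paragraph, however, goes further than the paper and flags a genuine subtlety. The estimators $\hat\tau_W^{cov}$ and $\hat\tau_{wLS}^{cov}$ are purely additive in $\hat Y_i$, not Lin-interacted, and the paper itself warns in Section~\ref{sec:recycle_ipw_cov} (citing \citet{freedman2008}) that additive adjustment can worsen asymptotic precision. You correctly diagnose the source of the tension: the weighted least-squares $\beta$ minimizes an objective that weights the two arms by $p$ and $1-p$, while the asymptotic variance in Corollary~\ref{cor:wLS_var} weights them by $1/p$ and $1/(1-p)$, so the projection-residual heuristic does not immediately close the gap. The paper's only answer is the bare citation to \citet{Ding2021Paradox}; it does not spell out which result in that reference covers the weighted, non-interacted case, nor reconcile it with the Freedman caveat it raised a few pages earlier. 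Your instinct that making this airtight requires invoking the cited result at precisely the right level of generality is the more careful reading, and is a point the paper does not address.
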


To account for whether or not the re-scaled predicted outcomes sufficiently explain enough of the variation in the experimental sample, we extend our previously proposed diagnostic measures to the proxy outcome setting. To do so, we propose using sample splitting across the control units in the experimental sample. We regress $\hat Y_i$ on the control outcomes $Y_i$ across one subset of the sample. This allows us to estimate $\hat \beta$. Then using $\hat \beta$, we can estimate residuals, accounting for the scaling factor (i.e., $Y_i - \hat \beta \hat Y_i$), across the held out sample, and calculate the $\hat R^2_0$ and $\hat R^2_{0, wLS}$ diagnostics from before. We finally conduct cross-fitting, i.e., repeating the same procedure by flipping the role of training and test data and then averaging diagnostics from both sample splitting.

\subsection{When to worry about external validity} 
When diagnostic measures indicate that post-residualized weighting is
unsuitable for the data at hand, it is important to understand why. In
particular, Equation~\eqref{eqn:error_decomp} shows that
efficiency loss could occur from (1) the residualizing model's
prediction error, and (2) the difference between the outcomes in the
population and the outcomes measured in the experimental sample.  Low
diagnostic values indicate that post-residualizing methods may not
provide efficiency gains, however it may also be indicative of
contextual differences in the potential outcomes, which affect the
validity of the PATE estimate.

The residualizing model's prediction error, from
equation~\eqref{eqn:error_decomp}-(b), can be estimated through cross
validation using the population-level data. Researchers can hold out random subsets of the population-level data when estimating the residualizing model and calculate the prediction error across the held out sample. If the
cross validated error is large, there will likely be little to no efficiency gains from using post-residualized weighting due to poor prediction, even if the true outcome $Y_i$ is used to estimate $\hat g$.  The difference between the outcomes $Y_i$ and the proxy outcome $\tilde Y_i$, from equation~\eqref{eqn:error_decomp}-(a), can be estimated when the proxy outcome is also measured in the experimental sample.  For example, in the Get-Out-the-Vote experiments, researchers may have voters' intent-to-vote in the experimental sample. Alternatively, in the education experiments, researchers could measure both the curriculum-based test score and the standardized test score in the experimental sample.

In settings where $\tilde{Y}_i$ is not measured in the experimental data, researchers can still use the proposed diagnostic measures to determine if there are concerns about generalizability. For example, if the cross validated prediction error is low, but the diagnostics indicate that post-residualized weighting will not improve efficiency, then this indicates that the residualizing model predicts the population outcomes well, but does not predict outcomes measured in the experiment well. This could be due to two problems. First, if the population outcome is a proxy measure of the outcome measured in the experimental sample, then it could be that the measure used in the population data is not a good proxy for the experimental outcome. Alternatively, if researchers believe that the experimental and population outcomes are measured in the same way, then a low or negative $R^2_0$ measure, in conjunction with low cross validated prediction error, would indicate that the outcome-covariate relationships in the population are considerably different from the outcome-covariate relationships in the experimental sample. In this case, there may be limited external validity of the experiment due to a failure of the consistency of parallel studies assumption, since the potential outcomes may depend on context (see \cite{egami2020elements} for more discussion).

\section{Simulation} \label{sec:sims}
We now run a series of simulations to empirically examine the proposed
post-residualizing method. In total, we consider four different
data-generating scenarios, based on the following model for the
potential outcomes under control: 
\begin{align*}
Y_i(0) \ = \ & \beta_1 X_{1i} + \beta_2 X_{2i} + \gamma_1 X_{1i}^2 + \gamma_2 \sqrt{|X_{2i}|} + \gamma_3 \big(X_{1i} \cdot X_{2i}) \\ 
& \ + \ \beta_S \cdot (1-S_i) \cdot (\alpha + \beta_3 X_{1i} + \gamma_4 X_{1i} \cdot X_{2i}) + \varepsilon_i,
\end{align*}
where $(X_{1i}, X_{2i})$ are observed pre-treatment covariates, and
$S_i \in \{0,1\}$ is a binary indicator variable, taking the value of
one when unit $i$ is in the experimental data, and taking the value of
zero when unit $i$ is in the population data. $\beta_S$ controls for
differences between the experimental sample and population data
outcomes, and the
$\gamma$ terms dictate the nonlinearity of the data generating processes.

We then define the treatment effect model as follows: 
$$\tau_i = \alpha_\tau + X_{\tau,i},$$ 
where $X_{\tau, i}$ is an observed pre-treatment covariate that
governs treatment effect heterogeneity. Therefore, the observed
outcomes take on the following form: $Y_i = Y_i(0) + \tau_i \cdot
T_i$. We provide additional details, including the sampling model and
distributions of observed covariates in Supplementary Materials~\ref{sec:appendix_sims}. 

The first two scenarios test the method when the outcome measures for
both the experimental sample and the population data are drawn from
the same underlying data generating process, to explore a setting
where the outcome is measured identically across the experiment and
target population (i.e., $\beta_S = 0$). The third and fourth
scenarios use different data generating processes to
simulate the experimental sample and population data outcome measures
(i.e., $\beta_S \neq 0$). This mimics the setting in which researchers
use a proxy 
outcome. For both of these settings, we consider a version of the data
generating processes that is linear in the included covariates, and a
second version that contains nonlinearities. Table
\ref{tbl:sim_scenarios} provides a summary of the different scenarios.  

\begin{table}[t]
  \centering
  \caption{\textbf{Summary of Different Simulation Scenarios}} 
  \label{tbl:sim_scenarios} 
  \begin{tabular}{lcc} \toprule
    & Proxy and Experimental Sample Outcomes & DGP Type   \\ \midrule
    Scenario 1 & Identical DGP ($\beta_S = 0$)                              & Linear ($\gamma_{\circ} = 0$) \\
    Scenario 2 & Identical DGP ($\beta_S = 0$)                              & Nonlinear ($\gamma_{\circ} \neq 0$) \\
    Scenario 3 & Different DGP ($\beta_S \neq 0$)               & Linear     ($\gamma_{\circ} = 0$) \\
    Scenario 4 & Different DGP ($\beta_S \neq 0$)               & Nonlinear ($\gamma_{\circ} \neq 0$) \\ \bottomrule
  \end{tabular}
\end{table}

We compare conventional and post-residualized versions of two sets of
estimators in each simulation. We perform post-residualizing in two
different ways: the first directly residualizes the outcomes in the
experimental sample by subtracting the predicted outcomes, and the
second treats the predicted outcomes as a covariate in a weighted
regression. Therefore, we compare a total of six different estimators:
(1) the weighted estimators $\hat \tau_W$, $\hat \tau_W^{res}$, $\hat
\tau_W^{cov}$, and (2) weighted least squares (wLS) $\hat \tau_{wLS}$,
$\tau_{wLS}^{res}$, and $\hat \tau_{wLS}^{cov}$. The
difference-in-means estimator (DiM) is also provided as a baseline with no
weighting adjustment.

The underlying sampling process is governed by a logit model. At each
iteration of the simulation we draw both a biased experimental sample and a
random sample of a larger target population, the population data. The population data is used to estimate the residualizing model
and sampling weights.  We use entropy balancing to estimate the
sampling weights $\hat w_i$ for each simulation. Our residualizing
model is a regression that contains all the pair-wise interactions of
the included covariates. The weighted least squares regression
includes covariates additively without any interactions.\footnote{It is possible in
  practice to include nonlinear transformation of pre-treatment
  covariates in the
  regression adjustment step. However, we have omitted it to
  illustrate the efficiency gains that can be obtained from accounting
  for nonlinearities through the residualizing step. This mimics how,
  in practice, we are able to fit more complex models to more data.}

\begin{figure}[!ht]
\centering
\includegraphics[width=5in]{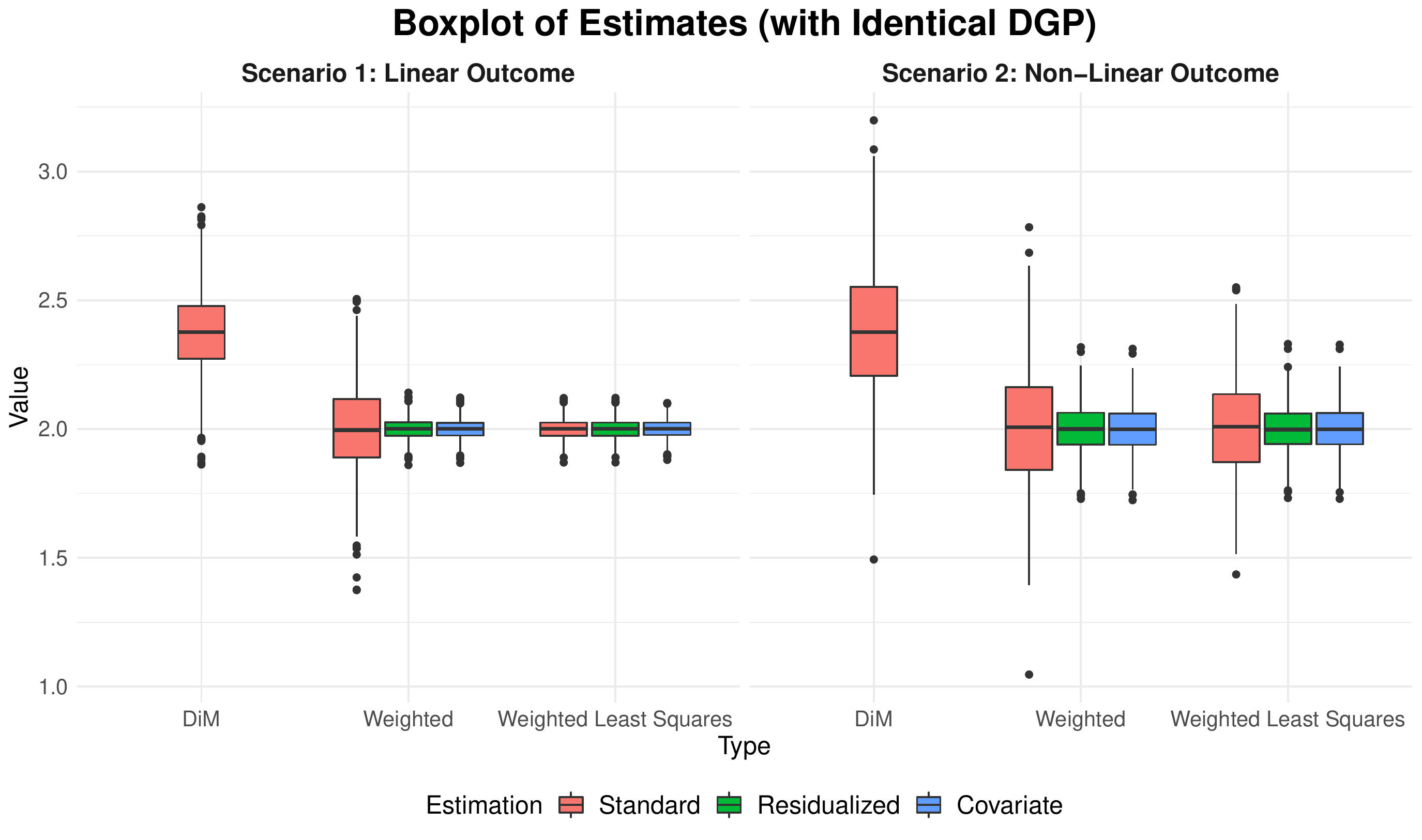}
\caption{Summary of estimates across 1,000 simulations for Scenarios 1 and 2, in which the experimental sample and population outcomes are drawn from the same data generating process. The dotted line represents the super-population PATE.}
\label{fig:boxplot}
\end{figure}

Overall, we find that when the underlying outcome model is complex and contains nonlinear terms, our post-residualizing method exhibits large precision gains compared to conventional methods. When there is no difference between the population-level outcomes and the outcomes in the experimental sample, seen in Figure~\ref{fig:boxplot}, direct residualizing and including $\hat Y_i$ as a covariate performs identically. 

\paragraph*{Scenario 1} When we consider a linear DGP, residualizing
results in substantial precision gains for the weighted
estimator. However, for the weighted least squares estimator,
residualizing does not result in precision gains, because the
covariate adjustment taking place in the weighted regression already
includes the linear terms in the data generating process, and thus,
the residualizing step does not model anything in the outcomes that is
not already accounted for in the wLS regression. 

\paragraph*{Scenario 2} When we include nonlinear terms into the data
generating process, residualizing results in precision gains for all
of the estimators, because the residualizing model is able to account for
some of the nonlinearities that the wLS regression does not account
for. It is worth noting that the estimated residualizing model is not
a correct specification of the underlying outcome model for the
population data.  However, because we have included the pairwise interactions between the covariates, the residualizing model is able to significantly reduce the variance for both estimators, even without accounting for all of the nonlinear terms in the underlying data generating process.

\paragraph*{Scenarios 3 and 4} Next we consider a difference in the
underlying data generating process between the experimental and
population outcomes, presented in Figure~\ref{fig:rmse_beta_s}.  We
operationalize this by including an interaction between treatment, the
sampling indicator, and covariates. The degree to which the two
processes differ is varied across different simulations using a single
parameter, $\beta_S$. When the difference is relatively small
(i.e. small $|\beta_S|$), the two methods used to residualize the
experimental sample outcomes perform identically. This is evident by a
lower RMSE when $|\beta_S| < 2$ for the post-residualized weighted
estimators. When the difference in the DGP are large (i.e., $|\beta_S| > 2$), residualizing by directly subtracting the outcomes from the predicted outcomes results in precision loss, evident by a larger RMSE for the
post-residualized weighted estimator $\hat{\tau}_W^{res}$,
and for the post-residualized weighted least square estimator $\hat{\tau}_{wLS}^{res}$ when the true DGP is
nonlinear. However, treating the predicted outcomes as a covariate in
a weighted linear regression $\hat{\tau}_W^{cov}$ and
$\hat{\tau}_{wLS}^{cov}$ allows for precision gain, even in these
settings. We see that at worst, the covariate-based residualizing
approach performs equivalently to the conventional
estimators. 

It is important to highlight that regardless of the degree of divergence between the population and experimental sample DGP’s, post-residualized weighting is able to maintain nominal coverage. Furthermore, our proposed diagnostic measures adequately capture when we expect to gain or lose precision from residualizing. We provide coverage results and a summary of the diagnostic performance in the Supplementary Materials~\ref{sec:appendix_sims}.

\begin{figure}[t]
\centering
\includegraphics[width=4in]{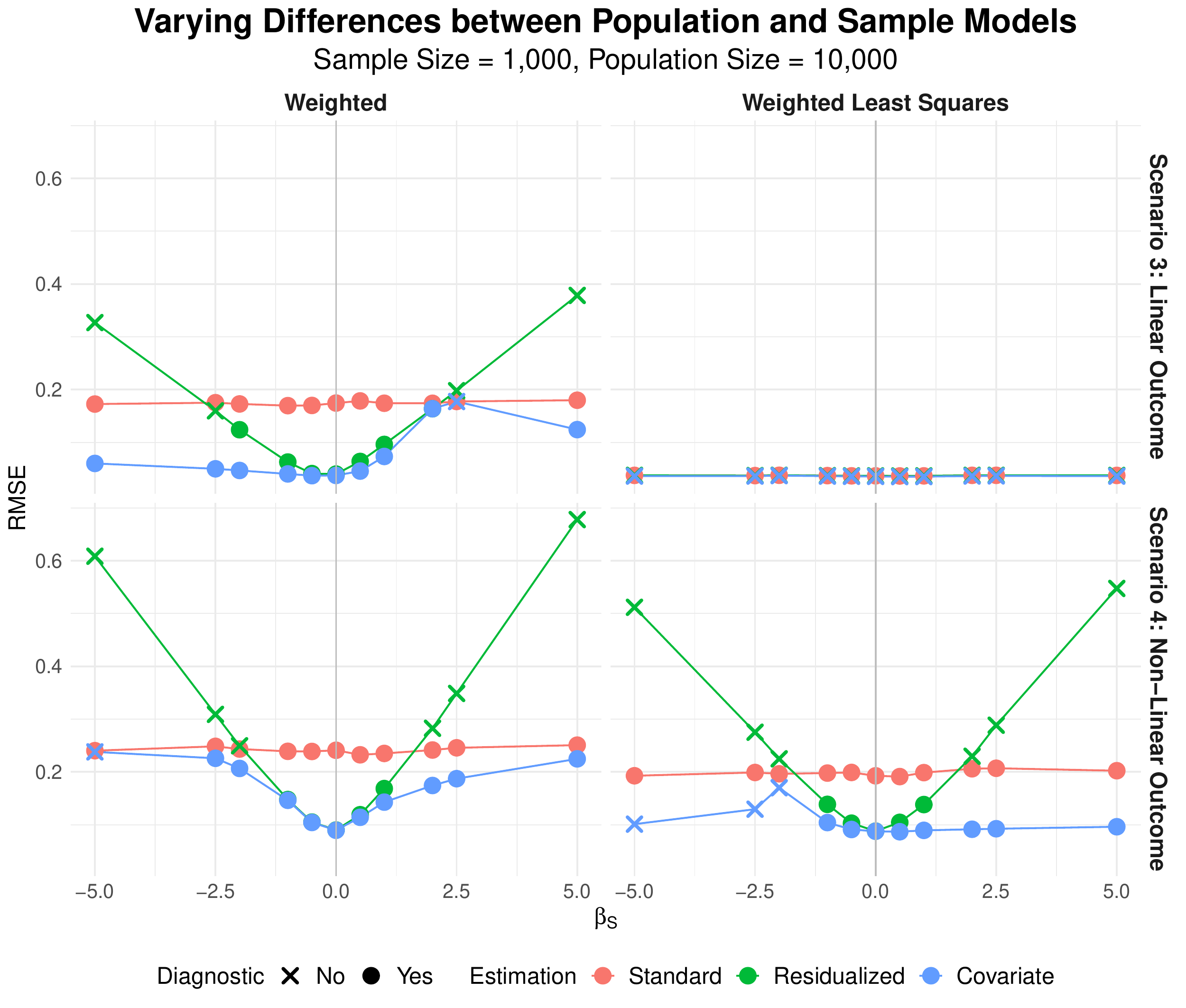}
\caption{Plot of RMSE of the different estimators for Scenarios 3 and
  4, in which the experimental sample and population outcomes are
  drawn from different data generating processes. $\beta_S$ controls for how different the two processes are (i.e., the larger $|\beta_S|$ is, the
  larger the difference is between the two processes).  The standard estimators are presented in red and the residualized estimators in blue and green. We label all the points for which the diagnostic measure estimates a loss in efficiency ($\times$) or gain ($\bullet$) from residualizing more than 50\% of the time in the 1,000 iterations.} 
\label{fig:rmse_beta_s}
\end{figure} 

\section{Empirical Evaluation: Job Training Partnership Act} \label{sec:empirical} 
To evaluate and benchmark how our proposed post-residualizing method
may work in practice, we now turn to an empirical application. Recall that, while the original study evaluated the overall impact of JTPA, our focus is on generalizing the effect of each site individually to the other 15 sites. More specifically, in our leave-one-out analysis for each site, we define the PATE as the average treatment effect among units in the remaining 15 sites.  We then generalize the experimental results from
one site to the population defined by the pooled remaining sites. This
allows us to benchmark our method's performance by comparing our PATE
estimators to the pooled experimental benchmark in the remaining
sites. We evaluate generalizability for two outcomes: employment status
(binary outcome) and total earnings (zero-inflated, continuous
outcome).

\subsection{Post-Residualized Weighting}
\subsubsection{Residualizing model} We include baseline covariates measured in the interview stage of the JTPA study. The covariates include measures of age, previous earnings, marital status, household composition, public assistance history, education and employment history, access to transportation, and ethnicity. More details about the pre-treatment covariates can be found in Supplementary Materials~\ref{app:empirical}. 

We construct our residualizing model using an ensemble method, the
\emph{SuperLearner} \citep{van2007super}. The ensemble model contains
the Random Forest, with varying hyperparameters, and the LASSO, with
hyperparameters chosen using cross validation. This allows us to
capture nonlinearities in the data through the Random Forest, as well
as linear relationships using the LASSO \citep{van2007super}.  We build separate models for the probability of employment and total earnings.  We fit our residualizing model on the control units from the target population. Details can be found in Supplementary Materials~\ref{app:empirical}.

\subsubsection{Estimators} We estimate the PATE using two different
estimators: the weighted estimator and the weighted least squares
estimator (wLS). For each estimator, we consider the conventional
estimators ($\widehat{\tau}_W$ and $\widehat{\tau}_{wLS}$), the post-residualized estimators directly subtracting the
predicted outcomes from the outcomes in the experimental sample
($\widehat{\tau}^{res}_W $ and $\widehat{\tau}^{res}_{wLS}$), and
the post-residualized estimators using the predicted outcomes as a
covariate ($\widehat{\tau}^{cov}_{W}$ and $\widehat{\tau}^{cov}_{wLS}$). Sampling weights are estimated using entropy balancing in
which we match main margins for age, education, previous earnings,
race, and marital status \citep{hainmueller2012entropy}. Our weighted
least squares (wLS) estimators include age, education level, and
marital status as controls. Standard errors are estimated using
heteroskedastic-consistent standard errors (HC2). 

\subsubsection{Diagnostics} For each site, we compute the pseudo-$R^2$ diagnostics. This can be done directly for the post-residualized weighted and weighted least squares estimators. When treating $\hat Y_i$ as a covariate, we use sample splitting to estimate the pseudo-$R^2$ values. Because some of the experimental sites comprise of relative few units (i.e., the experimental site of Montana contains only 38 units total), we perform repeated sample splitting, taking the average of the diagnostic across the repeated splits \citep{jacob2020cross, chernozhukov2018generic}.

\subsection{Results}
\subsubsection{Bias}\label{sec:empirical_bias}
Because the conventional estimators and our proposed approach rely on the same identification assumptions, we first want to verify that the overall bias in the PATE estimation is not affected by the post-residualized weighting step. Across all 16 sites, the point estimates from post-residualized weighting do not change substantially from standard estimation approaches. Even in experimental sites in which it may not be advantageous to perform post-residualized weighting for efficiency gains, point estimates from post-residualized weighting methods are close to those from the conventional weighting estimators. We report the mean absolute error for all 16 sites in Supplementary Materials Table~\ref{tbl:jtpa_coverage}. 

\subsubsection{Diagnostics}\label{sec:empirical_diag}
To evaluate whether the post-residualized weighting estimators provide efficiency gains over conventional approaches, we estimate our diagnostics. Supplementary Materials Table~\ref{tab:diag} 
summarizes the performance of the diagnostic measures across all 16 sites for both earnings and employment. 

On average, we see that the proposed diagnostic measures are able to adequately capture when researchers should expect precision gains from residualizing. The $\hat R^2_0$ diagnostic has a high true positive rate for both directly residualizing and using $\hat Y_i$ as a covariate. As such, when the diagnostic measures indicate that researchers should residualize, residualizing results in precision gains. In the case when we are directly residualizing, the diagnostic measure also has a relatively high true negative rate, which implies that when $\hat R^2_0 < 0$, there is a loss in precision from directly residualizing. In the case of including $\hat Y_i$ as a covariate, there is a greater false negative rate, as the diagnostic tends to be more conservative in this setting. This is especially noticeable when employment is the outcome. Many of the false negatives here correspond to estimated $\hat R^2_0$ values that are negative, but very close to zero.

\subsubsection{Efficiency Gain}
Results on the efficiency gains to post-residualized weighting are summarized in
Table~\ref{tbl:jtpa_estimator_perf}, and graphically displayed in Figure~\ref{fig:int_comparison}. Restricting our attention to the sites for which the $\hat R^2_0$ values are greater than zero, there is a large reduction in variance overall from residualizing. When directly residualizing, for earnings, residualizing results in a 21\% reduction in estimated variance for the weighted estimator and a 12\% reduction for the weighted least squares estimator. For employment, directly residualizing leads to a 10\% reduction in estimated variance for the weighted estimator and a 5\% reduction for the weighted least squares estimator. 

\begin{table}[t]
\centering 
\textbf{Summary of Standard Errors across Experimental Sites Subset by Diagnostic} \\   \vspace{2mm} 
\resizebox{\textwidth}{!}{%
\begin{tabular}{l|ccc|ccc} \toprule
& \multicolumn{3}{c|}{\underline{Earnings}} & \multicolumn{3}{c}{\underline{Employment}} \\ 
& \begin{tabular}{@{}c@{}}Number\\of Sites\end{tabular} & Conventional& \begin{tabular}{@{}c@{}}Post-Resid. \\ Weighting\end{tabular}
& \begin{tabular}{@{}c@{}}Number\\of Sites\end{tabular} & Conventional& \begin{tabular}{@{}c@{}}Post-Resid. \\ Weighting\end{tabular}\\\midrule
\hspace{3mm} \underline{Weighted} & & & &\\
$\qquad$ Direct Residualizing & 10 & 2.42 & 2.13 & 11  & 8.33 & 7.81  \\ 
$\qquad$ $\hat Y_i$ as Covariate & 7  & 2.17 & 1.86 & 1  & 5.58 & 5.01\\ [0.5em]
\hspace{3mm} \underline{Weighted Least Squares}  & & & &\\ 
$\qquad$ Direct Residualizing & 12 & 2.71 & 2.56 & 11  & 7.88 & 7.64\\ 
$\qquad$ $\hat Y_i$ as Covariate & 7 & 1.87 & 1.71 & 1 & 5.56 & 5.45\\ \midrule
\end{tabular}
}
\caption{Summary of gains to post-residualized weighting.  Columns 1 and 4 give the number of sites for which the diagnostic measure indicates gains to post-residualized weighting.  The average standard error among selected sites are presented for the conventional estimators (columns 2 and 5) and post-residualized estimators (columns 3 and 6).}
\label{tbl:jtpa_estimator_perf}
\end{table}

When using $\hat Y_i$ as a covariate, we see that including the predicted outcomes as a covariate results in a 25\% reduction in variance for the weighted estimator and 16\% reduction for weighted least squares when earnings is the outcome. For employment, adjusting for the predicted outcomes results in a 9\% reduction in variance for the weighted estimator, and a 4\% reduction for the weighted least squares. 

\begin{figure}[!t]
\centering
\includegraphics[width=5in]{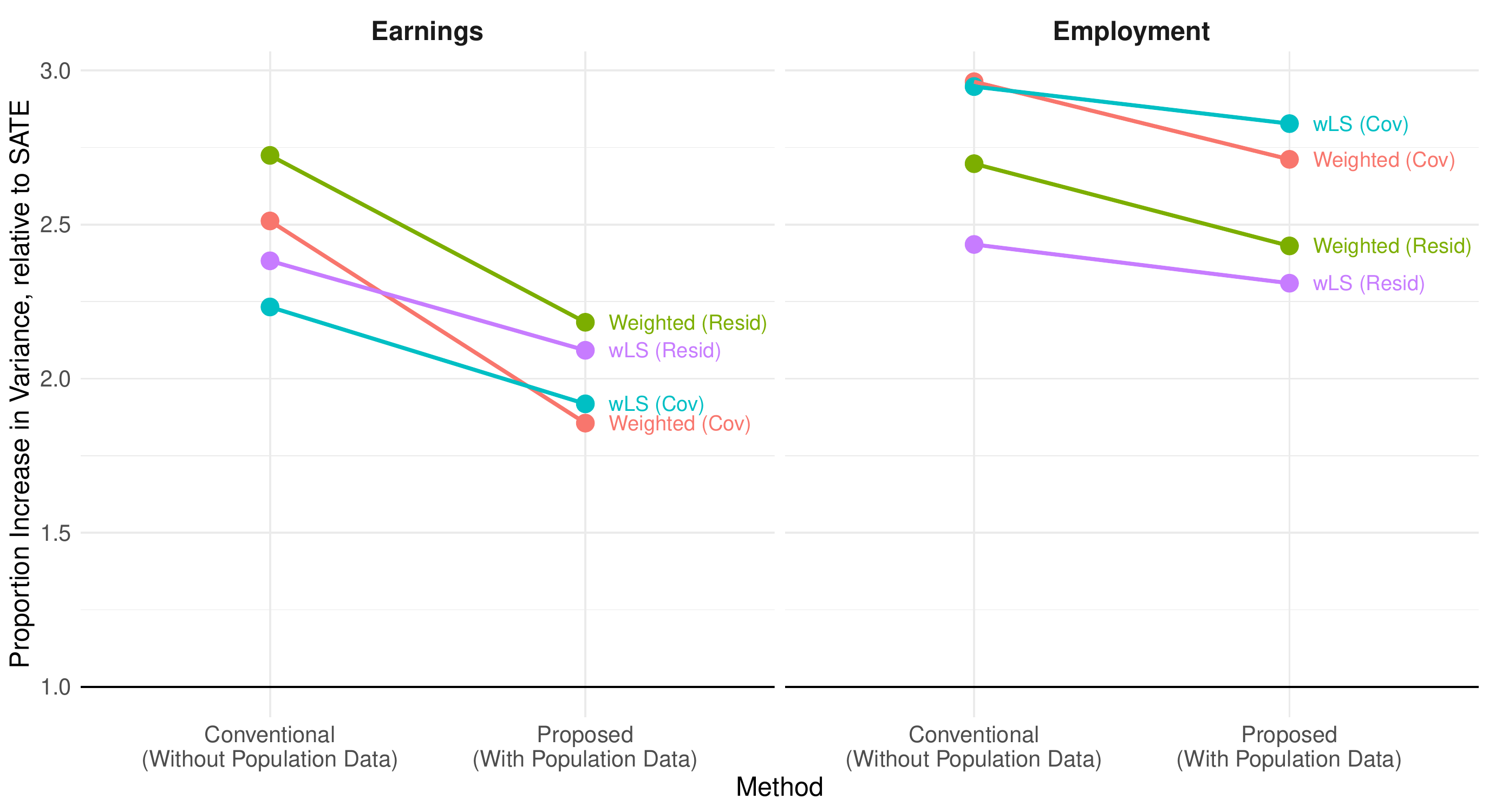}
\caption{Reduction in Variance from using post-residualized weighting. This is
  restricted to the set of sites for which the diagnostic measure
  indicated that we should expect precision gains from
  residualizing. We calculate the variance of the estimators, relative to the variance of the difference-in-means (DiM) estimator. We can interpret the $y$-axis as the amount of variance inflation that is incurred from generalization, and see that using the proposed method of incorporating population data can allow us to offset some of the precision loss incurred from re-weighting.}
\label{fig:int_comparison}
\end{figure} 

There are several takeaways to highlight. First, we see that directly residualizing the outcomes can result in significant precision gain. In particular, the reduction in variance in the post-residualized weighted least squares demonstrates the advantage residualizing has over just using regression adjustment. Second, the larger reduction in variance from using $\hat Y_i$ as a covariate underscores the value of being able to capture the scaled relationships between the outcomes in the population data and in the experimental sample. 

Figure~\ref{fig:int_comparison} shows the relative variance of the PATE estimators to the unweighted SATE.  It is well known that PATE estimators typically have higher variance than the SATE
\citep{miratrix_2018}, however we see that with the post-residualized method, some of the precision loss incurred from the weighted PATE estimators can be offset. Table \ref{tbl:jtpa_estimator_perf} provides a summary of the standard errors of the PATE estimators, relative to the difference-in-means estimators. 

\section{Conclusion} 

Ever since researchers raised concerns over external validity of
experiments \citep{campbell1963experimental}, researchers have also
worked on how to estimate population causal effects 
with weighting estimators \citep{cole2010generalizing,
  buchanan2018generalizing, hartman2015sample,
  dahabreh2019generalizing}.  These estimators, while unbiased under
Assumptions \ref{assum-exp} -- \ref{assum-pos}, typically have high
variance, especially if some sampling weights are extreme \citep{miratrix_2018}, making it difficult for policymakers and practitioners to draw conclusions about the impact of treatment in the target population.

In this paper we introduce post-residualized weighting, which
solves an important problem for practitioners by improving precision in estimation of population treatment effects.  To do this, we leverage outcome data measured in the target population, valuable information not incorporated by current methods.  Our proposed
method first builds a flexible model using population outcome and
covariate data, which is then used to residualize the experimental
outcome data.  We show that post-residualized weighting estimators,
which rely on residualized outcomes, are consistent for the PATE under
the same identifying assumptions as current methods. However, by
utilizing residualized outcomes, the post-residualized weighting
estimators can obtain large precision gains over conventional
approaches.  We propose three classes of post-residualized weighting
estimators: a weighting estimator using the residualized experimental
outcomes; a weighted least squares estimator based on the residualized
experimental outcomes; and
an extension of weighted least squares in which the predicted values
of the residualizing model are included as a covariate. 

Our proposed framework has many advantages.  As discussed in
Section~\ref{sec:post_res_weighted}, the residualizing model,
$g(\mathbf{X}_i)$, is an ``algorithmic model,'' which merely needs to
adequately predict the outcomes measured in the experiment, but does
not need to be correctly specified.  This allows researchers a great
deal of flexibility in constructing it. In Section \ref{subsec:proxy}
we discuss how researchers can leverage proxy outcomes that are
correlated with, but different from, the outcome measured in the
experimental setting.  Finally, we provide diagnostic measures, based
on the outcomes measured among experimental controls, that allow
researchers to determine whether post-residualized weighting will
likely improve precision in estimating the PATE.

We evaluate our three post-residualized estimators through simulation studies and an empirical application.  Our simulations show significant precision gains from post-residualized weighting, and confirm the performance of the diagnostic measure to differentiate when researchers should expect precision gains from post-residualized weighting.  We also find that including the predicted outcomes as a covariate ensures that post-residualized weighting does not hurt precision.

In our re-evaluation of the impact of the Job Training Partnership Act (JTPA), we use the multi-site nature of the experiment to benchmark the performance of our estimators relative to common methods using a within study comparison approach.  We evaluate two outcomes, employment and earnings.  We find that the post-residualized methods result in a 5-25\% average reduction in variance, and that confidence intervals maintain nominal coverage.  In particular, we achieve the most significant gains from including the predicted outcomes as a covariate, underscoring the value of this method when scaling issues may be present in the relationship between the outcomes in the population data and in the experimental sample.  Finally, our diagnostic measures accurately capture when the post-residualized estimators result in precision gains in estimation of the PATE.

\printbibliography

\newpage
\appendix
\newrefsection
\setcounter{page}{1}
\begin{center}
    \singlespacing
    \LARGE
    \textbf{Supplementary Materials:} \\ Leveraging Population Outcomes to Improve \\ the Generalization of Experimental Results
\end{center}

\setcounter{table}{0}
\renewcommand{\thetable}{A\arabic{table}}

\setcounter{figure}{0}
\renewcommand{\thefigure}{A\arabic{figure}}

\setcounter{equation}{0}
\renewcommand{\theequation}{A\arabic{equation}}

\section{Proofs and Derivations} \label{app:proofs} 

\subsection{Derivation of Variance Terms} 

Consider a countably infinite population of $(\bX_i, Y_i(t)) \sim F$, where
$t\in\{0,1\}$, with density $dF(\bX_i, Y_i(t))$. This is our target
population. We define the sampling distribution for the experimental
data to be $(\bX_i, Y_i(t)) \sim \tilde{F}$ with density
$d\tilde{F}(\bX_i,Y_i(t))$. Because we consider settings where the
selection into the experiment from the target population is biased, $F
\neq \tilde{F}.$ Let $\mathcal{S}$ be the set of all indices for all units sampled in the experimental sample. As we can consider the treatment and control groups to be independent samples from an infinite population, we will focus below on one potential outcome $Y_i(t).$

We defined a relative density in equation~\eqref{eq:r-d}
as follows.
\begin{equation*}
  \pi(\bX_i)  =  \frac{d\tilde{F}(\bX_i)}{dF(\bX_i)}.
\end{equation*}
over the support of $F$, where $dF(\bX_i) > 0$. The $\pi(\bX_i)$ is our infinite analog to the sampling propensity score.  It scales our distribution. We further assume that $\pi(\bX_i) >  0$ (this is an overlap assumption, saying our realized sampling distribution is not missing parts of the underlying distribution). $\pi(\bX_i)$ captures the relative density of our realized distribution to the real population. Smaller $\pi(\bX_i)$ correspond to areas where there is a lot more in the target population than in our sample. Larger $\pi(\bX_i)$ are where we are over-sampling. \\

\noindent We assume known weights for any unit, dependent on $\bX_i$, with $w_i = \kappa / \pi( \bX_i )$ (the $\kappa$ is a fixed constant allowing our weights to be normalized on some arbitrary scale).

For the remainder of the Supplementary Materials, the distribution over which a
quantity is computed will be denoted by subscript. For example, the
expectation over the realized sampling distribution will be written as
$\E_{\tilde F}(\cdot)$, while the expectation over the target
population will be written as $\E_F(\cdot)$.  

\begin{lemma}[Variance of a H\'{a}jek estimator] \label{lem:var_hajeck} 
Define $\hat \mu$ as a H\'{a}jek estimator:
$$\hat \mu_t = \frac{\sum_{i \in \cS} w_i Y_i(t)}{\sum_{i \in \cS} w_i},$$ 
where consistent with before, $w_i = \kappa/\pi(\bX_i)$, and $(\bX_i,
Y_i(t)) \sim \tilde F$. The approximate asymptotic variance of a
H\'{a}jek estimator is: 
$$\asyvar_{\tilde F}(\hat \mu_t) \approx \int \frac{1}{\pi(\bX_i)^2} (Y_i(t) - \mu_t)^2 d \tilde F(\bX_i, Y_i(t)),$$
where the asymptotic variance is being taken with respect to the realized sampling distribution, and $\mu_t = \E_{F}(Y_i(t))$ (i.e., the expected value of $Y_i(t)$ over the target population). 
\end{lemma}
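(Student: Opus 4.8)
The plan is to treat $\hat\mu_t$ as a ratio of two sample averages and apply the delta method (a first-order Taylor linearization), which is what the ``$\approx$'' in the statement signals. Write $\hat\mu_t = \bar A_n / \bar B_n$ with $\bar A_n = \tfrac1n\sum_{i\in\cS} w_i Y_i(t)$ and $\bar B_n = \tfrac1n\sum_{i\in\cS} w_i$, and recall that under the reduction described at the start of this appendix the summands are i.i.d.\ draws of $(\bX_i, Y_i(t)) \sim \tilde F$.

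First I would compute the two relevant means under $\tilde F$. The key device is the change of measure built into the definition of $\pi$ in equation~\eqref{eq:r-d}: since $\pi(\bX_i) = d\tilde F(\bX_i)/dF(\bX_i)$, we have $\pi(\bX_i)^{-1}\, d\tilde F(\bX_i) = dF(\bX_i)$ over the support where $dF>0$. Combining this with the sampling ignorability of Assumption~\ref{assum-ind} (so that $d\tilde F(Y_i(t)\mid \bX_i) = dF(Y_i(t)\mid \bX_i)$), a direct integration gives $\E_{\tilde F}[w_i] = \kappa$ and $\E_{\tilde F}[w_i Y_i(t)] = \kappa \mu_t$, where crucially the expectation of the outcome lands on the \emph{target population} value $\mu_t = \E_F[Y_i(t)]$ rather than on its $\tilde F$ analog. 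By the law of large numbers $\bar B_n \cip \kappa$ and $\bar A_n \cip \kappa\mu_t$, so $\hat\mu_t \cip \mu_t$ and the linearization is centered correctly.

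Next I would apply the delta method to $g(a,b) = a/b$ at $(\kappa\mu_t,\kappa)$, where $\nabla g = (1/\kappa,\ -\mu_t/\kappa)$. The scaled asymptotic variance then collapses to $\kappa^{-2}\var_{\tilde F}(A_i - \mu_t B_i)$ with $A_i = w_i Y_i(t)$ and $B_i = w_i$, and since $A_i - \mu_t B_i = w_i\bigl(Y_i(t) - \mu_t\bigr)$, this equals $\kappa^{-2}\var_{\tilde F}\bigl(w_i(Y_i(t)-\mu_t)\bigr)$. Because $\E_{\tilde F}[w_i(Y_i(t)-\mu_t)] = \kappa\mu_t - \mu_t\kappa = 0$ from the mean computations above, the variance reduces to a raw second moment, $\kappa^{-2}\,\E_{\tilde F}[w_i^2(Y_i(t)-\mu_t)^2]$. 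Substituting $w_i = \kappa/\pi(\bX_i)$ cancels the $\kappa^2$ (confirming the expected invariance to the weight scale) and yields exactly $\int \pi(\bX_i)^{-2}(Y_i(t)-\mu_t)^2\, d\tilde F(\bX_i, Y_i(t))$, as claimed.

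The main obstacle is less the algebra than two conceptual points: (i) the change-of-measure step, which is precisely what turns the numerator mean into an expectation over $F$ instead of $\tilde F$, and which quietly relies on the overlap condition $\pi(\bX_i) > 0$ for the weights and integrals to be well defined; and (ii) verifying that the cross term vanishes, which is what makes the centered form $w_i(Y_i(t)-\mu_t)$ self-normalizing. A fully rigorous version would additionally note that the higher-order remainder in the ratio expansion, once scaled by $\sqrt n$, is $o_p(1)$ because $\bar B_n \cip \kappa > 0$; this is the gap absorbed by the ``$\approx$''.
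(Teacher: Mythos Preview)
Your proposal is correct and matches the paper's approach essentially step for step: both write $\hat\mu_t$ as a ratio of sample averages, compute the numerator and denominator means via the change of measure $\pi(\bX_i)^{-1}d\tilde F = dF$, linearize with the delta method for $g(a,b)=a/b$, and reduce to the variance of $w_i(Y_i(t)-\mu_t)$ before canceling $\kappa^2$. If anything, you are slightly more explicit than the paper in flagging the role of Assumption~\ref{assum-ind} in the change-of-measure step and in noting that the mean-zero property is what lets the variance be written as a raw second moment.
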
 
\begin{proof} 
To begin, we write the H\'{a}jek estimator as a ratio estimator of the following form: 
\begin{align*} 
\hat{\mu}_t &=  \frac{\sum_{i \in \cS} w_i Y_i(t)}{\sum_{i \in \cS} w_i}\\ 
            &= \frac{\frac{1}{n} \sum_{i \in \cS} w_i Y_i(t) }{ \frac{1}{n} \sum_{i \in \cS} w_i }
\end{align*}
where we define $n$ to be the sample size, i.e., $n = |\cS|$.

We then define $\hat{A} = \frac{1}{n} \sum_{i \in \cS} w_i
Y_i(t)$ and $\hat{B} = \frac{1}{n} \sum_{i \in \cS} w_i$ for
notational simplicity. If we define $A = \E_{\tilde F}
(\hat{A}),$ $A = \kappa \mu_t.$ Similarly,  if we define $B = \E_{\tilde F}
(\hat{B}),$ $B = \kappa.$

\noindent To derive the variance expression, we will use the delta
method below for a ratio, i.e., a function $h(a,b) = a/b$.  For this
ratio, we have 
$$ \frac{d}{da} h(a,b) = \frac{1}{b} \qquad \frac{d}{db} h( a, b ) = -\frac{a}{b^2}.$$ 
Therefore, using the Delta Method for a ratio,
\begin{align*}
  \hat{\mu}_t &= \frac{\frac{1}{n} \sum_{i \in \cS} w_i Y_i(t) }{ \frac{1}{n} \sum_{i \in \cS} w_i }\\
  & = \frac{\hat{A}}{\hat{B}}\\
              & \approx \frac{A}{B} + \frac{1}{B}(\hat{A}-A) - \frac{A}{B^2}(\hat{B}-B) \\
              & = \frac{A}{B} -  \frac{A}{B} +  \frac{A}{B} +   \frac{1}{B}\hat{A} - \frac{A}{B^2}\hat{B} \\
 	& = \mu_t +  \frac{1}{\kappa} \frac{1}{n} \sum_{i \in \cS} w_i Y_i(t)  - \frac{\mu_t}{\kappa} \frac{1}{n} \sum_{i \in \cS} w_i \\
 	& = \mu_t + \frac{1}{n \kappa}  \sum_{i \in \cS} w_i ( Y_i(t) - \mu_t )
\end{align*}
where the first and second equalities follow from the definition of
$\hat{\mu}_t$ and $(\hat{A}, \hat{B}),$ the third from the delta
method, the fourth from simple algebra, the fifth from the definition of $(A, B),$ and the sixth from re-arrangement of the terms. 

\noindent Finally,
\begin{align}
  \var_{\tilde F}\left( \hat{\mu}_t \right) 
  & =  \var_{\tilde F}\left( \hat{\mu}_t - \mu_t \right) \\
  &\approx \frac{1}{n^2 \kappa^2}\cdot \var_{\tilde F}\left( \sum_{i \in \cS} w_i (Y_i(t) - \mu_t) \right) \nonumber \\
&= \frac{1}{n^2 \kappa^2} n \int \frac{\kappa^2}{\pi(\bX_i)^2} (Y_i(t) - \mu_t)^2 d\tilde F(\bX_i,Y_i(t) ) \nonumber \\
	&= \frac{1}{n} \int \frac{1}{\pi(\bX_i)^2} (Y_i(t) - \mu_t)^2 d\tilde F(\bX_i, Y_i(t))\label{eqn:var_finite} 
\end{align}
As such, $\asyvar_{\tilde F}\left( \hat{\mu}_t \right) = \lim_{n \rightarrow \infty}\var(\sqrt{n} \hat\mu_t) = \int \frac{1}{\pi(\bX_i)^2} (Y_i(t) - \mu_t)^2 d\tilde F(\bX_i, Y_i(t))$.
\end{proof}

\begin{lemma}[Weighted Variance] 
Define the weighted variance and the weighted covariance as: 
$$\var_w(A_i) = \int \frac{1}{\pi(\bX_i)^2} (A_i-\bar A)^2 d\tilde F(\bX_i,A_i)$$
$$\cov_w(A_i, B_i) = \int \frac{1}{\pi(\bX_i)^2} (A_i-\bar A)(B_i - \bar B) d\tilde
F(\bX_i,A_i, B_i)$$
Under this definition, common variance and covariance properties apply: 
$$\var_w(A_i + B_i) = \var_w(A_i) + \var_w(B_i) + 2 \cov_w(A_i, B_i)$$
$$\cov_w(A_i + B_i, C_i) = \cov_w(A_i, C_i) + \cov_w(B_i, C_i)$$ 
\end{lemma}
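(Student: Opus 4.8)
The plan is to prove both identities by the same two-ingredient recipe: first reduce the centering of a sum to a sum of centerings, and then expand and integrate termwise. Throughout I treat $A_i, B_i, C_i$ as coordinates of a single joint law $(\bX_i, A_i, B_i, C_i) \sim \tilde F$, so that the measures written as $d\tilde F(\bX_i, A_i)$ or $d\tilde F(\bX_i, A_i, B_i)$ are understood as marginals of this common joint distribution. The shared weight $1/\pi(\bX_i)^2$ depends only on $\bX_i$ and is well defined since $\pi(\bX_i) > 0$, so every integral below is a linear functional of its integrand against the fixed, nonnegative kernel $\frac{1}{\pi(\bX_i)^2}\, d\tilde F$; linearity of integration is then all that drives the argument.

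The first ingredient is the observation that, because $\bar A = \E_F(A_i)$ and $\bar B = \E_F(B_i)$ are ordinary expectations under $F$, linearity of expectation gives $\E_F(A_i + B_i) = \bar A + \bar B$. Hence the centered sum factors as
$$(A_i + B_i) - \E_F(A_i + B_i) = (A_i - \bar A) + (B_i - \bar B),$$
which is the key algebraic step that makes both claims fall out.

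For the variance identity I would substitute this factorization into the definition of $\var_w(A_i + B_i)$ and expand the square,
$$\bigl((A_i - \bar A) + (B_i - \bar B)\bigr)^2 = (A_i - \bar A)^2 + 2(A_i - \bar A)(B_i - \bar B) + (B_i - \bar B)^2.$$
Multiplying by $1/\pi(\bX_i)^2$ and integrating against $d\tilde F$, linearity splits the integral into three pieces that are exactly $\var_w(A_i)$, $2\cov_w(A_i, B_i)$, and $\var_w(B_i)$ by definition, giving the stated decomposition. For the bilinearity identity I would proceed identically: write $(A_i + B_i) - \E_F(A_i+B_i) = (A_i-\bar A)+(B_i-\bar B)$, multiply by $(C_i - \bar C)$, distribute, and integrate termwise to recover $\cov_w(A_i, C_i) + \cov_w(B_i, C_i)$.

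There is no substantive obstacle here --- the statement is a bookkeeping consequence of linearity --- so the only thing to be careful about is the notational point flagged above: the definitions display different marginal measures for $\var_w$ and $\cov_w$, and one must be explicit that these are marginals of a single joint $\tilde F$, so that the three integrals in the expansion are taken against the same measure and may legitimately be recombined. Once that common-measure convention is fixed (and the relevant weighted second moments are assumed finite so the expansion is valid), both identities are immediate.
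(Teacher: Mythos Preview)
Your proposal is correct and follows essentially the same approach as the paper: center the sum, expand the square (or product), and split the integral by linearity. You are in fact slightly more careful than the paper in explicitly flagging the marginal-versus-joint measure convention, which the paper handles only implicitly when it passes from $d\tilde F(\bX_i, A_i, B_i)$ back to $d\tilde F(\bX_i, A_i)$.
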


\begin{proof} 
\begin{align*} 
\var_w(A_i + B_i) =& \int \frac{1}{\pi(\bX_i)^2} \left(A_i + B_i - (\bar A + \bar B) \right)^2 d\tilde F(\bX_i, A_i, B_i) \\
=& \int \frac{1}{\pi(\bX_i)^2} \left((A_i - \bar A)^2 + (B_i - \bar B)^2 + 2(A_i - \bar A)(B_i - \bar B) \right) d\tilde F(\bX_i, A_i, B_i) \\
=& \int \frac{1}{\pi(\bX_i)^2} (A_i - \bar A)^2 d \tilde F(\bX_i, A_i, B_i) + \int \frac{1}{\pi(\bX_i)^2} (B_i - \bar B)^2 d \tilde F(\bX_i, A_i, B_i) + \\
&2 \int \frac{1}{\pi(\bX_i)^2} (A_i - \bar A)(B_i - \bar B) d\tilde F(\bX_i, A_i, B_i)  \\
=&\int \frac{1}{\pi(\bX_i)^2} (A_i - \bar A)^2 d \tilde F(\bX_i, A_i) + \int \frac{1}{\pi(\bX_i)^2} (B_i - \bar B)^2 d \tilde F(\bX_i, B_i) + \\
&2 \int \frac{1}{\pi(\bX_i)^2} (A_i - \bar A)(B_i - \bar B) d \tilde F(\bX_i, A_i, B_i) \\
=& \var_w(A_i) + \var_w(B_i) + 2 \cov_w(A_i, B_i)
\end{align*} 
\begin{align*} 
\cov_w&(A_i + B_i, C_i)\\ 
=& \int \frac{1}{\pi(\bX_i)^2} \left(A_i + B_i - (\bar A + \bar B) \right)\left(C_i  - \bar C \right) d \tilde F(\bX_i, A_i, B_i, C_i) \\
=& \int \frac{1}{\pi(\bX_i)^2} \left( (A_i - \bar A)(B_i- \bar B) \right)\left(C_i- \bar C \right) d \tilde F(\bX_i, A_i, B_i, C_i) \\
=& \int \frac{1}{\pi(\bX_i)^2} \left( (A_i - \bar A)(C_i - \bar C) + (B_i - \bar B)(C_i - \bar C) \right) d \tilde F(\bX_i, A_i, B_i, C_i) \\
=& \int \frac{1}{\pi(\bX_i)^2} (A_i - \bar A)(C_i - \bar C) d\tilde F(\bX_i, A_i, B_i, C_i) + \int \frac{1}{\pi(\bX_i)^2} (B_i - \bar B)(C_i - \bar C) d\tilde F(\bX_i, A_i, B_i, C_i) \\
=& \int \frac{1}{\pi(\bX_i)^2} (A_i - \bar A)(C_i - \bar C) d \tilde F(\bX_i, A_i, C_i) + \int \frac{1}{\pi(\bX_i)^2} (B_i - \bar B)(C_i - \bar C) d \tilde F(\bX_i, B_i, C_i) \\
=& \cov_w(A_i, C_i) + \cov_w(B_i, C_i) 
\end{align*} 
\end{proof} 

\begin{corollary}[Asymptotic Variance of a Weighted Estimator] \mbox{}\\ 
The asymptotic variance of a Hájek-style weighted estimator is:
\begin{align*} 
\asyvar_{\tilde F}(\hat \tau_W) & \ = \ \asyvar_{\tilde F}(\hat \mu_1) + \asyvar_{\tilde F}(\hat \mu_0) \\
& \ \approx \ \frac{1}{p} \int \frac{1}{\pi(\bX_i)^2} (Y_i(1)-\mu_1)^2 d\tilde F(\bX_i,Y_i(1))+\frac{1}{1-p} 
\int \frac{1}{\pi(\bX_i)^2} (Y_i(0)-\mu_0)^2 d\tilde{F}(\bX_i,Y_i(0)) \\
& \ = \ \frac{1}{p} \var_w(Y_i(1)) + \frac{1}{1-p} \var_w(Y_i(0)),
\end{align*} 
where $\var_w(\cdot)$ is defined in equation~\eqref{eq:w-var}. 
$p$ is the probability of treatment assignment, i.e., $p =
\Pr_{\tilde{F}}(T_i = 1)$. $\mu_1 = \E_{F}(Y_i(1))$ and $\mu_0 = \E_{F}(Y_i(0))$.
\label{cor:weights} 
\end{corollary}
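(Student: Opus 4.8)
The plan is to write the weighted estimator as the difference of two Hájek estimators, $\hat\tau_W = \hat\mu_1 - \hat\mu_0$, where $\hat\mu_1$ averages the weighted treated outcomes and $\hat\mu_0$ averages the weighted control outcomes, and then to invoke Lemma~\ref{lem:var_hajeck} on each arm. The first step is to argue that the cross term vanishes. Because treatment is randomized within the experiment (Assumption~\ref{assum-exp}) and the treated and control units form disjoint subsamples drawn from the same infinite super-population, $\hat\mu_1$ and $\hat\mu_0$ are functions of independent collections of units; hence their leading-order covariance is zero and $\asyvar_{\tilde F}(\hat\tau_W) = \asyvar_{\tilde F}(\hat\mu_1) + \asyvar_{\tilde F}(\hat\mu_0)$. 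This reduces the problem to computing the two single-arm variances already furnished by Lemma~\ref{lem:var_hajeck}.

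Next I would apply Lemma~\ref{lem:var_hajeck} to each arm, taking care of the scaling. The Lemma gives the finite-sample variance of a Hájek estimator built on a sample of size $n$ as $\tfrac{1}{n}\var_w(\cdot)$; here, however, $\hat\mu_1$ is built only on the treated subsample, whose size $n_1 = \sum_{i\in\cS} T_i$ concentrates around $np$ by the law of large numbers under Assumption~\ref{assum-exp}, and $\hat\mu_0$ on the control subsample of size $n_0 = \sum_{i\in\cS}(1-T_i) \approx n(1-p)$. Substituting $n_1$ and $n_0$ for the generic sample size in the Lemma yields $\var_{\tilde F}(\hat\mu_1) \approx \tfrac{1}{np}\var_w(Y_i(1))$ and $\var_{\tilde F}(\hat\mu_0) \approx \tfrac{1}{n(1-p)}\var_w(Y_i(0))$, where the integrals in the Lemma are recognized as the weighted variances of equation~\eqref{eq:w-var}.

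Finally, I would rescale by the total sample size $n$ to pass to the asymptotic variance $\asyvar_{\tilde F}(Z) = \lim_{n\to\infty}\var_{\tilde F}(\sqrt{n}\,Z)$. Multiplying the single-arm variances by $n$ cancels the leading $1/n$ and leaves the factors $1/p$ and $1/(1-p)$, giving $\asyvar_{\tilde F}(\hat\mu_1) = \tfrac{1}{p}\var_w(Y_i(1))$ and $\asyvar_{\tilde F}(\hat\mu_0) = \tfrac{1}{1-p}\var_w(Y_i(0))$; summing these reproduces the claimed expression. The main obstacle is precisely the bookkeeping in this rescaling step: one must keep the asymptotic variance normalized by the full experimental sample size $n$ rather than by the arm-specific sizes $n_1$ and $n_0$, since it is exactly this mismatch between the estimator's effective sample size and the normalization that manufactures the $1/p$ and $1/(1-p)$ weights. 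A secondary point worth verifying is that replacing the random arm sizes $n_1, n_0$ by their expectations $np, n(1-p)$ is asymptotically innocuous, which follows from $n_1/n \cip p$ under Assumption~\ref{assum-exp}.
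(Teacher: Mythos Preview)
Your proposal is correct and follows essentially the same approach as the paper: treat the treated and control arms as independent samples from the infinite super-population so the cross term vanishes, then apply Lemma~\ref{lem:var_hajeck} to each arm. In fact, you are more explicit than the paper's own proof about the scaling bookkeeping that produces the $1/p$ and $1/(1-p)$ factors---the paper simply states that one should ``directly apply Lemma~\ref{lem:var_hajeck}'' without spelling out the substitution of $n_1 \approx np$ and $n_0 \approx n(1-p)$ for the generic sample size.
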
 
\begin{proof} 
Because we are sampling from an infinite super-population, the treatment and control groups can be treated as two separate samples from the infinite super-population. We directly apply Lemma \ref{lem:var_hajeck} to arrive at the final result. \\
\label{cor:var_w} 
\end{proof} 

\begin{corollary}[Asymptotic Variance of Weighted Least Squares Estimator] \mbox{}\\
The asymptotic variance of a weighted least squares estimator is: 
$$\asyvar(\hat \tau_{wLS}) = \frac{1}{p} \var_w(Y_i(1) - \tilde \bX_i^\top \gamma_\ast ) + \frac{1}{1-p} \var_w(Y_i(0)- \tilde \bX_i^\top \gamma_\ast),$$
where 
$\gamma_\ast$ is the vector of \textit{true} coefficients associated with
the pretreatment covariates $\tilde \bX_i$ defined as:
\begin{equation}
  (\tau_{wLS}, \alpha_\ast, \gamma_\ast) =
  \argmin_{\tau, \alpha, \gamma}
  \E_{\tilde F} \left\{\hat{w}_i \left(Y_i - (\tau T_i +
  \alpha + \widetilde{\bX}^\top_i \gamma)\right)^2\right\}
\end{equation}
\label{cor:wLS_var}
\end{corollary}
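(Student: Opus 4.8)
The plan is to reduce the weighted least squares estimator to a H\'ajek-style weighted difference-in-means applied to a partially residualized outcome, and then invoke Corollary~\ref{cor:weights}. Define the residualized potential outcomes $U_i(t) \coloneqq Y_i(t) - \tilde\bX_i^\top \gamma_\ast$ for $t \in \{0,1\}$, with realized value $U_i = Y_i - \tilde\bX_i^\top\gamma_\ast$. The goal is to show that $\hat\tau_{wLS}$ has the same first-order asymptotic behavior as the weighted estimator $\hat\mu_1^U - \hat\mu_0^U$ computed on $U_i$, where $\hat\mu_t^U$ is the H\'ajek estimator of $\E_F(U_i(t))$. Once this is established, applying Corollary~\ref{cor:weights} with $Y_i(t)$ replaced by $U_i(t)$ immediately gives $\asyvar(\hat\tau_{wLS}) = \tfrac{1}{p}\var_w(U_i(1)) + \tfrac{1}{1-p}\var_w(U_i(0))$, which is the claimed expression since $U_i(t) = Y_i(t) - \tilde\bX_i^\top\gamma_\ast$.

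First I would write down the sample normal equations for the WLS problem (equation~\eqref{eq:wls}) and its population analog defining $(\tau_{wLS},\alpha_\ast,\gamma_\ast)$. Standard M-estimation arguments, combined with consistency of the estimated weights, yield $(\hat\tau_{wLS},\hat\alpha,\hat\gamma) \cip (\tau_{wLS},\alpha_\ast,\gamma_\ast)$. I would then apply a Frisch--Waugh--Lovell decomposition in the $w_i$-weighted inner product: $\hat\tau_{wLS}$ equals the $w_i$-weighted covariance of $Y_i$ with the residual of $T_i$ after partialing out $(1,\tilde\bX_i)$, divided by the $w_i$-weighted variance of that residual.

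The crux, and the main obstacle, is to show that estimating $\gamma$ rather than using the truth $\gamma_\ast$ contributes nothing to the first-order asymptotics of $\hat\tau_{wLS}$, so that the two separate within-arm normalizations of the H\'ajek estimator and the single-intercept WLS fit agree at the $\sqrt{n}$ level. This is where Assumption~\ref{assum-exp} is essential. Because $T_i$ is independent of $\bX_i$ and the weights $w_i = \kappa/\pi(\bX_i)$ depend only on $\bX_i$, the treatment indicator is uncorrelated with the covariates in the $w_i$-weighted inner product governing the normal equations: the $w_i$-weighted mean of $T_i$ is $p$, and $\E_{\tilde F}[w_i T_i \tilde\bX_i] = p\,\E_{\tilde F}[w_i \tilde\bX_i]$, so the $w_i$-weighted covariance of $T_i$ and $\tilde\bX_i$ vanishes. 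Consequently the FWL residual of $T_i$ is simply $T_i - p$, and the partialing-out of $\tilde\bX_i$ on the outcome side does not interact at first order with the treatment contrast. Moreover, randomization forces the $w_i$-weighted covariate means in the treated and control arms to share the same probability limit, so the intercept and slope terms are absorbed consistently across arms. These facts let me linearize
\begin{equation*}
\sqrt{n}\,(\hat\tau_{wLS} - \tau_{wLS}) = \sqrt{n}\,\big(\hat\mu_1^U - \hat\mu_0^U - \tau_{wLS}\big) + o_p(1),
\end{equation*}
so the estimation error in $\hat\gamma$ enters only through asymptotically negligible terms.

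Finally I would treat the treatment and control arms as independent samples from the super-population, as in the derivation preamble, so that $\hat\mu_1^U$ and $\hat\mu_0^U$ are independent H\'ajek estimators of $\E_F(U_i(1))$ and $\E_F(U_i(0))$. Applying Lemma~\ref{lem:var_hajeck} to each arm with $Y_i(t)$ replaced by $U_i(t)$, and summing the two independent variances, reproduces the two-term form of Corollary~\ref{cor:weights} and yields the stated result. I expect the orthogonality and decoupling step to be the delicate part, since it requires tracking that the weights are functions of $\bX_i$ alone and invoking randomization to eliminate the cross terms; the remaining steps are routine applications of the earlier lemma and corollary.
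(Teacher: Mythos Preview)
Your proposal is correct and follows essentially the same strategy as the paper: both reduce $\hat\tau_{wLS}$ to a H\'ajek difference-in-means on the $\gamma_\ast$-adjusted outcomes $U_i(t)=Y_i(t)-\tilde\bX_i^\top\gamma_\ast$, show the plug-in error from using $\hat\gamma$ in place of $\gamma_\ast$ is $o_p(n^{-1/2})$ by randomization, and then invoke Lemma~\ref{lem:var_hajeck}/Corollary~\ref{cor:weights}. The paper's execution is a bit more explicit than your Frisch--Waugh--Lovell framing: it writes the decomposition directly as $\hat\tau_{wLS}=\hat\tau_{wLS}^{*}+\delta_X(\gamma_\ast-\hat\gamma)$, where $\delta_X$ is the weighted between-arm covariate imbalance, and then argues $\sqrt{n}\,\delta_X=O_p(1)$ and $\hat\gamma-\gamma_\ast=o_p(1)$ so the second term is $o_p(n^{-1/2})$.
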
 
\begin{proof} 

To begin, analogous with \cite{lin2013} (Lemma 6), the weighted least squares estimator can be written as:
\begin{equation} 
\hat \tau_{wLS} = \frac{1}{\sum_{i \in \cS} w_i T_i} \sum_{i \in \cS} w_i T_i (Y_i - \tilde \bX_i^\top \hat \gamma) - \frac{1}{\sum_{i \in \cS} w_i (1-T_i)} \sum_{i \in \cS} w_i (1-T_i) (Y_i - \tilde \bX_i^\top \hat \gamma)
\label{eqn:wLS} 
\end{equation} 
Akin with \cite{Ding2021Paradox}, we define $\delta_X$ as: 
$$\delta_X = \frac{1}{\sum_{i \in \cS} w_i T_i} \sum_{i \in \cS} w_i T_i \tilde \bX_i^\top - \frac{1}{\sum_{i \in \cS} w_i (1-T_i)} \sum_{i \in \cS} w_i (1-T_i) \tilde \bX_i^\top $$
$\delta_X$ represents any residual imbalance between the treatment and control groups in the weighted pre-treatment covariates. 
We can re-write Equation~\eqref{eqn:wLS} as: 
\begin{align*} 
\hat \tau_{wLS} =& \frac{1}{\sum_{i \in \cS} w_i T_i} \sum_{i \in \cS} w_i T_i (Y_i - \tilde \bX_i^\top \hat \gamma) - \frac{1}{\sum_{i \in \cS} w_i (1-T_i)} \sum_{i \in \cS} w_i (1-T_i) (Y_i - \tilde \bX_i^\top \hat \gamma) \\
=&\frac{1}{\sum_{i \in \cS} w_i T_i} \sum_{i \in \cS} w_i T_i (Y_i(1) - \tilde \bX_i^\top \hat \gamma) - \frac{1}{\sum_{i \in \cS} w_i (1-T_i)} \sum_{i \in \cS} w_i (1-T_i) (Y_i(0) - \tilde \bX_i^\top \hat \gamma) \\
=& \frac{1}{\sum_{i \in \cS} w_i T_i} \sum_{i \in \cS} w_i T_i (Y_i(1) - \tilde \bX_i^\top \gamma_\ast + \tilde \bX_i^\top  \gamma_\ast - \tilde \bX_i^\top \hat \gamma) - \\
&\frac{1}{\sum_{i \in \cS} w_i (1-T_i)} \sum_{i \in \cS} w_i (1-T_i) (Y_i(0) - \tilde \bX_i^\top \gamma_\ast + \tilde \bX_i^\top  \gamma_\ast - \tilde \bX_i^\top \hat \gamma) \\
=& \frac{1}{\sum_{i \in \cS} w_i T_i} \sum_{i \in \cS} \left( w_i T_i (Y_i(1) - \tilde \bX_i^\top \gamma_\ast) + w_i T_i \tilde \bX_i^\top(\gamma_\ast - \hat \gamma)\right) - \\
&\frac{1}{\sum_{i \in \cS} w_i (1-T_i)} \sum_{i \in \cS} \left(w_i (1-T_i) (Y_i(0) - \tilde \bX_i^\top \gamma_\ast)  +  w_i(1-T_i) \tilde \bX_i^\top  (\gamma_\ast - \hat \gamma) \right) \\
=& \underbrace{\frac{1}{\sum_{i \in \cS} w_i T_i} \sum_{i \in \cS} w_i T_i (Y_i(1) - \tilde \bX_i^\top \gamma_\ast) - \frac{1}{\sum_{i \in \cS} w_i (1-T_i)} \sum_{i \in \cS} w_i (1-T_i) (Y_i(0) - \tilde \bX_i^\top \gamma_\ast)}_{:=\hat \tau^*_{wLS}} + \\
&\underbrace{\frac{1}{\sum_{i \in \cS} w_i T_i} \sum_{i \in \cS} w_i T_i \tilde \bX_i^\top (\gamma_\ast - \hat \gamma) - 
\frac{1}{\sum_{i \in \cS} w_i (1-T_i)} \sum_{i \in \cS} w_i (1-T_i) \tilde \bX_i^\top (\gamma_\ast - \hat \gamma)}_{= \delta_X(\gamma_\ast - \hat \gamma)} \\
=& \hat \tau_{wLS}^{*}+ \delta_X(\gamma_\ast - \hat \gamma),
\end{align*} 
where $\hat \tau_{wLS}^*$ represents the potential outcomes, adjusted
for the pre-treatment covariates using the \textit{true} coefficients
$\gamma_\ast$.

Under standard regularity conditions for least squares, $\gamma_* - \hat \gamma = o_p(1)$ \citep{white1982maximum}. Furthermore, $\sqrt{n} \delta_X = O_p(1)$:
\begin{align*} 
\lim_{n \to \infty} \var_{\tilde F}(\delta_X) &= \lim_{n \to \infty} \left( \frac{1}{n_1} \var_w(\tilde \bX_i) +\frac{1}{n_0} \var_w(\tilde \bX_i)\right) \\
&= \lim_{n \to \infty} \frac{1}{n} \cdot \left( \frac{1}{p} + \frac{1}{1-p} \right) \var_w(\tilde \bX_i) \\
&= \lim_{n \to \infty} \frac{1}{n} \cdot \frac{1}{p(1-p)} \var_w(\tilde \bX_i)
\end{align*} 
Assuming $\var_w(\tilde \bX_i)$ is finite, $\delta_X = O_p(\sqrt{n}^{-1}) \implies \sqrt{n} \delta_X = O_p(1)$. 

Therefore, as $n \to \infty$: 
\begin{align*} 
\sqrt{n}(\hat \tau_{wLS} - \tau) &= \sqrt{n}(\hat \tau^*_{wLS} - \tau) + \underbrace{\sqrt{n}\delta_X(\gamma_\ast - \hat \gamma)}_{\cip 0} \\
&\cid N(0, \var(\hat \tau_{wLS}^*)),
\end{align*} 
where $\var_{\tilde F}(\hat \tau_{wLS}^*) \approx \frac{1}{p} \var_w(Y_i(1) - \tilde \bX_i^\top \gamma_\ast) + \frac{1}{1-p} \var_w(Y_i(0) - \tilde \bX_i^\top \gamma_\ast)$ (this result follows from applying Lemma 1 on the adjusted potential outcomes). 

\end{proof}

\subsection{Proof of Theorem~\ref{thm:consistency_ipw}} 
\mbox{}\\

\noindent\fbox{%
\vspace{2mm}
\parbox{\textwidth}{%
\vspace{2mm}
Suppose Assumption~\ref{assum-ind} 
holds with $\bX_i$, then the
post-residualized weighted estimator, using any predictive model
$g(\bX_i)$ built on the population data, is a consistent estimator for
the PATE: 
$$\hat \tau_{W}^{res} \cip \tau,$$
provided sampling weights $\hat w_i$ are estimated consistently. 
}
} 
\begin{proof} 
We will begin by the proof by showing that $\hat \tau_W^{res}$ can be written as the difference between $\hat \tau_W$, and a weighted estimator computed over the fitted values $\hat Y_i$, which we will define as $\hat \tau_{\hat Y}$. Following the generalization literature, we treat the weights as known, as well as the observed sampled population: 
\begin{align*} 
\hat \tau_{W}^{res} =&  \frac{\sum_{i \in \cS} w_i T_i \cdot \hat e_i}{\sum_{i \in \cS} w_i T_i} - \frac{\sum_{i \in \cS} w_i (1-T_i) \cdot \hat e_i}{\sum_{i \in \cS} w_i (1-T_i)}  \\
=& \frac{\sum_{i \in \cS} w_i T_i \cdot (Y_i- \hat Y_i)}{\sum_{i \in \cS} w_i T_i}
- 
\frac{\sum_{i \in \cS} w_i (1-T_i) \cdot (Y_i - \hat Y_i)}{\sum_{i \in \cS} w_i (1-T_i)} 
\\
=&\underbrace{\frac{\sum_{i \in \cS} w_i T_i \cdot Y_i}{\sum_{i \in \cS} w_i T_i}
- 
\frac{\sum_{i \in \cS} w_i (1-T_i) \cdot Y_i }{\sum_{i \in \cS} w_i (1-T_i)}}_{= \hat \tau_W} - \\
&\underbrace{\left( \frac{\sum_{i \in \cS} w_i T_i \cdot \hat Y_i}{\sum_{i \in \cS} w_i T_i}
- 
\frac{\sum_{i \in \cS} w_i (1-T_i) \cdot \hat Y_i }{\sum_{i \in \cS} w_i (1-T_i)} \right)}_{= \hat \tau_{\hat Y}} \\
=& \hat \tau_{W} - \hat \tau_{\hat Y}
\end{align*} 
We will begin by showing that $\hat \tau_W \cip \tau$. 
To begin: 
\begin{align*} 
\hat \tau_W &= \frac{\sum_{i \in \cS} w_i T_i \cdot Y_i}{\sum_{i \in \cS} w_i T_i}
- 
\frac{\sum_{i \in \cS} w_i (1-T_i) \cdot Y_i }{\sum_{i \in \cS} w_i (1-T_i)}
\intertext{By Law of Large Numbers and the Continuous Mapping Theorem:} 
\hat \tau_W &\cip \underbrace{\frac{\E_{\tilde F} (w_i T_i Y_i)}{\E_{\tilde F} (w_iT_i)}}_{(1)} - \underbrace{\frac{\E_{\tilde F} (w_i(1-T_i) Y_i)}{\E_{\tilde F} (w_i(1-T_i))}}_{(2)}
\end{align*} 
We will now show that the first term (i.e., $(1)$) is equal to $\E_{F}(Y_i(1))$. We first evaluate the expectation in the denominator. 
\begin{align*} 
\E_{\tilde F} (w_i T_i) &= \frac{n_1}{n} \E_{\tilde F} (w_i) \\
&= \frac{n_1}{n} \E_{\tilde F}  \left( \frac{\kappa}{\pi(\bX_i)} \right) \\
&= \frac{n_1}{n} \cdot \kappa \int \frac{1}{\pi(\bX_i)} d \tilde F(\bX_i) \\
&= \frac{n_1}{n} \cdot \kappa \underbrace{\int \frac{1}{\pi(\bX_i)} \pi(\bX_i) d F(\bX_i)}_{=1} \\
&= \frac{n_1}{n} \cdot \kappa 
\end{align*} 
For the numerator: 
\begin{align*} 
\E_{\tilde F} (w_i T_i Y_i) &= \E_{\tilde F} (w_i T_i Y_i(1) )\\
&= \frac{n_1}{n} \E_{\tilde F} (w_i Y_i(1))\\
&= \frac{n_1}{n} \E_{\tilde F}  \left( \frac{\kappa}{\pi(\bX_i)} Y_i(1) \right)\\
&= \frac{n_1}{n} \cdot \kappa \E_{\tilde F}  \left( \frac{1}{\pi(\bX_i)} Y_i(1) \right) \\
&= \frac{n_1}{n} \cdot \kappa \int \frac{Y_i(1)}{\pi(\bX_i)} d\tilde F(\bX_i, Y_i(1)) \\
&= \frac{n_1}{n} \cdot \kappa \int \frac{Y_i(1)}{\pi(\bX_i)} \cdot \pi(\bX_i) d F(\bX_i, Y_i(1)) \\
&= \frac{n_1}{n} \cdot \kappa  \int Y_i(1) dF(\bX_i, Y_i(1)) \\
&= \frac{n_1}{n} \kappa \cdot \E_{F}(Y_i(1))
\end{align*} 
Therefore, re-writing $(1)$:
\begin{align*} 
\frac{\E_{\tilde F} (w_i T_iY_i)}{\E_{\tilde F} (w_i T_i)}&= \frac{p \kappa \cdot \E_F(Y_i(1))}{p \cdot \kappa} \\
&= \E_F(Y_i(1))
\end{align*} 
Similarly, we can show that the second term, $\E_{\tilde F} (w_i(1-T_i)Y_i)/\E_{\tilde F}(w_i(1-T_i))$, is equal to $\E_{F}(Y_i(0))$. Therefore: 
\begin{align*} 
\E_{\tilde F} (\hat \tau_W) &\cip \E_F(Y_i(1)) - \E_F(Y_i(0)) \\
&= \tau
\end{align*} 

Now we will show that $\hat \tau_{\hat Y} \cip 0$. Once again, applying Law of Large Numbers and the Continuous Mapping Theorem: 
\begin{align*} 
\hat \tau_{\hat Y} &= \frac{\sum_{i \in \cS} w_i T_i \hat Y_i}{\sum_{i \in \cS} w_i T_i } -\frac{\sum_{i \in \cS} w_i (1-T_i) \hat Y_i}{\sum_{i \in \cS} w_i (1-T_i)} \\
&\cip \frac{\E_{\tilde F} ( w_i T_i \hat Y_i)}{\E_{\tilde F} ( w_i T_i)} - 
\frac{\E_{\tilde F} (w_i (1-T_i) \hat Y_i)}{\E_{\tilde F} ( w_i (1-T_i))} \\
&= \frac{p \cdot \E_{\tilde F}(w_i \hat Y_i)}{p \E_{\tilde F} (w_i)} - 
\frac{(1-p) \cdot \E_{\tilde F} (w_i \hat Y_i)}{(1-p)\E_{\tilde F} ( w_i)} \\
&= \frac{\E_{\tilde F} (w_i \hat Y_i)}{\E_{\tilde F} (w_i) }- \frac{\E_{\tilde F} (w_i \hat Y_i)}{\E_{\tilde F} (w_i)}  \\
&= 0
\end{align*}
where the third line follows from the fact that treatment assignment is randomized and independent of weights.
Therefore, by the Continuous Mapping Theorem, $\hat \tau_{W}^{res} \cip \tau$. \\
\end{proof}

\subsection{Proof of Theorem~\ref{thm:var_compare}} \mbox{}\\
\noindent\fbox{%
\vspace{2mm}
\parbox{\textwidth}{%
\vspace{2mm}
  The difference between the asymptotic variance of $\hat \tau_W^{res}$ and the asymptotic variance of $\hat \tau_W$ is: 
  \begin{align*} 
\asyvar_{\tilde F}(\hat \tau_W)&-\asyvar_{\tilde F}(\hat \tau_W^{res})  \nonumber \\
              &= -\frac{1}{p(1-p)}\var_w(\hat Y_i) +\frac{2}{p} \cov_w(Y_i(1), \hat Y_i) + \frac{2}{1-p} \cov_w(Y_i(0), \hat Y_i),                
  \end{align*} 
  }
}

\begin{proof} 
From Corollary \ref{cor:var_w}, the asymptotic variance of a weighted estimator is: 
$$\asyvar_{\tilde F}(\hat \tau_{W}) = \frac{1}{p} \var_w(Y_i(1)) + \frac{1}{1-p} \var_w(Y_i(0))$$
Using the residualized potential outcomes $\hat e_i(1)$ and $\hat e_i(0)$, the asymptotic variance of a weighted residualized estimator is: 
$$\asyvar_{\tilde F}(\hat \tau_{W}^{res}) = \frac{1}{p} \var_w(\hat e_i(1)) + \frac{1}{1-p} \var_w(\hat e_i(0)).$$
From the definition of potential residuals, we can write the potential residuals as a function of the original outcome values and the fitted values:
\begin{align} 
\var_w(\hat e_i(0)) &= \var_w(Y_i(0) - \hat Y_i) \nonumber \\
 &= \var_w(Y_i(0)) + \var_w(\hat Y_i) - 2 \cov_w(Y_i(0), \hat Y_i) \label{eqn:e0}\\ 
\nonumber \\
 \var_w(\hat e_i(1)) &= \var_w(Y_i(1) - \hat Y_i) \nonumber \\
 &= \var_w(Y_i(1)) + \var_w(\hat Y_i) - 2 \cov_w(Y_i(1), \hat Y_i) \label{eqn:e1}
\end{align} 
Therefore, the difference in variances of our two estimators is
\begin{align*} 
\asyvar_{\tilde F}(\hat \tau_W) &- \asyvar_{\tilde F}(\hat \tau^{res}_W) \\ 
=& \left\{\frac{1}{p} \var_w(Y_i(1)) + \frac{1}{1-p} \var_w(Y_i(0))) \right\} -
\left\{ \frac{1}{p} \var_w(\hat e_i(1)) + \frac{1}{1-p} \frac{1}{n_0} \var_w(\hat e_i(0))\right\} \\
=& \frac{1}{p} \cdot \big( \var_w(Y_i(1))-\var_w(\hat e_i(1)) \big) +\frac{1}{1-p}\cdot \big(\var_w(Y_i(0)) -\var_w(\hat e_i(0))\big) 
\intertext{Plugging in (\ref{eqn:e0}) and (\ref{eqn:e1}):}
=& -\frac{1}{p} \cdot \bigg\{ \var_w(Y_i(1)) + \var_w(\hat Y_i) - 2 \cov_w(Y_i(1), \hat Y_i) - \var_w(Y_i(1) \bigg\} \\
& - \frac{1}{1-p} \cdot \bigg\{ \var_w(Y_i(0)) + \var_w(\hat Y_i) - 2 \cov_w(Y_i(0), \hat Y_i) - \var_w( Y_i(0)\bigg\} \\
=& - \frac{1}{p(1-p)} \cdot \var_w(\hat Y_i) + \frac{2}{p} \cdot \cov_w(Y_i(1), \hat Y_i) + \frac{2}{1-p} \cdot \cov_w(Y_i(0), \hat Y_i)
\end{align*} 
\end{proof} 

\subsection{Proof of Corollary~\ref{cor:rel_reduc}} \mbox{}\\
\noindent\fbox{%
\vspace{2mm}
\parbox{\textwidth}{%
\vspace{2mm}
  With $R^2_0$ defined as above, define $R^2_1$ as the weighted goodness-of-fit across the treatment units. Let $\xi = R^2_0 - R^2_1$, such that: 
  $$R^2_1 = 1 - \frac{\var_{w}(\hat e_i(1))}{\var_{w}(Y_i(1))} = R_0^2 - \xi.$$
 Furthermore, define the ratio $f = p \var_{w}(Y_i(0)) / (1-p) \var_{w}(Y_i(1))$.
  Then the relative reduction in variance from residualizing first is given by: 
  $$\text{Relative Reduction} = \frac{\asyvar_{\tilde F}(\hat \tau_W) - \asyvar_{\tilde F}(\hat
    \tau_W^{res})}{\asyvar_{\tilde F}(\hat \tau_W)} = R^2_0 - \frac{1}{1+f} \cdot
  \xi$$ 
}
}
\begin{proof} 
Let $C_1 = 1/p$ and $C_0 = 1/1-p$. 
\begin{align*} 
 \frac{\asyvar_{\tilde F}(\hat \tau_W) - \asyvar_{\tilde F}(\hat
    \tau_W^{res})}{\asyvar_{\tilde F}(\hat \tau_W)}&=
\frac{C_1 \var_w(Y_i(1)) + C_0 \var_w(Y_i(0)) - (C_1 \var_w(\hat e_i(1)) + C_0 \var_w(\hat e_i(0)) )}{C_1 \var_w(Y_i(1)) + C_0 \var_w(Y_i(0)) } \\
&= \frac{C_1 \var_w(Y_i(1)) - C_1 \var_w(\hat e_i(1))+ C_0 \var_w(Y_i(0)) - C_0 \var_w(\hat e_i(0)) )}{C_1 \var_w(Y_i(1)) +C_0 \var_w(Y_i(0)) } 
\intertext{Dividing the numerator and denominator by $C_1 \cdot \var(Y_i(1))$, and defining $f = C_0 \var_w(Y_i(0))/C_1 \var_w(Y_i(1))$:}
&= \frac{1 - \var_w(\hat e_i(1))/\var_w(Y_i(1)) + f - f \cdot \var_w(\hat e_i(0))/\var_w(Y_i(0))}{1 + f}\\
&= \frac{1}{1+f} \left( R^2_1 + f R^2_0 \right) 
\intertext{Using the definition of $\xi = R^2_0 - R^2_1$:} 
&= \frac{1}{1+f} \left(R^2_0 - \xi + f R^2_0 \right) \\
&= R^2_0 - \frac{1}{1+f} \cdot \xi 
\end{align*} 

\end{proof} 

\subsection{Proof of Theorem~\ref{thm:wLS_consistency}}
\mbox{}\\

\noindent\fbox{%
\vspace{2mm}
\parbox{\textwidth}{%
\vspace{2mm}
Suppose Assumption~\ref{assum-ind} 
holds with $\bX_i$, the
Post-Residualized Weighted Least Squares Estimator is a consistent
estimator for the PATE:
$$\hat \tau_{wLS}^{res} \cip \tau$$
}
}
\begin{proof} 
The proof of this follows almost identically from Theorem~\ref{thm:consistency_ipw}. 
To begin, we can write $\hat \tau^{res}_{wLS}$ as the above estimator on the residuals of the initial population regression:
\begin{align*} 
\hat \tau^{res}_{wLS} &= \frac{1}{(\sum_{i \in \cS} w_i T_i)} \left(\sum_{i \in \cS} w_i T_i (\hat e_i - \bX_i \hat \gamma^{res} ) \right) - 
\left(\frac{1}{{(\sum_{i \in \cS} w_i (1-T_i))}} \sum_{i \in \cS} w_i(1-T_i) (\hat e_i - \bX_i \hat \gamma^{res})\right) \\
&= \underbrace{\frac{ \sum_{i \in \cS} w_i T_i \hat e_i}{\sum_{i \in \cS} w_i T_i} - \frac{\sum_{i \in \cS} w_i (1-T_i) \hat e_i}{\sum_{i \in \cS} w_i (1-T_i)}}_{= \hat \tau_{W}^{res}} - \underbrace{\left( \frac{\sum_{i \in \cS} w_i T_i\bX_i \hat \gamma^{res}}{\sum_{i \in \cS} w_i T_i} - \frac{\sum_{i \in \cS} w_i (1-T_i) \bX_i \hat \gamma^{res}}{\sum_{i \in \cS} w_i (1-T_i)} \right)}_{(*)},
\end{align*} 
where $\hat \gamma^{res}$ represents the estimated coefficients for the covariates $\bX_i$ in the weighted regression run on the residualized outcomes $\hat e_i$. Note that the above represents two distinct regression steps: $\hat e_i$ is the result of the first population regression. $\hat \gamma^{res}$ is estimated for the covariates $\bX_i$ from the second regression using the residualized sample outcomes, $\hat e_i$.  

From Theorem~\ref{thm:consistency_ipw}, 
$\hat \tau_{W}^{res} \cip \tau$. Looking just at the $(*)$ term: 
\begin{align*} 
\frac{\sum_{i \in \cS} w_i T_i\bX_i \hat \gamma^{res}}{\sum_{i \in \cS} w_i T_i} - \frac{\sum_{i \in \cS} w_i (1-T_i) \bX_i \hat \gamma^{res}}{\sum_{i \in \cS} w_i (1-T_i)} &= \left( \frac{\sum_{i \in \cS} w_i T_i\bX_i }{\sum_{i \in \cS} w_i T_i} - \frac{\sum_{i \in \cS} w_i (1-T_i) \bX_i }{\sum_{i \in \cS} w_i (1-T_i)}\right) \hat \gamma^{res}
\end{align*}
Under standard regularity conditions for least squares, $\hat \gamma^{res}$ converges to $\gamma_{\ast}^{res}$. Furthermore, using Law of Large Numbers and the Continuous Mapping Theorem: 
\begin{align*} 
\frac{\sum_{i \in \cS} w_i T_i\bX_i }{\sum_{i \in \cS} w_i T_i} - \frac{\sum_{i \in \cS} w_i (1-T_i) \bX_i }{\sum_{i \in \cS} w_i (1-T_i)} &\cip \frac{\E_{\tilde F} (w_i T_i \bX_i)}{\E_{\tilde F} (w_i T_i)} - \frac{\E_{\tilde F} (w_i (1-T_i) \bX_i)}{\E_{\tilde F} (w_i (1-T_i))} \\
&= \frac{\E_{\tilde F} (w_i \bX_i)}{\E_{\tilde F} (w_i)} - \frac{\E_{\tilde F} (w_i \bX_i)}{\E_{\tilde F} (w_i)} \\ 
&= 0 
\end{align*} 
As such, we see that the term in $(*)$ will converge in probability to zero. Therefore, $\hat \tau^{res}_{wLS} \cip \tau$. 
\end{proof} 

\subsection{Proof of Theorem~\ref{thm:eff_gain_wls}}
\mbox{}\\

\noindent\fbox{%
\vspace{2mm}
\parbox{\textwidth}{%
\vspace{2mm}
The difference between the asymptotic variance of $\hat \tau_{wLS}$ and the asymptotic variance of $\hat \tau_{wLS}^{res}$ is: 
\begin{align*}
  &\asyvar_{\tilde F}(\hat \tau_{wLS}) - \asyvar_{\tilde F}(\hat \tau_{wLS}^{res}) \\
  \ = \ & \frac{1}{p} \left\{\var_w(Y_i(1) - \tilde \bX_i^\top \gamma_\ast) - \var_w(Y_i(1) - \hat{g}(\bX_i)) \right\} \\
  & \ + \ \frac{1}{1-p} \left\{ \var_w(Y_i(0) - \tilde \bX_i^\top \gamma_\ast)- \var_w(Y_i(0) - \hat{g}(\bX_i)) \right\}\\
 & \ + \ \frac{2}{p} \cov_w(\hat e_i(1), \tilde \bX_i^\top \gamma_\ast^{res}) + \frac{2}{1-p} \cov_w(\hat e_i(0), \tilde \bX_i^\top \gamma_\ast^{res}) - \frac{1}{p(1-p)} \var_w(\tilde \bX_i^\top \gamma_\ast^{res}),
\end{align*} 
where ${\gamma_\ast}$ and $\gamma_\ast^{res}$
  are the true coefficients associated with the pre-treatment covariates, $\widetilde{\bX}_i$
  defined in the weighted least squares regression
  (equation~\eqref{eq:wls}) 
  and the post-residualized
  weighted least squares regression (equation~\eqref{eq:prw-wls}),
  respectively. Formally,
  ${\gamma}_\ast$ and $\gamma^{res}_\ast$ are formally
  defined as the solution to the following optimization problems.
  \begin{equation}
    (\tau_{wLS}, \alpha_\ast, \gamma_\ast) =
    \argmin_{\tau, \alpha, \gamma}
    \E_{\tilde{F}} \left\{\hat{w}_i \left(Y_i - (\tau T_i +
        \alpha + \widetilde{\bX}^\top_i \gamma)\right)^2\right\} 
  \end{equation}
  \begin{equation}
    (\tau^{res}_{wLS}, \alpha^{res}_\ast, \gamma^{res}_\ast) =
    \argmin_{\tau, \alpha, \gamma}
    \E_{\tilde{F}} \left\{\hat{w}_i \left(\widehat{e}_i - (\tau T_i +
        \alpha + \widetilde{\bX}^\top_i \gamma)\right)^2\right\} 
  \end{equation}
}
}

\begin{proof} \mbox{}\\
\begin{align} 
  & \asyvar_{\tilde F}(\hat \tau_{wLS}) -
  \asyvar_{\tilde F}(\hat \tau^{res}_{wLS}) \\
  = & \left\{\frac{1}{p} \var_w(Y_i(1) - \tilde \bX_i^\top \gamma_\ast) + 
\frac{1}{1-p} \var_w(Y_i(0) - \tilde \bX_i^\top \gamma_\ast) \right\} \nonumber \\
& - \left\{\frac{1}{p} \var_w(\hat e_i(1) - \tilde \bX_i^\top \gamma_\ast^{res}) + 
\frac{1}{1-p} \var_w(\hat e_i(0) - \tilde \bX_i^\top \gamma_\ast^{res})\right\}
\label{eqn:wLS_var_1} 
\end{align} 

The adjusted residualized outcomes can be re-written as a function of the residualized outcomes and the fitted values from the regression. First, for the treatment outcomes: 
\begin{align*} 
\var_w(\hat e_i(1) - \tilde \bX_i^\top \gamma_\ast^{res}) &= \var_w(Y_i(1) - \hat{g}(\bX_i) - \tilde \bX_i^\top \gamma_\ast^{res}) \\
&= \var_w(Y_i(1) - \hat{g}(\bX_i)) + \var_w(\tilde \bX_i^\top \gamma_\ast^{res}) - 2 \cov_w(Y_i(1) - \hat{g}(\bX_i), \tilde \bX_i^\top \gamma_\ast^{res})
\end{align*} 
Similarly, 
\begin{align*} 
\var_w(\hat e_i(0) - \tilde \bX_i^\top \gamma_\ast^{res}) &= \var_w(Y_i(0) - \hat{g}(\bX_i) - \tilde \bX_i^\top \gamma_\ast^{res}) \\
&= \var_w(Y_i(0) - \hat{g}(\bX_i)) + \var_w(\tilde \bX_i^\top \gamma_\ast^{res}) - 2 \cov_w(Y_i(0) - \hat{g}(\bX_i), \tilde \bX_i^\top \gamma_\ast^{res})
\end{align*} 
Plugging into Equation \eqref{eqn:wLS_var_1}:
\begin{align*} 
  & \asyvar_{\tilde F}(\hat \tau_{wLS}) -
\asyvar_{\tilde F}(\hat \tau^{res}_{wLS})\\ 
\ = \ & \frac{1}{p} \left\{\var_w(Y_i(1) - \tilde \bX_i^\top \gamma_\ast) - \var_w(Y_i(1) - \hat{g}(\bX_i)) \right\} \\
& \ + \ \frac{1}{1-p} \left\{ \var_w(Y_i(0) - \tilde \bX_i^\top \gamma_\ast)- \var_w(Y_i(0) - \hat{g}(\bX_i)) \right\}\\
& \ - \ \left\{ \frac{1}{p(1-p)} \var_w(\tilde \bX_i^\top \gamma_\ast^{res}) - \frac{2}{p} \cov_w(Y_i(1) - \hat{g}(\bX_i), \tilde \bX_i^\top \gamma_\ast^{res}) - \frac{2}{1-p} \cov_w(Y_i(0) - \hat{g}(\bX_i), \tilde \bX_i^\top \gamma_\ast^{res}) \right\}\\
\ = \ &\frac{1}{p} \left\{\var_w(Y_i(1) - \tilde \bX_i^\top \gamma_\ast) - \var_w(Y_i(1) - \hat{g}(\bX_i)) \right\} 
 + \frac{1}{1-p} \left\{ \var_w(Y_i(0) - \tilde \bX_i^\top \gamma_\ast)- \var_w(Y_i(0) - \hat{g}(\bX_i)) \right\} \\
& \ + \ \left\{ -\frac{1}{p(1-p)} \var_w(\tilde \bX_i^\top \gamma_\ast^{res}) + \frac{2}{p} \cov_w(\hat e_i(1), \tilde \bX_i^\top \gamma_\ast^{res}) + \frac{2}{1-p} \cov_w(\hat e_i(0), \tilde \bX_i^\top \gamma_\ast^{res}) \right\}
\end{align*} 
\end{proof} 

\section{Diagnostic Measure} 
We detail how to estimate the diagnostic measures in this section. 
To estimate the diagnostic for the post-residualized weighted estimator, we compute the estimated weighted variance of both the residuals and the outcomes for the units assigned to control: 
\begin{align} 
\hat R^2_0 &= 1 - \frac{\widehat{\var}_{w,0}(\hat e_i)}{\widehat{\var}_{w,0}(Y_i)} \nonumber \\
&= 1 - \frac{\sum_{i \in \cS} w_i^2 (1-T_i) (\hat e_i - \hat \mu^{res}_0)^2}{\sum_{i \in \cS} w_i^2 (1-T_i) (Y_i - \hat \mu_0)^2}
\label{eqn:r2_diag_est} 
\end{align} 
where $\hat \mu_0$ and $\hat \mu_0^{res}$ are defined as: 
\begin{equation} 
\hat \mu_0 = \frac{\sum_{i \in \cS} w_i (1-T_i) Y_i}{\sum_{i \in \cS} w_i (1-T_i)}, \ \ \ \ \  \ \ \hat \mu^{res}_0 = \frac{\sum_{i \in \cS} w_i (1-T_i) \hat e_i}{\sum_{i \in \cS} w_i (1-T_i)}
\end{equation} 

For the post-residualized weighted least squares estimator, estimating the diagnostic follows similarly, but we now have to account for the covariate adjustment taking place: 
\begin{align} 
\hat R^2_{0, wLS} &= 1 - \frac{\widehat{\var}_{w,0}(\hat e_i - \tilde \bX_i^\top \hat \gamma_0^{res})}{\widehat{\var}_{w,0}(Y_i - \tilde \bX_i^\top \hat \gamma_0)} \nonumber \\
&= 1 - \frac{\sum_{i \in \cS} w_i^2 (1-T_i) (\hat \epsilon^{res}_i - \hat \epsilon^{res}_0)^2}{\sum_{i \in \cS} w_i^2 (1-T_i) (\hat \epsilon_i - \hat \epsilon_0)^2},
\label{eqn:r2_wls_diag_est}
\end{align} 
where $\hat \epsilon_i$ represents the residuals estimated from regressing the outcomes $Y_i$ on the pre-treatment covariates $\tilde \bX_i$, across the subset of units assigned to control (i.e., $Y_i - \tilde \bX_i^\top \hat \gamma_0$, where $\hat \gamma_0$ is estimated by running the regression $Y_i \sim \tilde \bX_i$ across units assigned to control). $\hat\epsilon_i^{res}$ is analogously defined for the residualized outcomes $\hat e_i$. $\hat \epsilon_0$ and $\hat \epsilon_0^{res}$ are the weighted average of both $\hat \epsilon_i$ and $\hat \epsilon_i^{res}$, respectively.

When treating $\hat Y_i$ as a covariate, the diagnostic can be estimated in an analogous way, but by first performing sample splitting. More specifically, the procedure for including $\hat Y_i$ as a covariate for the weighted estimator is as follows: 
\begin{enumerate} 
\item Across the subset of units assigned to control, randomly partition the units into two subsets: $S_1$ and $S_2$. Without loss of generality, we will use $S_1$ as our training sample, and $S_2$ as our testing sample. 
\item Regress $\hat Y_i$ on the outcomes across $S_1$ to obtain a $\hat \beta$ value. 
\item Using $\hat \beta$, estimate the out-of-sample residuals $\hat e_i^{oos}$ across $S_2$, where $\hat e_i^{oos}:= Y_i - \hat \beta \hat Y_i$.
\item Estimate the diagnostic using $\hat e_i^{oos}$ and the outcomes $Y_i$ across $S_2$ using Equation \eqref{eqn:r2_diag_est}. 
\item Cross-fit: repeat steps 1-3, but flipping $S_1$ and $S_2$ (i.e., regress $\hat Y_i$ on the outcomes across $S_2$ to obtain a $\hat \beta$ value, and estimate the diagnostic across $S_1$). 
\item Average the two diagnostic values together. 
\end{enumerate} 

When including $\hat Y_i$ as a covariate for the weighted least squares estimator, researchers can repeat the procedure above; however, when estimating the diagnostic using $\hat e_i^{oos}$, researchers must account for $\tilde \bX_i$. More specifically: 
\begin{enumerate} 
\item Follow Steps 1-3 above to obtain $\hat e_i^{oos}$ across $S_1$. 
\item Regress $\hat e_i^{oos}$ on $\tilde \bX_i$, and regress $Y_i$ on $\tilde \bX_i$ across $S_2$. Use Equation \eqref{eqn:r2_wls_diag_est} to estimate the diagnostic value. 
\item Cross fit, and average the two diagnostic values together. 
\end{enumerate} 

When researchers have relatively small sample sizes, it can be advantageous to perform repeated sample splitting, and take the average of the diagnostic across all the repeated splits (see \cite{jacob2020cross} for more details).

\section{Simulations} \label{sec:appendix_sims} 
This section provides details associated with the simulations described in Section~\ref{sec:sims} 
of the main manuscript.

\subsection{Simulation Set-Up}  
To begin, we randomly generate a set of covariates $\begin{bmatrix} X_1 & X_2 & X_S & X_\tau \end{bmatrix} \sim MVN(\mathbf{0}, \Sigma)$ with the following covariance structure: 
$$\Sigma = \begin{bmatrix} 1 & 0 & 0.45 & 0.5 \\ 
0 & 1 & 0 & 0 \\ 
0.45 & 0 & 1 & 0.9 \\ 
0.5 & 0 & 0.9 & 1 \end{bmatrix} $$

\noindent where, recall, $(X_{1i}, X_{2i})$ are observed pre-treatment covariates, $X_{Si}$ controls the probability of inclusion in the experimental sample, and $X_{\tau i}$ determines the treatment effect.

Unit $i$'s propensity for being included in the experimental sample (recorded as $S_i = 1$) is governed by a logit model on the covariate $X_{Si}$: 
$$P(S_i = 1) \propto \frac{\exp(X_{Si})}{1 + \exp(X_{Si})}.$$ 
At each iteration of the simulation, an experimental sample is drawn using the propensity score, as well as a random sample of the population. The sampled population is used to estimate the residualizing model and sampling weights.

Each specific data generating process for the potential outcome under control is determined by the values of the $\beta$s and $\gamma$s and $\alpha$.  Below, we provide the parameter values and simplified DGP for $Y_i(0)$.
\begin{itemize} 
\item Scenario 1: Linear Data Generating Process, identical population/sample DGP

$\beta_1 = 2, \beta_2 = 1, \beta_3 = 0, \beta_S = 0, \gamma_1 = 0, \gamma_2 = 0, \gamma_3 = 0, \gamma_4 = 0, \alpha = 0$, yielding:
$$Y_i(0) = 2 X_{1i} + X_{2i}  + \varepsilon_i$$
\item Scenario 2: Nonlinear Data Generating Process, identical population/sample DGP

$\beta_1 = 2, \beta_2 = 1, \beta_3 = 0, \beta_S = 2.5, \gamma_1 = 0.5, \gamma_2 = 3, \gamma_3 = 2.5, \gamma_4 = 0, \alpha = 0$, yielding:
$$Y_i(0) = 2X_{1i} + X_{2i} + 0.5X_{1i}^2 + 3 \sqrt{|X_{2i}|} + 2.5 \big(X_{1i} \cdot X_{2i}\big) + \varepsilon_i$$
\item Scenario 3: Linear Data Generating Process, different population/sample DGP

$\beta_1 = 2, \beta_2 = 1, \beta_3 = -1, \beta_S = \beta_S, \gamma_1 = 0, \gamma_2 = 0, \gamma_3 = 0, \gamma_4 = 0, \alpha = 0.5$, yielding:
\begin{align*}
Y_i(0) =& 2 X_{1i} + X_{2i} + \beta_S \cdot (1-S_i) \cdot (0.5 - X_{1i}) + \varepsilon_i,
\end{align*}
\item Scenario 4: Nonlinear Data Generating Process, different population/sample DGP

$\beta_1 = 2, \beta_2 = 1, \beta_3 = -1, \beta_S = \beta_S, \gamma_1 = 0.5, \gamma_2 = 3, \gamma_3 = 2.5, \gamma_4 = 1.5, \alpha = 0.5$, yielding:
\begin{align*} 
Y_i(0) =& 2 X_{1i} + X_{2i} + 0.5 X_{1i}^2 + 3\sqrt{|X_{2i}|} + 2.5 \big(X_{1i} \cdot X_{2i}) \\ 
&\beta_S \cdot (1-S_i) \cdot (0.5- X_{1i} + 1.5 X_{1i} \cdot X_{2i}) + \varepsilon_i,
\end{align*} 
\end{itemize} 
For Scenarios 3 and 4, $\beta_S$ takes on values $\{-5, -2, -1, 0, 1, 2, 5\}$.

\subsection{Supplementary Tables} 

Table~\ref{tbl:summary_sim} presents summary results for estimator performance under Scenarios 1 and 2, including MSE, Bias, and SE.  Column 1 presents the baseline results for the difference-in-means (DiM).  Columns 2-4 present the results for the weighted estimators and columns 5-7 present results for the weighted least squares estimator.  For the weighted and weighted least squares estimators we present the standard estimator without residualizing, the directly residualized estimator and inclusion of $\hat{Y}$ as a covariate.

\begin{table}[t]
\begin{center} 
\textbf{Summary of Estimator Performance (N=10,000)} \\ \vspace{2mm} 
\begin{tabular}{ll|c|ccc|ccccc}
  \hline
   & & &  \multicolumn{3}{c|}{\underline{Weighted}} &  \multicolumn{3}{c}{\underline{Weighted Least Squares}} \\ 
  & & DiM & $\widehat{\tau}_W$ & $\widehat{\tau}^{res}_W $ & $\widehat{\tau}^{cov}_{W} $ & $\widehat{\tau}_{wLS}$ & $\widehat{\tau}^{res}_{wLS}$ & $\widehat{\tau}^{cov}_{wLS}$ \\
  \hline
  \multicolumn{6}{l}{Scenario 1: Linear Outcome Model} \\ \hline
n=100 & MSE & 36.44 & 30.05 & 1.48 & 1.34 & 1.34 & 1.34 & 1.30 \\ 
   & Bias & 3.60 & -0.13 & 0.05 & 0.12 & 0.19 & 0.19 & 0.27 \\ 
   & SE & 4.85 & 5.48 & 1.22 & 1.15 & 1.14 & 1.14 & 1.11 \\ 
  n=1000 & MSE & 16.41 & 2.98 & 0.17 & 0.15 & 0.14 & 0.14 & 0.13 \\ 
   & Bias & 3.74 & 0.00 & -0.01 & 0.00 & 0.00 & 0.00 & 0.01 \\ 
   & SE & 1.56 & 1.73 & 0.41 & 0.38 & 0.38 & 0.38 & 0.36 \\ 
  n=5000 & MSE & 14.39 & 0.64 & 0.04 & 0.03 & 0.03 & 0.03 & 0.03 \\ 
   & Bias & 3.72 & 0.01 & 0.00 & 0.00 & 0.01 & 0.01 & 0.01 \\ 
   & SE & 0.72 & 0.80 & 0.19 & 0.19 & 0.18 & 0.18 & 0.18 \\ \hline
    \multicolumn{7}{l}{Scenario 2: Nonlinear Outcome Model} \\ \hline
   n=100 & MSE & 70.71 & 58.80 & 8.25 & 8.20 & 36.59 & 8.16 & 8.04 \\ 
   & Bias & 3.44 & -0.30 & 0.09 & 0.14 & 0.04 & 0.23 & 0.26 \\ 
   & SE & 7.68 & 7.67 & 2.87 & 2.86 & 6.05 & 2.85 & 2.83 \\ 
  n=1000 & MSE & 20.37 & 5.58 & 0.82 & 0.80 & 3.53 & 0.79 & 0.78 \\ 
   & Bias & 3.78 & 0.05 & -0.00 & -0.00 & 0.05 & 0.00 & 0.01 \\ 
   & SE & 2.46 & 2.36 & 0.91 & 0.90 & 1.88 & 0.89 & 0.89 \\ 
  n=5000 & MSE & 14.80 & 1.17 & 0.18 & 0.18 & 0.83 & 0.17 & 0.17 \\ 
   & Bias & 3.68 & -0.02 & -0.01 & -0.01 & -0.03 & -0.01 & -0.00 \\ 
   & SE & 1.12 & 1.08 & 0.42 & 0.42 & 0.91 & 0.42 & 0.42 \\ 
   \hline
\end{tabular}
\end{center} 
\caption{Summary of estimator performance for Scenarios 1 and 2. The population is fixed at $N=10,000$, and $1,000$ iterations were run for each sample size. MSE is scaled by 100, and the bias and standard error are scaled by 10.} 
\label{tbl:summary_sim} 
\end{table}

Table~\ref{tbl:summary_sim3_4} presents summary results for estimator performance under Scenarios 3 and 4, including MSE and Bias.  In these scenarios we vary the value of $\beta_S$, presented in column 1, which controls the degree of alignment between the experimental sample outcomes and the population outcomes.  We fix the experimental sample size at $n = 1,000$. Columns 2-3 presents the baseline results for the difference-in-means (DiM).  Columns 4-9 present the results for the weighted estimators and columns 10-15 present results for the weighted least squares estimator.  For the weighted and weighted least squares estimators we present the standard estimator without residualizing, the directly residualized estimator and inclusion of $\hat{Y}$ as a covariate.

\begin{table}[ht]
\begin{center} 
\textbf{Summary of Estimator Performance - Scenario 3 and 4 (N = 10,000)} \\ \vspace{2mm} 
\resizebox{\textwidth}{!}{%
\begin{tabular}{l|cc|cccccc|ccccccc} \hline 
& & & \multicolumn{6}{c|}{\underline{Weighted}}& \multicolumn{6}{c}{\underline{Weighted Least Squares}}\\
 & \multicolumn{2}{c}{DiM} & \multicolumn{2}{|c}{$\widehat{\tau}_W$} & \multicolumn{2}{c}{$\widehat{\tau}^{res}_W $} & \multicolumn{2}{c}{$\widehat{\tau}^{cov}_{W} $} & \multicolumn{2}{|c}{$\widehat{\tau}_{wLS}$} & \multicolumn{2}{c}{$\widehat{\tau}^{res}_{wLS}$} & \multicolumn{2}{c}{$\widehat{\tau}^{cov}_{wLS}$} \\ \hline 
 $\beta_S$ & MSE & Bias & MSE & Bias & MSE & Bias & MSE & Bias & MSE & Bias & MSE & Bias & MSE & Bias \\\hline 
 \multicolumn{7}{l}{Scenario 3: Linear Outcome}\\ \hline 
-5 & 16.41 & 3.74 & 2.98 & 0.00 & 10.69 & -0.11 & 0.36 & -0.03 & 0.14 & 0.00 & 0.14 & 0.00 & 0.13 & 0.01 \\ 
  -2.5 & 15.83 & 3.67 & 3.07 & -0.06 & 2.55 & 0.06 & 0.25 & 0.01 & 0.14 & 0.02 & 0.14 & 0.02 & 0.13 & 0.04 \\ 
  -2 & 16.05 & 3.72 & 2.99 & 0.01 & 1.54 & 0.02 & 0.22 & 0.02 & 0.14 & 0.02 & 0.14 & 0.02 & 0.14 & 0.04 \\ 
  -1 & 16.11 & 3.73 & 2.88 & 0.05 & 0.39 & -0.02 & 0.16 & 0.00 & 0.14 & -0.00 & 0.14 & -0.00 & 0.13 & 0.02 \\ 
  -0.5 & 16.37 & 3.75 & 2.89 & 0.07 & 0.17 & -0.02 & 0.14 & -0.00 & 0.13 & 0.00 & 0.13 & 0.00 & 0.13 & 0.02 \\ 
  0 & 16.50 & 3.75 & 3.04 & 0.06 & 0.16 & 0.00 & 0.14 & 0.00 & 0.13 & 0.00 & 0.13 & 0.00 & 0.13 & 0.02 \\ 
  0.5 & 16.38 & 3.74 & 3.19 & 0.04 & 0.41 & 0.01 & 0.21 & 0.01 & 0.13 & 0.01 & 0.13 & 0.01 & 0.12 & 0.02 \\ 
  1 & 16.11 & 3.72 & 3.03 & 0.00 & 0.92 & 0.01 & 0.54 & 0.02 & 0.13 & -0.01 & 0.13 & -0.01 & 0.12 & 0.01 \\ 
  2 & 16.23 & 3.74 & 3.03 & 0.01 & 2.68 & 0.04 & 2.68 & 0.05 & 0.14 & -0.00 & 0.14 & -0.00 & 0.13 & 0.01 \\ 
  2.5 & 16.09 & 3.71 & 3.15 & -0.01 & 3.92 & 0.01 & 3.15 & -0.00 & 0.14 & -0.01 & 0.14 & -0.01 & 0.13 & 0.01 \\ 
  5 & 16.33 & 3.71 & 3.23 & 0.00 & 14.32 & -0.01 & 1.54 & 0.02 & 0.14 & -0.00 & 0.14 & -0.00 & 0.13 & 0.01 \\ 
\hline 
\multicolumn{7}{l}{Scenario 4: Nonlinear Outcome}\\ \hline
    -5 & 20.31 & 3.74 & 5.77 & 0.04 & 37.03 & -0.01 & 5.66 & 0.04 & 3.72 & 0.05 & 26.19 & 0.10 & 1.02 & 0.03 \\ 
  -2.5 & 20.31 & 3.74 & 6.17 & -0.01 & 9.55 & 0.10 & 5.10 & 0.04 & 3.96 & 0.05 & 7.57 & 0.06 & 1.67 & 0.04 \\ 
  -2 & 19.50 & 3.65 & 5.92 & -0.08 & 6.22 & -0.00 & 4.27 & -0.04 & 3.86 & -0.04 & 5.05 & -0.05 & 2.89 & -0.00 \\ 
  -1 & 19.77 & 3.73 & 5.71 & -0.02 & 2.18 & -0.08 & 2.14 & -0.07 & 3.91 & -0.09 & 1.92 & -0.05 & 1.08 & 0.02 \\ 
  -0.5 & 19.75 & 3.68 & 5.70 & -0.06 & 1.10 & -0.03 & 1.09 & -0.03 & 3.96 & -0.12 & 1.06 & -0.01 & 0.83 & 0.05 \\ 
  0 & 19.74 & 3.69 & 5.81 & -0.04 & 0.81 & 0.01 & 0.80 & 0.01 & 3.71 & -0.05 & 0.77 & 0.02 & 0.77 & 0.02 \\ 
  0.5 & 20.49 & 3.83 & 5.40 & 0.09 & 1.42 & 0.03 & 1.30 & 0.03 & 3.65 & 0.04 & 1.09 & 0.01 & 0.75 & 0.02 \\ 
  1 & 20.24 & 3.80 & 5.52 & 0.08 & 2.84 & -0.05 & 2.04 & -0.01 & 3.95 & 0.06 & 1.91 & -0.07 & 0.80 & -0.01 \\ 
  2 & 20.03 & 3.72 & 5.83 & 0.05 & 7.99 & -0.02 & 3.04 & 0.02 & 4.24 & 0.03 & 5.27 & -0.06 & 0.84 & -0.00 \\ 
  2.5 & 20.45 & 3.74 & 6.04 & 0.06 & 12.15 & -0.11 & 3.51 & -0.01 & 4.28 & 0.05 & 8.32 & -0.09 & 0.85 & -0.00 \\ 
  5 & 20.80 & 3.75 & 6.29 & 0.08 & 45.95 & -0.25 & 5.05 & 0.02 & 4.09 & 0.06 & 29.97 & -0.27 & 0.92 & -0.02 \\ 
 \hline 
\end{tabular}}
\end{center} 
\caption{Summary of estimator performance for Scenarios 3 and 4, where $n=1,000$ and $N=10,000$. 1,000 iterations were run for each $\beta_S$ value. The bias is scaled by 10, and the MSE is scaled by 100.} 
\label{tbl:summary_sim3_4} 
\end{table}

In Table~\ref{tab:simulation_tpr_tnr} we summarize the true positive and true negative rates for the diagnostic measures for the post-residualized estimators.\footnote{True positive rates were calculated by taking the total number of true positives (i.e., cases where the diagnostic correctly indicated there would be efficiency gain from residualizing) and dividing by the total number of cases in which residualizing led to efficiency gain. True negatives are similarly defined.} Column 1 presents the value of $\beta_S$. Columns 2-9 present the post-residualized weighted, post-residualized weighted least squares, the post-residualized weighted estimator with $\hat{Y}$ as a covariate, and the post-residualized weighted least squares estimator with $\hat{Y}$ as a covariate, respectively.   We see that in general, the diagnostic measures are able to adequately capture when residualizing results in precision gain. We see that using sample splitting to estimate the pseudo-$R^2$ measure for the case in which we include $\hat Y_i$ as a covariate can sometimes be conservative, which results in a low true positive rate in cases when the divergence between the experimental sample and population are rather large. In cases where residualizing always leads to losses or gains in precision, the total number of true positive or true negative rates is zero (respectively). 

\begin{table}[ht]
\begin{center} 
\textbf{Diagnostic Performance across Simulations} \\ \vspace{2mm} 
\resizebox{\textwidth}{!}{%
\begin{tabular}{l|cc|cc|cc|cc}
  \hline
  & \multicolumn{2}{c|}{$\widehat{\tau}^{res}_W $} & \multicolumn{2}{c|}{$\widehat{\tau}^{cov}_{W} $} &
  \multicolumn{2}{c|}{$\widehat{\tau}^{res}_{wLS}$}&
  \multicolumn{2}{c}{$\widehat{\tau}^{cov}_{wLS}$} \\ 
$\beta_S$ & TPR & TNR & TPR & TNR & TPR & TNR & TPR & TNR \\ 
  \hline
  \multicolumn{6}{l}{Scenario 3: Linear Outcomes} \\ \hline 
-5 & 0/0 & 1000/1000 & 1000/1000 & 0/0 & 207/472 & 329/528 & 338/705 & 166/295 \\ 
  -2.5 & 1/942 & 58/58 & 1000/1000 & 0/0 & 203/499 & 304/501 & 308/694 & 177/306 \\ 
  -2 & 999/1000 & 0/0 & 1000/1000 & 0/0 & 216/514 & 288/486 & 310/689 & 175/311 \\ 
  -1 & 1000/1000 & 0/0 & 1000/1000 & 0/0 & 219/525 & 287/475 & 293/689 & 188/311 \\ 
  -0.5 & 1000/1000 & 0/0 & 1000/1000 & 0/0 & 214/519 & 282/481 & 293/689 & 183/311 \\ 
  0 & 1000/1000 & 0/0 & 1000/1000 & 0/0 & 223/523 & 275/477 & 283/683 & 177/317 \\ 
  0.5 & 1000/1000 & 0/0 & 1000/1000 & 0/0 & 222/536 & 268/464 & 260/666 & 194/334 \\ 
  1 & 1000/1000 & 0/0 & 1000/1000 & 0/0 & 233/519 & 283/481 & 254/669 & 199/331 \\ 
  2 & 999/1000 & 0/0 & 998/1000 & 0/0 & 228/490 & 321/510 & 297/695 & 175/305 \\ 
  2.5 & 0/0 & 999/1000 & 188/490 & 346/510 & 209/466 & 336/534 & 341/705 & 149/295 \\ 
  5 & 0/0 & 1000/1000 & 1000/1000 & 0/0 & 214/486 & 303/514 & 322/699 & 155/301 \\ \hline 
  \multicolumn{6}{l}{Scenario 4: Nonlinear Outcomes} \\ \hline 
  -5 & 0/0 & 1000/1000 & 360/718 & 224/282 & 0/0 & 1000/1000 & 58/1000 & 0/0 \\ 
  -2.5 & 0/0 & 998/1000 & 881/985 & 10/15 & 0/0 & 1000/1000 & 0/1000 & 0/0 \\ 
  -2 & 87/217 & 738/783 & 950/996 & 2/4 & 0/0 & 998/1000 & 0/994 & 5/6 \\ 
  -1 & 1000/1000 & 0/0 & 1000/1000 & 0/0 & 1000/1000 & 0/0 & 1000/1000 & 0/0 \\ 
  -0.5 & 1000/1000 & 0/0 & 1000/1000 & 0/0 & 1000/1000 & 0/0 & 1000/1000 & 0/0 \\ 
  0 & 1000/1000 & 0/0 & 1000/1000 & 0/0 & 1000/1000 & 0/0 & 1000/1000 & 0/0 \\ 
  0.5 & 1000/1000 & 0/0 & 1000/1000 & 0/0 & 1000/1000 & 0/0 & 1000/1000 & 0/0 \\ 
  1 & 1000/1000 & 0/0 & 1000/1000 & 0/0 & 999/1000 & 0/0 & 1000/1000 & 0/0 \\ 
  2 & 13/28 & 907/972 & 1000/1000 & 0/0 & 22/28 & 906/972 & 1000/1000 & 0/0 \\ 
  2.5 & 0/0 & 1000/1000 & 1000/1000 & 0/0 & 0/0 & 1000/1000 & 1000/1000 & 0/0 \\ 
  5 & 0/0 & 1000/1000 & 999/1000 & 0/0 & 0/0 & 1000/1000 & 1000/1000 & 0/0 \\ 
   \hline
\end{tabular}}
\end{center} 
\caption{True positive rates (TPR) and true negative rates (TNR) for the diagnostic measures. \label{tab:simulation_tpr_tnr}}
\end{table}

Finally, in Table~\ref{tab:simulation_coverage} we evaluate the 95\% coverage rates for the proposed post-residualized estimators. We see that in all scenarios, we achieve at least nominal coverage. When the population and sample data generating processes diverge significantly, we showed in the previous sections that there could be a loss in efficiency from using post residualized weighting. However, coverage rates are not affected by residualizing.

\begin{table}[ht]
\begin{center}  
\textbf{Coverage Rates} \\ \vspace{2mm} 
\begin{tabular}{l|ccc|ccc}
\hline
& \multicolumn{3}{c|}{\underline{Weighted}} & \multicolumn{3}{c}{\underline{Weighted Least Squares}} \\
$\beta_S$ & $\widehat{\tau}_W$ & $\widehat{\tau}^{res}_W $ & $\widehat{\tau}^{cov}_{W} $ & $\widehat{\tau}_{wLS}$ & $\widehat{\tau}^{res}_{wLS}$ & $\widehat{\tau}^{cov}_{wLS}$\\ \hline 
\multicolumn{6}{l}{Scenario 3: Linear Outcome} \\   \hline
-5 & 0.95 & 0.95 & 0.97 & 0.99 & 0.99 & 0.99 \\ 
  -2.5 & 0.95 & 0.96 & 0.97 & 0.98 & 0.98 & 0.98 \\ 
  -2 & 0.95 & 0.97 & 0.98 & 0.97 & 0.97 & 0.98 \\ 
  -1 & 0.95 & 0.97 & 0.98 & 0.98 & 0.97 & 0.98 \\ 
  -0.5 & 0.95 & 0.98 & 0.99 & 0.98 & 0.98 & 0.98 \\ 
  0 & 0.95 & 0.99 & 0.98 & 0.99 & 0.99 & 0.99 \\ 
  0.5 & 0.95 & 0.97 & 0.98 & 0.99 & 0.99 & 0.99 \\ 
  1 & 0.96 & 0.95 & 0.95 & 0.98 & 0.98 & 0.98 \\ 
  2 & 0.95 & 0.94 & 0.94 & 0.98 & 0.98 & 0.98 \\ 
  2.5 & 0.94 & 0.94 & 0.94 & 0.98 & 0.98 & 0.98 \\ 
  5 & 0.94 & 0.94 & 0.95 & 0.98 & 0.98 & 0.99 \\ \hline 
  \multicolumn{6}{l}{Scenario 4: Nonlinear Outcome} \\ \hline 
   -5 & 0.95 & 0.96 & 0.95 & 0.96 & 0.96 & 0.96 \\ 
  -2.5 & 0.94 & 0.96 & 0.95 & 0.96 & 0.96 & 0.95 \\ 
  -2 & 0.96 & 0.96 & 0.96 & 0.96 & 0.96 & 0.96 \\ 
  -1 & 0.96 & 0.97 & 0.97 & 0.95 & 0.96 & 0.95 \\ 
  -0.5 & 0.95 & 0.95 & 0.96 & 0.95 & 0.95 & 0.96 \\ 
  0 & 0.94 & 0.96 & 0.96 & 0.95 & 0.96 & 0.96 \\ 
  0.5 & 0.95 & 0.96 & 0.96 & 0.96 & 0.96 & 0.97 \\ 
  1 & 0.95 & 0.94 & 0.96 & 0.95 & 0.96 & 0.96 \\ 
  2 & 0.95 & 0.95 & 0.96 & 0.94 & 0.95 & 0.96 \\ 
  2.5 & 0.96 & 0.95 & 0.96 & 0.94 & 0.95 & 0.96 \\ 
  5 & 0.96 & 0.94 & 0.94 & 0.94 & 0.95 & 0.96 \\ 
   \hline
\end{tabular}
\end{center} 
\caption{95\% coverage rates of Normal approximation confidence intervals across 1000 simulations. \label{tab:simulation_coverage}}
\end{table}

\section{Additional Information for Empirical Application} \label{app:empirical}  

As discussed in Section~\ref{sec:empirical}, 
we construct our target population using a leave-one-out procedure.  Table~\ref{tbl:jtpa_summary} provides a summary of the site specific and target population average treatment effects.  More specifically, the difference-in-means (DiM) columns denote the experimental estimate in the specific site. The target PATE is defined as the average difference-in-means estimate across the other 15 sites.  Standard errors are presented in parentheses. Certain sites, such as MT (Butte, MT) contain only 38 experimental units, and the point estimate of the experimental site DiM is vastly different from the target PATE. Thus, we expect the task of generalizing to be more difficult for these sites.

\begin{table}[ht]
\begin{center} 
\textbf{Summary of Experimental Sites and Target Population} \\ \vspace{2mm} 
\resizebox{\textwidth}{!}{%
\begin{tabular}{cllcc|cc|cc}
\hline
& & & & & \multicolumn{2}{c|}{Earnings} & \multicolumn{2}{c}{Employment}  \\
 &  & Expt. & Target Pop & Prob. of & \multicolumn{2}{c|}{(in \$1000)} & \multicolumn{2}{c}{(Percentage)} \\ \cline{6-9}
Site & Location  & Size ($n$) & Size ($N$) & Treatment & DiM & Target PATE & DiM & Target PATE \\   \hline
CC & Corpus Christi, TX & 524 & 5578 & 0.65 & -0.21 (1.16) & 1.37 (1.16) & -0.28 (3.2) & 1.8 (3.2) \\ 
  CI & Cedar Rapids, IA & 190 & 5912 & 0.63 & 1.35 (1.89) & 1.24 (1.89) & -0.77 (5.07) & 1.71 (5.07) \\ 
  CV & Coosa Valley, GA & 788 & 5314 & 0.66 & 1.63 (0.95) & 1.18 (0.95) & 5.95 (2.63) & 0.98 (2.63) \\ 
  HF & Heartland, FL & 234 & 5868 & 0.73 & 0.95 (1.38) & 1.28 (1.38) & 6.8 (5.07) & 1.42 (5.07) \\ 
  IN & Fort Wayne, IN & 1392 & 4710 & 0.67 & 1.73 (0.83) & 1.1 (0.83) & -0.4 (1.58) & 2.23 (1.58) \\ 
  JC & Jersey City, NJ &  81 & 6021 & 0.64 & -0.53 (3.01) & 1.27 (3.01) & -2.39 (9.66) & 1.67 (9.66) \\ 
  JK & Jackson, MO & 353 & 5749 & 0.67 & 2.16 (1.22) & 1.19 (1.22) & 5.66 (4.16) & 1.38 (4.16) \\ 
  LC & Larimer County, CO & 485 & 5617 & 0.69 & 1.61 (1.32) & 1.21 (1.32) & -1.97 (3.24) & 1.93 (3.24) \\ 
  MD & Decatur, IL & 177 & 5925 & 0.70 & 1.24 (2.5) & 1.23 (2.5) & 0.03 (5.24) & 1.67 (5.24) \\ 
  MN & Northwest MN & 179 & 5923 & 0.67 & -1.43 (2.3) & 1.32 (2.3) & -0.52 (6.26) & 1.69 (6.26) \\ 
  MT & Butte, MT &  38 & 6064 & 0.71 & -5.21 (4.1) & 1.27 (4.1) & -7.41 (5.14) & 1.67 (5.14) \\ 
  NE & Omaha, NE & 636 & 5466 & 0.66 & 1.11 (0.98) & 1.25 (0.98) & -1.15 (2.56) & 1.98 (2.56) \\ 
  OH & Marion, OH &  74 & 6028 & 0.70 & -2.99 (2.71) & 1.3 (2.71) & -6.82 (10.37) & 1.74 (10.37) \\ 
  OK & Oakland, CA &  87 & 6015 & 0.64 & 1.83 (3.48) & 1.24 (3.48) & 3.34 (10.77) & 1.57 (10.77) \\ 
  PR & Providence, RI & 463 & 5639 & 0.69 & 3.03 (1.34) & 1.12 (1.34) & 6.78 (4.58) & 1.34 (4.58) \\ 
  SM & Springfield, MO & 401 & 5701 & 0.67 & 0.6 (1.31) & 1.29 (1.31) & 5.44 (3.34) & 1.36 (3.34) \\
   \hline
\end{tabular}}
\end{center} 
\caption{Summary of the JTPA study.} 
\label{tbl:jtpa_summary} 
\end{table}

\subsection{Estimating the Residualizing Model} 
Pre-treatment covariates were taken from the baseline survey conducted at the beginning of the original JTPA experiment, to assess whether or not individuals were eligible for JTPA services. A full list of the covariates included in the residualizing model is provided in Table \ref{tbl:baseline_cov}. In addition to the pre-treatment covariates, we also include normalized measures of previous earnings. Specifically, we include the $z$-score of an individual's previous earnings, relative to the experimental site, as well as the $z$-score of an individual's previous earnings, relative to the entire population. 

\begin{landscape} 
\begin{table}[ht]
\begin{center} 
\textbf{Baseline Covariates included in Residualizing Models} \\ \vspace{2mm} 
\begin{tabular}{llllllll}
\toprule
\textbf{Ethnicity} &  & \textbf{Weeks Worked$^\dagger$} &  & \textbf{Public Assistance History} &  & \textbf{Family Income$^\dagger$} &  \\
White &  & Zero &  & Food Stamps &  & Less than \$3,000 &  \\
Black &  & 1-26 weeks &  & Cash Welfare, other than AFDC &  & \$3,000-\$6,000 &  \\
Hispanic &  & 27-52 weeks &  & Unemployment Benefits &  & More than \$6,000 &  \\
AAPI &  &  &  &  &  &  &  \\
 &  & \textbf{Earnings} &  & \textbf{AFDC Histories} &  & \textbf{Accessibility} &  \\
\textbf{Education} &  & Previous Earnings$^\ddag$ & * & Ever AFDC case head &  & Driver's License &  \\
ABE/ESL &  & Weekly Pay & * & Case head anytime$^\dagger$ &  & Car available for regular use &  \\
High school diploma &  & Quantile within Site & * & Received AFDC$^\dagger$ &  & Telephone at home &  \\
GED certificate &  & \hspace{3mm}$<$ 25\% & * & Years as AFDC case head: &  &  &  \\
Some college &  & \hspace{3mm}$>$ 50\% & * & \hspace{3mm}Less than 2 years &  & \textbf{Household Composition} &  \\
Occupational Training &  & \hspace{3mm}$>$ 90\% & * & \hspace{3mm}2-5 years &  & Marital Status &  \\
Technical certificate &  & Quantile across Experiment & * & \hspace{3mm}More than 5 years &  & Spouse present &  \\
Job search assistance &  & \hspace{3mm}$<$ 25\% & * &  &  & Household Size &  \\
Years of Education$^\ddag$ & * & \hspace{3mm}$>$ 50\% & * & \textbf{Age} &  & Number of children present &  \\
 &  & \hspace{3mm}$>$ 90\% & * & Age$^\ddag$ & * & Child under 6 present &  \\
\textbf{Work History} &  & Non-Zero Previous Earnings & * & Age Buckets &  &  &  \\
Ever employed &  & UI Reported Earnings &  & \hspace{3mm}20-21 &  & \textbf{Geographic Region} &  \\
Employed upon application &  &  &  & \hspace{3mm}22-29 &  & West & * \\
Total earnings$^\dagger$ &  & \textbf{Living in Public Housing} &  & \hspace{3mm}30-44 &  & Midwest & * \\
Hourly earnings &  & Yes &  & \hspace{3mm}45-54 &  & South & * \\
Hours worked & * &  &  & \hspace{3mm}55 or older &  & North & * \\ \bottomrule
\end{tabular}
\end{center} 
\caption{We provide a list of all of the covariates included in the Super Learner. Many of these variables were included in the original JTPA study's regression model. Any variable denoted with an asterisk $(*)$ was not included in the original JTPA study's regression model. $^\dagger$ indicates that the measure is from the past 12 months prior to the baseline survey, $^\ddag$ indicates higher order terms included of that variable.} 
\label{tbl:baseline_cov} 
\end{table}
\end{landscape} 

\subsection{Numerical Results for Empirical Application}

Table~\ref{tbl:jtpa_coverage} provides numerical results for the mean absolute error across all 16 experimental sites for the six different estimators. We note that the mean absolute error of the point estimates do not vary substantially from using post-residualized weighting.  This supports the results in Section~\ref{sec:empirical_bias}.

\begin{table}[t]
  \begin{center} 
  \scalebox{1}{
\begin{tabular}{lccccccc}
  \hline
  & \multicolumn{3}{c}{Weighted Estimator} && \multicolumn{3}{c}{Weighted Least Squares} \\ 
  \cline{2-4} \cline{6-8}
& $\widehat{\tau}_W$ & $\widehat{\tau}^{res}_W $& $\widehat{\tau}^{cov}_{W} $ && $\widehat{\tau}_{wLS}$ & $\widehat{\tau}^{res}_{wLS}$ & $\widehat{\tau}^{cov}_{wLS}$  \\ 
  \hline
  Earnings  & 2.37 & 2.07 & 2.19 && 2.46 & 2.28 & 2.24 \\ 
Employment ($\times$ 100) & 8.53 & 8.06 & 8.21 && 7.95 & 7.65 & 8.01 \\ 
   \hline
\end{tabular}}
\end{center} 
\caption{Mean absolute error across sites.}
\label{tbl:jtpa_coverage}
\end{table}

Table~\ref{tab:good_fit} reports the estimated standard errors (columns 3-5 for weighted estimators and columns 8-10 for weighted least squares estimators) for each site, along with the estimated diagnostics (columns 6-7 for weighted estimators and columns 11-12 for weighted least squares estimators). In general, the diagnostics are able to adequately determine whether or not we expect there to be improvements in standard error for accounting for the population outcome information, as discussed in Section~\ref{sec:empirical_diag}.

\begin{table}[!ht]
\begin{center} 
\textbf{Standard Errors and Diagnostics for Residualizing Models for Residualizing Models} \\
\vspace{2mm} 
\resizebox{\textwidth}{!}{%
\begin{tabular}{lc|ccccc|ccccc}
\hline
 & & \multicolumn{5}{c|}{Weighted} &  \multicolumn{5}{c}{Weighted Least Squares} \\\cmidrule(lr){3-7} \cmidrule(lr){8-12}
 Site & $n$ & $\widehat{\tau}_W$ & $\widehat{\tau}^{res}_W $ & $\widehat{\tau}^{cov}_{W} $ & $\hat R^2_0$ & $\hat R^2_{0,cov}$ & $\widehat{\tau}_{wLS}$ & $\widehat{\tau}^{res}_{wLS}$ & $\widehat{\tau}^{cov}_{wLS}$ & $\hat R_{0,wLS}^2$ & $\hat R^2_{0, wLS, cov}$ \\ \hline
\multicolumn{12}{l}{Outcome: Earnings} \\ \hline
NE & 636 & 1.70 & 1.53 & 1.53 & 0.23 & 0.22 & 1.58 & 1.53 & 1.53 & 0.08 & 0.06 \\ 
  LC & 485 & 2.46 & 2.02 & 2.11 & 0.42 & 0.32 & 2.40 & 2.08 & 2.14 & 0.38 & 0.26 \\ 
  HF & 234 & 1.88 & 1.63 & 1.66 & 0.36 & 0.19 & 1.87 & 1.66 & 1.69 & 0.42 & 0.18 \\ 
  IN & 1392 & 1.03 & 0.93 & 0.92 & 0.25 & 0.26 & 1.00 & 0.91 & 0.91 & 0.22 & 0.21 \\ 
  CV & 788 & 1.40 & 1.25 & 1.22 & 0.04 & 0.01 & 1.36 & 1.23 & 1.20 & 0.08 & 0.07 \\ 
  CC & 524 & 2.51 & 2.52 & 2.48 & -0.06 & -0.18 & 2.42 & 2.42 & 2.39 & -0.13 & -0.23 \\ 
  JK & 353 & 2.29 & 2.28 & 2.25 & 0.19 & -0.25 & 2.19 & 2.18 & 2.16 & 0.30 & 0.10 \\ 
  MT &  38 & 6.44 & 7.04 & 8.40 & -0.36 & -9.31 & 4.64 & 4.83 & 6.09 & 0.40 & -3.90 \\ 
  PR & 463 & 2.69 & 2.61 & 2.60 & 0.08 & -0.16 & 2.82 & 2.75 & 2.71 & 0.03 & -0.17 \\ 
  MN & 179 & 4.79 & 4.70 & 4.80 & -0.03 & -0.35 & 3.72 & 4.26 & 4.20 & -0.31 & -0.56 \\ 
  MD & 177 & 2.87 & 2.46 & 2.48 & 0.33 & 0.24 & 2.67 & 2.30 & 2.32 & 0.30 & 0.13 \\ 
  SM & 401 & 2.07 & 2.28 & 2.12 & -0.30 & -0.13 & 2.13 & 2.23 & 2.11 & -0.14 & -0.09 \\ 
  OH &  74 & 3.97 & 3.27 & 3.42 & 0.33 & -0.22 & 3.94 & 3.75 & 3.77 & 0.29 & -0.37 \\ 
  CI & 190 & 3.84 & 3.33 & 3.07 & 0.41 & 0.31 & 3.47 & 3.15 & 2.94 & 0.28 & -0.18 \\ 
  OK &  87 & 4.69 & 5.07 & 4.64 & -0.05 & -0.43 & 4.61 & 4.39 & 4.22 & 0.14 & -0.19 \\ 
  JC &  81 & 7.24 & 8.81 & 8.50 & -0.75 & -1.15 & 6.14 & 7.51 & 6.56 & -0.19 & -0.49 \\ 
  \hline
      \multicolumn{12}{l}{Outcome: Employment} \\ 
  \hline
  NE & 636 & 0.04 & 0.04 & 0.04 & 0.03 & -0.01 & 0.04 & 0.04 & 0.04 & 0.02 & -0.00 \\ 
  LC & 485 & 0.06 & 0.06 & 0.05 & 0.19 & 0.20 & 0.06 & 0.06 & 0.05 & 0.11 & 0.09 \\ 
  HF & 234 & 0.06 & 0.06 & 0.06 & 0.04 & -0.03 & 0.06 & 0.06 & 0.06 & 0.02 & -0.03 \\ 
  IN & 1392 & 0.02 & 0.02 & 0.02 & -0.15 & -0.04 & 0.02 & 0.02 & 0.02 & -0.21 & -0.08 \\ 
  CV & 788 & 0.03 & 0.03 & 0.03 & -0.01 & -0.01 & 0.03 & 0.03 & 0.03 & -0.03 & -0.01 \\ 
  CC & 524 & 0.06 & 0.06 & 0.06 & -0.09 & -0.10 & 0.06 & 0.06 & 0.06 & -0.10 & -0.08 \\ 
  JK & 353 & 0.10 & 0.09 & 0.09 & 0.13 & -1.53 & 0.09 & 0.09 & 0.09 & 0.08 & -0.85 \\ 
  MT &  38 & 0.13 & 0.13 & 0.13 & --- & ---  & 0.13 & 0.15 & 0.14 & --- &  --- \\ 
  PR & 463 & 0.06 & 0.06 & 0.06 & 0.03 & -0.05 & 0.07 & 0.06 & 0.07 & 0.03 & -0.03 \\ 
  MN & 179 & 0.13 & 0.12 & 0.11 & 0.23 & -3.09 & 0.12 & 0.11 & 0.11 & 0.21 & -0.89 \\ 
  MD & 177 & 0.09 & 0.08 & 0.08 & 0.19 & -0.06 & 0.08 & 0.08 & 0.08 & 0.20 & -4.0e28 \\ 
  SM & 401 & 0.06 & 0.06 & 0.06 & 0.07 & -0.15 & 0.06 & 0.06 & 0.06 & 0.05 & -0.03 \\ 
  OH &  74 & 0.07 & 0.06 & 0.07 & 0.09 & -1.78 & 0.08 & 0.07 & 0.08 & 0.08 & -1.8e28 \\ 
  CI & 190 & 0.04 & 0.04 & 0.05 & 0.19 & -0.07 & 0.05 & 0.05 & 0.05 & 0.27 & -4.4e28 \\ 
  OK &  87 & 0.21 & 0.19 & 0.19 & 0.12 & -2.03 & 0.17 & 0.17 & 0.18 & 0.21 & -1.35 \\ 
  JC &  81 & 0.16 & 0.14 & 0.14 & -0.88 & -3.88 & 0.12 & 0.13 & 0.13 & -0.92 & -0.58 \\ 
   \hline
\end{tabular}}
\end{center}
\caption{Standard error and diagnostic values for post-residualized weighting across the 16 experimental sites for two primary outcomes--earnings and employment. The diagnostic values for the site of Butte, Montana (MT) are null when outcome is employment, because all units in the control group were unemployed.}
\label{tab:good_fit} 
\end{table}

Finally, Table~\ref{tab:diag} presents the true positive rate and false positive rate for our diagnostics across the sites where the diagnostic indicated residualizing would increase precision (or not).  We present these counts for both outcomes, separately.

\begin{table}[ht]
  \begin{center} 
    \begin{tabular}{lccccccc}
      \hline
      & \multicolumn{2}{c}{Weighted Estimator} & & \multicolumn{2}{c}{Weighted Least Squares} \\ 
      \cline{2-3} \cline{5-6}
      & $\widehat{\tau}^{res}_W $& $\widehat{\tau}^{cov}_{W} $ & & $\widehat{\tau}^{res}_{wLS}$ & $\widehat{\tau}^{cov}_{wLS}$\\ 
      \hline
      Earnings & \\
      \quad True Positive Rate & 10/11 & 7/12 & & 11/11 & 7/13 \\       
      \quad True Negative Rate & 5/5  &4/4  & & 4/5 & 3/3\\ \midrule
      Employment  \\
      \quad True Positive Rate & 11/13 & 1/12 & & 10/10 & 1/7 \\ 
      \quad True Negative Rate & 3/3 & 4/4 & & 5/6& 8/9 \\  
      \hline
    \end{tabular}
    \end{center} 
  \caption{Performance of proposed diagnostic measures, as measured through the true positive rate and false positive rate.}\label{tab:diag}
\end{table}

\subsection{Using Proxy Outcomes}
To illustrate use of a proxy outcome, we run the same analysis as in Section~\ref{sec:empirical}, 
except we use employment as a proxy for earnings, and vice versa when building the residualizing model.  This mimics a situation in which we have access to a related, but different outcome measure in our target population.  Because employment is binary while earnings are continuous, we expect that direct residualizing may not result in substantial efficiency gains, and thus that our diagnostic measures would indicate not to residualize. However, treating $\hat Y_i$ as a covariate should still result in efficiency gain, as earnings and employment are correlated and the model can adjust for the scaling differences.

\subsubsection{Bias} Table~\ref{tbl:jtpa_coverage_proxy} presents the mean absolute error of the different estimation methods. When earnings is the outcome, both directly residualizing and using $\hat Y_i$ as a covariate result in relatively stable performance. However, when employment is the outcome, the scaling differences between earnings (in $\$1000$) and the binary employment measure lead to large residuals.  We see a loss to precision from direct residualizing, and exacerbated finite sample bias.  However, when including $\hat Y_i$ as a covariate, we are able to account for the scaling differences, and the mean absolute error is lower.

\begin{table}[ht]
\begin{center} 
\textbf{Estimator Performance Summary with Proxy Outcomes}\\ \vspace{2mm}
\begin{tabular}{lccccccc}
  \hline
  & \multicolumn{3}{c}{Weighted} && \multicolumn{3}{c}{Weighted Least Squares} \\ 
  \cline{2-4} \cline{6-8}
& $\widehat{\tau}_W$ & $\widehat{\tau}^{res}_W $ & $\widehat{\tau}^{cov}_{W} $ && $\widehat{\tau}_{wLS}$ & $\widehat{\tau}^{res}_{wLS}$ & $\widehat{\tau}^{cov}_{wLS}$ \\ 
  \hline
  Earnings & 2.37 & 2.35 & 2.14 && 2.46 & 2.44 & 2.21 \\ 
Employment ($\times$ 100) & 8.53 & 66.15 & 7.85 && 7.95 & 65.33 & 7.45 \\ 
   \hline
\end{tabular}
\end{center} 
\caption{Mean absolute errors for each estimator, across all experimental sites when using proxy outcomes.} 
\label{tbl:jtpa_coverage_proxy}
\end{table}

\subsubsection{Diagnostics} We estimate the same diagnostics as in Section~\ref{sec:empirical_bias} 
to determine when to expect precision gains from performing post-residualized weighting. We summarize the true positive and true negative rates of the diagnostic in Table \ref{tab:diag_proxy}.  We see that the performance of the diagnostic is good for direct residualization.  However, we see that the diagnostic for including $\hat Y_i$ as a covariate is relatively conservative, and fails to identify all the cases in which it is beneficial to residualize. However, the true negative rate of the diagnostic for including $\hat Y_i$ as a covariate is very high (almost 100\%), which indicates that the diagnostic is very effective at identifying when residualizing fails to lead to precision gain.

\begin{table}[!ht]
  \begin{center} 
    \begin{tabular}{lccccccc}
      \hline
      & \multicolumn{2}{c}{Weighted Estimator} & & \multicolumn{2}{c}{Weighted Least Squares} \\ 
      \cline{2-3} \cline{5-6}
      & $\widehat{\tau}^{res}_W $& $\widehat{\tau}^{cov}_{W} $ & & $\widehat{\tau}^{res}_{wLS}$ & $\widehat{\tau}^{cov}_{wLS}$\\ 
      \hline
      Earnings & \\
      \quad True Positive Rate & 13/14 & 7/12 & & 12/13 & 6/13  \\       
      \quad True Negative Rate &  2/2  &4/4  & & 2/3 & 3/3\\ \midrule
      Employment  \\
      \quad True Positive Rate & -- & 2/12 & & -- & 3/10 \\ 
      \quad True Negative Rate & 16/16 & 4/4 & & 16/16& 5/6 \\  
      \hline
    \end{tabular}
    \end{center}
  \caption{Performance of proposed diagnostic measures using proxy outcomes, as measured through the true positive rate and false positive rate.}\label{tab:diag_proxy}
\end{table}

Table \ref{tab:good_fit_proxy} provides the standard errors and diagnostic measures for each site and estimator.  Within the ``Weighted'' and ``Weighted Least Squares'' sections, the left three columns present the standard error for the corresponding estimator for each site, and the right two columns present the diagnostic measure.  One key takeaway is that, when employment is the outcome, using earnings as a proxy outcome results in large scaling differences between our residualizing model, captured by $\hat Y_i$, and the true outcome measure.  This is unsurprising since earnings is continuous and employment is binary.  As a result, the $\hat R^2_0$ measures for the estimators that use direct residualizing (i.e., $\hat \tau_W^{res}$ and $\hat \tau_{wLS}^{res}$) are all negative, indicating that we should not use direct residualizing in that setting.  However, even in this scenario, the diagnostic for using $\hat Y_i$ as a covariate does not indicate significant gains.  When using employment as a proxy for earnings, the diagnostics indicate small gains to direct residualizing across most sites, and gains from including $\hat Y_i$ as a covariate across about half of sites.

\begin{table}[!ht]
\begin{center} 
\textbf{Standard Errors and Diagnostics for Residualizing Models with Proxy Outcomes} \\
\vspace{2mm} 
\resizebox{\textwidth}{!}{%
\begin{tabular}{lc|ccccc|ccccc}
\hline
 & & \multicolumn{5}{c|}{Weighted} &  \multicolumn{5}{c}{Weighted Least Squares} \\\cmidrule(lr){3-7} \cmidrule(lr){8-12}
 Site & $n$ & $\widehat{\tau}_W $ & $\widehat{\tau}^{res}_W $ & $\widehat{\tau}^{cov}_{W} $ & $\hat R^2_0$ & $\hat R^2_{0,cov}$ & $\widehat{\tau}_{wLS}$ & $\widehat{\tau}^{res}_{wLS}$ & $\widehat{\tau}^{cov}_{wLS}$ & $\hat R_{0,wLS}^2$ & $\hat R^2_{0, wLS, cov}$ \\ \hline
\multicolumn{12}{l}{Outcome: Earnings} \\
  \hline
NE & 636 & 1.70 & 1.70 & 1.62 & 0.00 & 0.09 & 1.58 & 1.58 & 1.53 & 0.00 & 0.03 \\ 
  LC & 485 & 2.46 & 2.45 & 2.39 & 0.00 & 0.15 & 2.40 & 2.40 & 2.37 & 0.00 & 0.05 \\ 
  HF & 234 & 1.88 & 1.88 & 1.76 & 0.01 & 0.15 & 1.87 & 1.86 & 1.78 & 0.01 & 0.08 \\ 
  IN & 1392 & 1.03 & 1.03 & 0.96 & 0.01 & 0.27 & 1.00 & 0.99 & 0.95 & 0.01 & 0.21 \\ 
  CV & 788 & 1.40 & 1.40 & 1.39 & 0.00 & -0.06 & 1.36 & 1.36 & 1.37 & 0.00 & -0.03 \\ 
  CC & 524 & 2.51 & 2.51 & 2.46 & 0.01 & -0.03 & 2.42 & 2.41 & 2.40 & 0.00 & -0.10 \\ 
  JK & 353 & 2.29 & 2.28 & 2.10 & 0.01 & 0.07 & 2.19 & 2.18 & 2.07 & 0.01 & 0.04 \\ 
  MT &  38 & 6.44 & 6.44 & 9.60 & -0.00 & -9.23 & 4.64 & 4.65 & 7.32 & 0.01 & -5.86 \\ 
  PR & 463 & 2.69 & 2.69 & 2.70 & 0.00 & -0.16 & 2.82 & 2.82 & 2.82 & -0.00 & -0.15 \\ 
  MN & 179 & 4.79 & 4.78 & 4.13 & 0.00 & 0.13 & 3.72 & 3.71 & 3.71 & 0.00 & -0.14 \\ 
  MD & 177 & 2.87 & 2.87 & 2.61 & 0.01 & 0.14 & 2.67 & 2.66 & 2.43 & 0.01 & 0.16 \\ 
  SM & 401 & 2.07 & 2.07 & 2.04 & 0.00 & -0.07 & 2.13 & 2.12 & 2.06 & 0.00 & -0.02 \\ 
  OH &  74 & 3.97 & 3.97 & 4.00 & 0.00 & -0.44 & 3.94 & 3.93 & 3.75 & 0.00 & -0.50 \\ 
  CI & 190 & 3.84 & 3.84 & 3.40 & 0.00 & -0.03 & 3.47 & 3.47 & 3.07 & 0.00 & -0.22 \\ 
  OK &  87 & 4.69 & 4.71 & 4.51 & -0.01 & -0.88 & 4.61 & 4.61 & 4.06 & -0.01 & -0.67 \\ 
  JC &  81 & 7.24 & 7.26 & 8.52 & -0.01 & -0.82 & 6.14 & 6.17 & 6.75 & -0.01 & -0.83 \\ 
  \hline 
    \multicolumn{12}{l}{Outcome: Employment} \\ \hline
   NE & 636 & 0.04 & 0.56 & 0.04 & -352.40 & -0.00 & 0.04 & 0.49 & 0.04 & -248.43 & -0.01 \\ 
  LC & 485 & 0.06 & 0.70 & 0.05 & -220.79 & 0.13 & 0.06 & 0.56 & 0.05 & -193.43 & 0.02 \\ 
  HF & 234 & 0.06 & 0.94 & 0.06 & -260.76 & -0.02 & 0.06 & 0.90 & 0.06 & -282.80 & 0.06 \\ 
  IN & 1392 & 0.02 & 0.34 & 0.02 & -391.67 & -0.04 & 0.02 & 0.32 & 0.02 & -354.59 & -0.05 \\ 
  CV & 788 & 0.03 & 0.42 & 0.03 & -151.95 & 0.02 & 0.03 & 0.38 & 0.03 & -129.68 & 0.03 \\ 
  CC & 524 & 0.06 & 1.00 & 0.06 & -284.99 & -0.21 & 0.06 & 0.89 & 0.06 & -236.05 & -0.18 \\ 
  JK & 353 & 0.10 & 1.10 & 0.08 & -104.99 & -3.17 & 0.09 & 0.92 & 0.08 & -88.67 & -2.13 \\ 
  MT &  38 & 0.13 & 2.42 & 0.12 & --- & --- & 0.13 & 2.49 & 0.13 & --- & --- \\ 
  PR & 463 & 0.06 & 0.93 & 0.06 & -228.05 & -0.05 & 0.07 & 0.80 & 0.07 & -200.67 & -0.06 \\ 
  MN & 179 & 0.13 & 1.57 & 0.13 & -207.13 & -14.37 & 0.12 & 1.61 & 0.12 & -189.16 & -3.35 \\ 
  MD & 177 & 0.09 & 0.93 & 0.08 & -66.00 & -0.14 & 0.08 & 0.81 & 0.08 & -77.66 & -4.7e28 \\ 
  SM & 401 & 0.06 & 0.76 & 0.06 & -95.56 & -0.12 & 0.06 & 0.68 & 0.06 & -84.67 & -0.10 \\ 
  OH &  74 & 0.07 & 1.77 & 0.07 & -1202.77 & -0.56 & 0.08 & 1.60 & 0.08 & -985.02 & -2.4e28 \\ 
  CI & 190 & 0.04 & 1.40 & 0.05 & -1312.75 & -0.47 & 0.05 & 1.41 & 0.05 & -1241.70 & -1.1e28 \\ 
  OK &  87 & 0.21 & 3.24 & 0.16 & -249.20 & -1.29 & 0.17 & 2.44 & 0.16 & -65.65 & -0.28 \\ 
  JC &  81 & 0.16 & 3.20 & 0.17 & -6487.60 & -4.51 & 0.12 & 1.70 & 0.14 & -300.75 & -0.24 \\ 
   \hline
\end{tabular}}
\end{center} 
\caption{Standard error and diagnostic values for post-residualized weighting using proxy outcomes across the 16 experimental sites for two primary outcomes–earnings and employment. Once again, the diagnostics for MT are null when employment is the outcome, because all the units in the control group are unemployed.}
\label{tab:good_fit_proxy} 
\end{table}

\subsubsection{Efficiency Gain} Table~\ref{tab:good_fit_proxy} presents the standard errors of each weighting method, with and without post-residualizing, for each site. Table~\ref{tbl:jtpa_estimator_perf_proxy} presents the average standard error across sites for post-residualized weighting using proxy outcomes, where we restrict our attention to the sites identified by the diagnostic measures for when we expect precision gains.  When using employment as a proxy for earnings, direct residualizing indicates small gains in 13/16 sites, and including $\hat Y_i$ as a covariate indicates gains in just under half of sites. 
The relative improvement in variance is small due to the differences in magnitude between $\hat Y_i$ and $Y_i$.  In particular, we see around a 0.3-0.4\% reduction in variance from performing direct residualizing. However, when including $\hat Y_i$ as a covariate, which accounts for the scaling difference, the improvements are more substantial. In particular, when using $\hat Y_i$ as a covariate in the weighted estimator, there is a 14\% reduction in variance. Using weighted least squares, there is a 9\% reduction in variance from including $\hat Y_i$ as a covariate. The primary takeaway to highlight is that using $\hat Y_i$ as a covariate to perform post-residualized weighting can allow us to leverage proxy outcomes that exist on different scales than the outcome of interest, where we expect greater gains the more closely related the outcome and proxy outcome are. 

For employment, we do not consider direct residualizing because the diagnostic measure did not identify any experimental sites in which directly residualizing would lead to precision gains.  When including $\hat Y_i$ as a covariate the diagnostic indicated 5 sites that indicate gains from post-residualized weighting; among these we see a 5\% reduction in variance when using $\hat Y_i$ as a covariate in the weighted estimator, and a 1\% reduction in variance in the weighted least squares estimator.  Finally, we emphasize that estimating the PATE results in variance inflation relative to the within-sample difference-in-means, as expected.  However, we see that post-residualized weighting can offset some of this loss in precision.

This exercise shows how a proxy outcome can be used for building the residualizing model.  When the two variables are on very different scales, we expect that direct residualizing would not be beneficial, as evidenced here and captured by our diagnostic measures.  Including $\hat Y_i$ as a covariate addresses scaling concerns, although as we see when using earnings as a proxy for employment, does not always allow for gains.  We see that even using proxy outcomes, our diagnostic measures can accurately capture when there is potential for precision gains, and our post-residualized weighting method can lead to precision gains in estimation of the target PATE.

\begin{table}[!ht]
\centering 
\textbf{Summary of Standard Errors across Experimental Sites Subset by Diagnostic, using Proxy Outcomes} \\   \vspace{2mm} 
\resizebox{\textwidth}{!}{%
\begin{tabular}{l|cccc|cccc} \toprule
& \multicolumn{4}{c|}{\underline{Earnings}} & \multicolumn{4}{c}{\underline{Employment}} \\ 
& \begin{tabular}{@{}c@{}}Number\\of Sites\end{tabular} & DiM & Standard& \begin{tabular}{@{}c@{}}Post Resid. \\ Weighting\end{tabular}
& \begin{tabular}{@{}c@{}}Number\\of Sites\end{tabular} & DiM & Standard& \begin{tabular}{@{}c@{}}Post Resid. \\ Weighting\end{tabular}\\\midrule
\hspace{3mm} Weighted  & & & & & &\\
$\qquad$ Direct Residualizing & 13& 1.53 & 2.58 & 2.57 & 0 & -- & -- & --  \\ 
$\qquad$ $\hat Y_i$ as Covariate & 7  & 1.50 & 2.43 & 2.23 & 2 & 2.93 & 4.01 & 4.00\\ 
\hspace{3mm} Weighted Least Squares  & & & & & &\\ 
$\qquad$ Direct Residualizing & 13 & 1.74 & 2.57 & 2.57  & 0 & -- & -- & -- \\ 
$\qquad$ $\hat Y_i$ as Covariate & 6 & 1.37& 1.95 & 1.85 & 3 & 3.65 & 4.68 & 4.63 \\ \midrule
\end{tabular}}
\caption{Summary of standard errors across the 16 experimental sites identified by the diagnostic measures.} 
\label{tbl:jtpa_estimator_perf_proxy}
\end{table}

\newpage
\printbibliography

\end{document}